\documentclass[a4paper,UKenglish,cleveref, autoref, thm-restate]{lmcs}
\pdfoutput=1
\usepackage[utf8]{inputenc}

% LMCS Layouting Macros
\usepackage{lastpage}
\lmcsdoi{21}{4}{5}
\lmcsheading{}{\pageref{LastPage}}{}{}%
{Nov.~04,~2024}{Oct.~08,~2025}{}

\usepackage{amsmath}
\usepackage{amsfonts}
\usepackage{amsthm}
\usepackage{tikz-cd}
\usepackage{stmaryrd}
\usepackage{bussproofs}
\usepackage{color}
\usepackage{mathpartir}
\usepackage{hyperref}
\usepackage{wrapfig}

\newtheorem{convention}[thm]{Convention}

%%newcommands etc%%

\newcommand{\bb}[1]{\mathbb{#1}}

\newcommand{\catset}{\mathbf{Set}}

\newcommand{\grD}{{D^{\kappa}}}
\newcommand{\grDsc}{D^{\mathsf{g}}}
\newcommand{\ciD}{D}
\newcommand{\ciDsetoid}{D_{\sym{sd}}}
\newcommand{\Tsetoid}{T_{\sym{sd}}}
\newcommand{\EM}[1]{\mathcal{C}^{#1}}
\newcommand{\interpretation}[2]{{#1}_{#2}}
\newcommand{\seqinterpretation}[2]{{#1}^{seq}_{#2}}
\newcommand{\parinterpretation}[2]{{#1}^{par}_{#2}}

\renewcommand{\tt}{\mathsf{tt}}
\newcommand{\ff}{\mathsf{ff}}

\newcommand{\distlaw}{\zeta}

\DeclareMathOperator{\nextop}{next}
\DeclareMathOperator{\now}{now}
\DeclareMathOperator{\step}{step}
\DeclareMathOperator{\var}{var}
\DeclareMathOperator{\ar}{ar}

\DeclareMathOperator{\inl}{inl}
\DeclareMathOperator{\inr}{inr}
\DeclareMathOperator{\id}{Id}

\DeclareMathOperator{\binop}{op}
\DeclareMathOperator{\binopOne}{op_1}

\DeclareMathOperator{\binopPrime}{op'}
\DeclareMathOperator{\binopPrimePrime}{op''}

\DeclareMathOperator{\nop}{op} % n-ary operator

\DeclareMathOperator{\lookup}{lookup}
\DeclareMathOperator{\update}{update}
\DeclareMathOperator{\lookupspec}{lookup_{DS}}
\DeclareMathOperator{\updatespec}{update_{DS}}
\DeclareMathOperator{\stepspec}{step_{DS}}

\newcommand{\DistChoiceOp}{\oplus}% ^{\Dist}}
\NewDocumentCommand{\DistChoice}{mmm}{{#2}\DistChoiceOp_{#1}{#3}}
\newcommand{\II}{(0,1)}
\newcommand{\IIc}{[0,1]}

\newcommand{\iso}{\cong}
\newcommand{\defeq}{\mathbin{\overset{\textsf{def}}{=}}}

\newcommand{\later}{\triangleright}

\newcommand\tabs[2]{\lambda (#1\! :\! #2).}
\newcommand\tabsshort[2]{\lambda #1.}

\newcommand\tapp[2][\tickA]{#2\,[#1] }

\newcommand{\tickA}{\alpha}
\newcommand{\tickB}{\beta}
\newcommand\latbind[2]{{\triangleright}\, (#1 \!: \!#2) .}
\newcommand\latbindsmall[2]{{\triangleright}\, (#1 : #2) .}
\newcommand\toksubst[3][\kappa]{\left[#2/#3\right]}

\newcommand{\capp}[2][\kappa]{#2\,[#1]}

\newcommand{\clocktype}{\mathsf{clock}}

\newcommand{\fix}[1][\kappa]{\mathsf{fix}^{#1}}

\newcommand{\Set}{\mathsf{Set}}
\newcommand{\opcat}[1]{{{#1}^{\mathrm{op}}}}

\newcommand{\wbsimgr}[1][]{\sim_{#1}^{\kappa}\,}
\newcommand{\wbsim}[1][]{\sim_{#1}\,}
\newcommand{\Prop}{\mathsf{Prop}}
\newcommand{\N}{\mathbb{N}}
\newcommand{\eq}{\mathsf{eq}}

\newcommand{\divergence}{\mathsf{div}}

% Judgements

\newcommand\hastype[4][]{#2 \vdash_{#1} #3: #4}

\newcommand\wfcxt[2][]{#2 \vdash_{#1}}
\newcommand\istype[3][]{\ensuremath{#2 \vdash_{#1} #3 \, \operatorname{type}}}

\newcommand{\timeless}[1]{{\mathsf{TimeLess}(#1)}}

\newcommand{\subst}[2]{[#1/#2]}

% Infix path equality
\newcommand{\peq}{=}
% Judgemental equality
\newcommand{\jeq}{\equiv}

\newcommand{\equi}{\simeq}

\newcommand\sym[1]{\mathsf{#1}}

\newcommand{\tirrAx}[1][\kappa]{\sym{tirr}^{#1}}
\newcommand{\pfix}[1][\kappa]{\sym{pfix}^{#1}}
\newcommand{\dfix}[1][\kappa]{\sym{dfix}^{#1}}

\newcommand{\Path}[3]{#2 \peq_{#1} #3}

\newcommand{\USet}{\sym{Set}}
\newcommand{\UhSet}{\sym{hSet}}
\newcommand{\UhProp}{\sym{hProp}}

\newcommand{\monadsum}[2]{#1\oplus#2}

\newcommand{\nextstep}[1][\kappa]{\delta^{#1}}

\newcommand{\Pfin}{{\mathsf{P}_{\mathsf{f}}}}
\newcommand{\Dfin}{{\mathsf{D}_{\mathsf{f}}}}

\newcommand{\clockc}{\kappa_0}
\newcommand{\cappc}[1]{\capp[\clockc] #1}

\begin{document}

% affiliations are numbered automatically with a, b, c (see below)
% use the optional argument to indicate the affiliation(s) of each author
% omit the argument if there is only one author, or only one affiliation
\author[R.~E.~M{\o}gelberg]{Rasmus Ejlers M{\o}gelberg\lmcsorcid{0000-0003-0386-4376}}
\author[M.~Zwart]{Maaike Zwart\lmcsorcid{0000-0002-0257-1574}}

% affiliation 1 (automatically numbered a)
\address{IT University of Copenhagen, Denmark}	%optional
% write emails for all authors having that affiliation
\email{mogel@itu.uk, maaike.annebeth@gmail.com}  %optional

\title[What Monads Can and Cannot Do]{What Monads Can and Cannot Do \texorpdfstring{\\}{} With a Few Extra Pages}
\titlecomment{{\lsuper*}This is an extended version of ``What Monads Can and Cannot Do With a Bit of Extra Time'' \cite{MogelbergZwart2024}}
\thanks{This work was funded by the Independent Research Fund Denmark, grant number 2032-00134B}	%optional

\keywords{Delay Monad, Monad Compositions, Distributive Laws, Guarded Recursion, Type Theory} %mandatory; please add comma-separated list of keywords

\begin{abstract}
The delay monad provides a way to introduce general recursion in type theory.
To write programs that use a wide range of computational effects directly in type theory, we need to combine the delay monad with the monads of these effects. Here we present a first systematic study of such combinations.

We study both the coinductive delay monad and its guarded recursive cousin, giving concrete examples of combining these with well-known computational effects. We also provide general theorems stating which algebraic effects distribute over the delay monad, and which do not. Lastly, we salvage some of the impossible cases by considering distributive laws up to weak bisimilarity.
\end{abstract}

\maketitle

\section{Introduction}

Martin L{\"o}f type theory~\cite{MartinLof:84} is a language that can be understood both
as a logic and a programming language. For the logical interpretation it is crucial that all
programs terminate. Still, one would like to reason about programming languages with
general recursion, or even write general recursive programs inside type theory.
One solution to this problem is to encapsulate recursion in a monad, such as
the delay monad $\ciD$. This monad maps an object $X$ to the coinductive solution to
$\ciD X \iso X + \ciD X$. The right inclusion into the sum of the above isomorphism introduces a computation
step, and infinitely many steps correspond to divergence. Capretta~\cite{capretta2005general} showed how $\ciD$
introduces general recursion via an iteration operator of type $(X \to \ciD (X + Y)) \to X \to \ciD Y$. For this
reason, $\ciD$ has been used to model recursion in type theory~\cite{NAD:SIGPLAN:Not:2012,VVMFPS2021,mococa},
and in particular forms part of the basis of interaction trees~\cite{Interaction:trees}.

The delay monad has a guarded recursive variant defined using Nakano's~\cite{Nakano:Modality}
fixed point modality $\later$. Data of type $\later X$ should be thought of as data of type $X$ available only one time step
from now. This modal operator has a unit $\nextop : X \to \later X$ transporting data to the future,
and a fixed point operator $\fix[] : (\later X \to X) \to X$ mapping productive definitions to their
fixed points satisfying $\fix[](f) = f(\nextop(\fix[] (f)))$. Guarded recursion can be modelled in the
topos of trees~\cite{ToT} -- the category $\Set^{\opcat{\omega}}$ of presheaves on the ordered
natural numbers~-- by defining
$(\later X)(0) = 1$ and $(\later X)(n+1) = X(n)$.
The guarded delay monad $\grDsc$ is defined as the free monad on $\later$, i.e., the inductive
(and provably also coinductive) solution to $\grDsc X \iso X + \later(\grDsc X)$.

The two delay monads can be formally related by moving to a multiclock variant of guarded recursion,
in which the modal operator $\later$ is indexed by a clock variable $\kappa$, which can be universally
quantified. Then by defining the guarded delay monad $\grD X$ to be the unique solution to $\grD X \iso
X + \later^\kappa(\grD X)$, the coinductive variant can be encoded~\cite{atkey13icfp} as $\ciD X \defeq \forall\kappa . \grD X$.
In this paper we will work informally in Clocked Cubical Type Theory (CCTT)~\cite{CubicalCloTT},
a type theory in which such encodings of coinductive types can be formalised and proven correct.

Unlike the coinductive variant, the guarded delay monad has a fixed point operator of type
$((X \to \grD Y) \to (X \to \grD Y)) \to (X \to \grD Y)$, defined using $\fix[]$. For the coinductive delay monad, fixed points only exist for continuous
maps, but in the guarded case, continuity is a consequence of a causality property enforced in
types using $\later$. As a consequence, higher order functional programming languages with
recursion can be embedded in type theories like CCTT
by interpreting function spaces as Kleisli exponentials for
$\grD$.
For example, Paviotti et al.~\cite{paviottiPCF}
showed how to model the simply typed lambda calculus with fixed point terms (PCF),
and proved adequacy of the model, all in a type theory with guarded recursion.
These results have since been extended to languages
with recursive types~\cite{Paviotti:FPC:journal} and
(using an impredicative universe) languages with higher-order
store~\cite{sterling-gratzer-birkedal:2022}. This suggests that the guarded
delay monad can be used for programming and reasoning about programs using a wide
range of advanced computational effects directly in type theory. However, a mathematical
theory describing the interaction of the delay monads with other monads is still lacking, even for
basic computational effects.

\subsection*{Combining the Delay Monad With Other Effects}

In this paper, we present a first systematic study of monads combining delay with other effects.
We first show how to combine the delay monad with standard monads known from computational
effects: exceptions, reader, global state, continuation, and the selection monad. Most of these follow standard combinations of effects and non-termination
known from domain theory, but, the algebraic status of these combinations is simpler than in the domain theoretic
case: Whereas the latter can be understood as free monads for order-enriched algebraic theories~\cite{hyland2006discrete}, the combinations with
the guarded recursive delay monad are simply free models of theories in the standard sense, with the caveat that the arity of the step
operation is non-standard.

The rest of the paper is concerned with distributive laws of the form $T \ciD \to \ciD T$, where
$T$ is any monad and $\ciD$ is the delay monad in any of the two forms mentioned above.
Such a distributive law distributes the operations of $T$ over steps, and equips the composite
$\ciD T$ with a monad structure that describes computations whose other effects are only visible
upon termination. This is the natural monad for example in the case of writing to state,
when considering non-determinism and observing must-termination, or for computing data contained in data structures such as trees or lists.

There are two natural ways of distributing an $n$-ary operation $\nop$ over computation steps:
The first is to execute each of the $n$ input computations in sequence
until they have all terminated, the second is to execute the $n$ inputs in parallel, delaying terminated
computations until all inputs have terminated. We show that sequential execution yields a distributive
law for algebraic monads (monads generated by algebraic theories) where all equations
are balanced, i.e., the number of occurrences of each variable on either side is the same.
Trees, lists, and multisets are examples of such monads.

The requirement of balanced equations is indeed necessary. This was observed already by
M{\o}gelberg and Vezzosi~\cite{mogelberg2021two}
who showed that for the finite powerset monad, sequential distribution of steps over the
union operator was not well defined, and parallel
distribution did not yield a distributive law due to miscounting
of steps. Here we strengthen this result to show that no distributive law
is possible for the finite powerset monad over the coinductive delay monad.
At first sight it might seem that the culprit in
this case is the idempotency axiom. However, we show that in some cases
it is possible to distribute
idempotent operations over the coinductive delay monad,
but not over the guarded one (we show this just for commutative operations).

Finally, we show that if one is willing to work up to weak bisimilarity, i.e., equating elements
of the delay monad that only differ by a finite number of computation steps, then
one can construct a distributive law $T\ciD \to \ciD T$ for any monad $T$
generated by an algebraic theory with no drop equations (equations where
a variable appears on one side, but not the other). To make this precise,
we formulate this result as a distributive law of monads on a category of setoids.

\paragraph*{Additional Material}
This paper is an extended version of the CSL 2024 conference paper \emph{What Monads Can and Cannot Do with a Bit of Extra Time} \cite{MogelbergZwart2024}. 
Compared to the conference paper, this version additionally includes:
\begin{itemize}
  \item A discussion of possible combinations of the delay monad with the continuations monad in Section~\ref{sec:examples}.
  \item A more extensive discussion of the distributive law for the delay monad over the selection monad in Section~\ref{sec:examples}, including a comparison to $T$-selection functions.
  \item A short paragraph about a distributive law for the writer monad over the delay monad in Section~\ref{sec:examples}.
  \item A new proposition, Proposition~\ref{prop:nodistprobability}, showing that there is no distributive law of the finite distributions functor over the coinductive delay monad, in Section~\ref{sec:nogo-idemp}.
  \item Detailed proofs of all lemmas, propositions and theorems in Sections~\ref{sec:examples}, \ref{sec:parseq}, \ref{sec:nogo-idemp}, and \ref{sec:weak:bisim}, including the proof of our main no-go theorem (Theorem~\ref{thm:gen:no:go}).
\end{itemize}

\paragraph*{Agda Formalisation}

Some of the results presented in this paper have been formalised in Agda \cite{AgdaFiles},
using Vezzosi's Guarded Cubical library \cite{AgdaCubical}.

\section{Monads and Algebraic Theories}\label{sec:prelim:monads}

In this background section we briefly remind the reader of algebraic theories and free model monads for an algebraic theory. We mention different classes of equations that play a role in our analysis of monad compositions, and we discuss distributive laws for composing monads.

\begin{defi}[Algebraic Theory]
An algebraic theory $A$ consists of a signature $\Sigma_A$ and a set of equations $E_A$. The signature is a set of operation symbols with arities given by natural numbers. The signature $\Sigma_A$ together with a set of variables $X$ inductively defines the set of $A$-terms: Every variable $x : X$ is a term, and for
each operation symbol $\nop$ in $\Sigma_A$, if $\nop$ has arity $n$ and $t_1, \ldots, t_n$ are terms, then $\nop(t_1, \ldots, t_n)$ is a term.
The set of equations contains pairs of terms $(s,t)$ in a finite variable context which are to be considered equal. We write $=$ for the smallest 
congruence relation generated by the pairs in $E_A$.
\end{defi}

\begin{exa}[Monoids]
  The algebraic theory of monoids has a signature consisting of a constant $c$ and a binary operation $*$, satisfying the left and right unital equations: $\forall x : X .\, c * x = x$ and $\forall x : X .\, x * c = x$, and associativity: $\forall x, y, z : X .\, (x * y) * z = x * (y * z) $. Commutative monoids also include the commutativity equation: $\forall x, y : X .\, x * y = y * x$. Idempotent commutative monoids, also known as semilattices, are commutative monads that additionally satisfy the idempotence equation: $\forall x : X.\, x * x = x$.
\end{exa}

\begin{exa}[Convex algebras]
 Convex algebras have a signature consisting of a binary operation $\DistChoice{p}{}{}$ for each $p \in \II$ satisfying:
 \begin{align*}
	  \tag{idempotence}  \DistChoice{p}{x}{x} &\peq x \\
	  \tag{commutativity} \DistChoice{p}{x}{y} &\peq \DistChoice{1-p}{y}{x} \\
	  \tag{associativity}  \DistChoice{q}{\left(\DistChoice{p}{x}{y}\right)}{z}
	  &\peq \DistChoice{pq}{x}{\left(\DistChoice{\frac{q-pq}{1-pq}}{y}{z}\right)}
  \end{align*}
\end{exa}

\begin{defi}[Category of Models]
  A model of an algebraic theory $(\Sigma_A, E_A)$ is a set $X$ together with an interpretation
  $\interpretation{\nop}{X} : X^n \rightarrow X$ of each $n$-ary operation $\nop$ in $\Sigma_A$,
  such that $\interpretation{s}{X} = \interpretation{t}{X}$ for each equation $s = t$ in $E_A$. Here
  $\interpretation{s}{X}$ is the interpretation of $s$ defined inductively using the interpretation of
  operations.

  A homomorphism between two models $(X, \interpretation{(-)}{X})$ and $(Y, \interpretation{(-)}{Y})$ is a morphism $h: X \rightarrow Y$ such that
  $h (\interpretation{\nop}{X}(x_1, \ldots, x_n)) = \interpretation{\nop}{Y}(h(x_1), \ldots, h(x_n))$
  for each $n$-ary operation $\nop$ in $\Sigma_A$ and $x_1, \ldots, x_n : X$.
  The models of an algebraic theory and homomorphisms between them form a category called the \emph{category of algebras} of the algebraic theory, denoted $A$-alg.
\end{defi}

The free model of an algebraic theory $A$ with variables in a set $X$ consists of the set of equivalence
classes of $A$-terms in context $X$.
The functor $F : \catset \rightarrow A$-alg sending each set $X$ to the free model of $A$ on $X$ is left adjoint to the forgetful functor sending each $A$-model to its underlying set. This adjunction induces a monad on $\catset$, called the \emph{free model monad} of the algebraic theory \cite{Linton1966,Lawvere1963,Manes1976}. The category of algebras of $A$ is isomorphic to the Eilenberg-Moore category of this monad.
If a monad $T$ is isomorphic to the free model monad of an algebraic theory, then we say that $T$ \emph{is presented} by that algebraic theory.

\begin{exa}[Boom Hierarchy Monads~\cite{Meertens1986}]
 \label{ex:boom-monads-alg}
  The binary tree monad, the list monad, the multiset monad and the powerset monad are the free model monads of respectively the theories of:
  \begin{itemize}
    \item Magmas, the theory consisting of a constant and a binary operation satisfying the left and right unit equations.
    \item Monoids, which are magmas satisfying the associativity equation.
    \item Commutative monoids.
    \item Idempotent commutative monoids, which are also known as semilattices.
  \end{itemize}
\end{exa}

\begin{exa}
The finite probability distributions monad is the free model monad of the theory of convex algebras \cite{Jacobs2010}.
\end{exa}

The equations of an algebraic theory determine much of the behaviour of its free model monad. For example, linear equations result in monads that always distribute over commutative monads \cite{ManesMulry2007, parlant-thesis}. In this paper, we built upon the ideas of Gautam \cite{gautam1957} and Parlant et al \cite{parlant2020}, and distinguish the following classes of equations:

\begin{defi}\label{def:equation-types}
    Write $\var(t)$ for the set of variables that appear in a term $t$. We say that an equation $s = t$ is:
    \begin{description}
      \item[Linear] if $\var(s) = \var(t)$ and each variable in these sets appears exactly once in both $s$ and $t$. Example: $x * y = y * x$.
      \item[Balanced] if $\var(s) = \var(t)$ and each variable in these sets appears equally many times in $s$ and $t$. Example: $(x * y) * (y * z) = (y * y) * (x * z)$.
      \item[Dup] if there is an $x : \var(s) \cup \var(t)$, such that $x$ appears $\geq 2$ times in $s$ and/or $t$. Example: the balanced equation above, as well as $x * x = x * x * x$ and $x \wedge (x \vee y) = x$.
      \item[Drop] if $\var(s) \neq \var(t)$. Example: $x \wedge (x \vee y) = x$.
    \end{description}
\end{defi}

\begin{rem}
  Notice that these types of equations are not mutually exclusive. An equation can for instance be both dup and drop, such as the absorption equation $x \wedge (x \vee y) = x$.
\end{rem}

\subsection{Distributive Laws}
One way of composing two monads is via a \emph{distributive law} describing the interaction between the two monads~\cite{Beck1969}.

\begin{defi}[Distributive Law]\label{def:distlaw}
Given monads $\langle S, \eta^S, \mu^S \rangle$ and $\langle T, \eta^T, \mu^T \rangle$, a distributive law distributing $S$ over $T$ is a natural transformation $\distlaw: ST \rightarrow TS$ satisfying the following axioms:
\begin{align}
  \distlaw \circ \eta^S T & = T\eta^S & \distlaw \circ S\eta^T  & = \eta^T S \tag{unit axioms} \\
  \distlaw \circ \mu^S T & = T\mu^S \circ \distlaw S \circ S\distlaw & \distlaw \circ S\mu^T  & = \mu^T S \circ T\distlaw \circ \distlaw T \tag{multiplication axioms}
\end{align}
\end{defi}

\begin{exa}[Lists and Multisets]
  Distributive laws are named after the well-known distributivity of multiplication over addition: $a * (b + c) = (a * b) + (a * c)$.
  Many distributive laws follow the same distribution pattern. For example, the list monad distributes over the multiset monad in this way: $\distlaw [\Lbag a \Rbag, \Lbag b, c \Rbag] = \Lbag [a,b], [a,c] \Rbag$.
  However, this is by no means the only way a distributive law can function.
\end{exa}

\begin{thmC}[\cite{Beck1969}]
  Let~$\mathcal{C}$ be a category, and~$\langle S, \eta^S, \mu^S \rangle$ and~$\langle T, \eta^T, \mu^T \rangle$ two monads on~$\mathcal{C}$. If~$\distlaw: ST \rightarrow TS$ is a distributive law, then the functor $TS$ carries a monad structure with unit~$\eta^T\eta^S$ and multiplication~$\mu^T\mu^S \circ T\distlaw S$.
\end{thmC}

We frequently use the following equivalence in our proofs:
\begin{thmC}[\cite{Beck1969}]\label{th:beck:dist-lift}
  Given two monads $\langle S, \eta^S, \mu^S \rangle$ and~$\langle T, \eta^T, \mu^T \rangle$ on a category~$\mathcal{C}$, there is a bijective
  correspondence between distributive laws of type $ST \rightarrow TS$, and liftings of $T$ to the Eilenberg-Moore category $\EM{S}$ of $S$.
  \end{thmC}
  Here, a lifting of $T$ to $\EM{S}$ is an assignment mapping $S$-algebra structures on a set $X$ to $S$-algebra structures on $TX$ such that $\eta^T$ and $\mu^T$ 
  are $S$-algebra homomorphisms, and such that $Tf$ is an $S$-homomorphism whenever $f$ is.

\section{Guarded Recursion and the Delay Monad}\label{sec:CCTT}

In this paper we work informally in Clocked Cubical Type Theory (CCTT)~\cite{CubicalCloTT}.
At present, this is the only known theory combining the features we need: Multiclocked guarded
recursion and quotient types (to express free monads).
Here we remind the reader of the basic principles of CCTT, but we refer to Kristensen et al.~\cite{CubicalCloTT}
for the full details, including a denotational semantics for CCTT.

\subsection{Algebraic Theories in Cubical Type Theory}

CCTT is an extension of Cubical Type Theory (CTT)~\cite{CTT}, which in turn is a version of Homotopy Type Theory (HoTT)~\cite{hottbook}
that gives computational content to the univalence axiom.
In CTT, the identity type of Martin-L{\"o}f type theory is replaced by a path type, which we shall write infix as $t \peq_A u$, often omitting the
type $A$ of $t$ and $u$. We will work informally with $\peq$, using its standard properties such as function extensionality.

A type $A$ is a \emph{homotopy proposition} (or hprop)
in HoTT and CTT, if any two elements of $A$ are equal, and an
\emph{hset} if $x \peq_A y$ is an hprop for all $x,y: A$. Assuming a universe of small types, one
can encode universes $\UhProp$ and $\UhSet$ of homotopy sets and propositions in
the standard way.
The benefit of working with hsets is that there is no higher structure
to consider. In particular, the collection of hsets and maps between these forms a category
in the sense of HoTT~\cite{hottbook}, and so basic category theoretic notions such as functors and
monads on hsets can be formulated in the standard way.

The notion of algebraic theory can also be read directly in CTT this way.
Moreover, the free monads on algebraic theories can be defined using higher inductive types
(HITs). These are types given inductively by constructors for terms as well as for equalities.
For example, the format for HITs used in CCTT~\cite{CubicalCloTT} (adapted
from Cavallo and Harper~\cite{CavalloHarper}) is expressive 
enough, as we show in Appendix~\ref{app:encoding:alg:th}.
We write type equivalence as $A \equi B$. For hsets this just means that there are maps
$f : A \to B$ and $g : B \to A$ that are inverses of each other, as
expressed in CTT using path equality.

\subsection{Multi-Clocked Guarded Recursion}

\begin{figure}
\begin{mathpar}
  \inferrule*
  {\wfcxt{\Gamma}}
  {\wfcxt{\Gamma, \kappa : \clocktype}{}}
  \and
  \inferrule*
  {\kappa : \clocktype \in \Gamma}
  {\wfcxt{\Gamma, \tickA : \kappa}{}}
  \and
  \inferrule*
  {\hastype{\Gamma, \timeless{\Gamma'}}{t}{\latbind{\tickA}{\kappa} A}\\ \wfcxt{\Gamma,\tickB:\kappa,\Gamma'}}
  {\hastype{\Gamma,\tickB: \kappa,\Gamma'}{\tapp[\tickB] t}{A\toksubst{\tickB}{\tickA}}}
  \and
  \inferrule*
  {\hastype{\Gamma,\tickA:\kappa}{t}{A}}
  {\hastype{\Gamma}{\tabs{\tickA}{\kappa} t}{\latbind{\tickA}{\kappa} A}}
  \and
    \inferrule*
  {\hastype{\Gamma,\kappa : \clocktype}{t}{A}}
  {\hastype{\Gamma}{\Lambda\kappa. t}{\forall \kappa . A}}
  \and
  \inferrule*
  {\hastype{\Gamma}{t}{\forall \kappa . A}\\
    \hastype\Gamma{\kappa'}\clocktype}
  {\hastype{\Gamma}{t [\kappa']}{A \subst{\kappa'}{\kappa}}}
   \and
  \inferrule*
  {\hastype{\Gamma}{t}{\later^\kappa A \to A}}
  {\hastype{\Gamma}{\dfix\,t}{\later^\kappa A}}
  \and
  \inferrule*
{\hastype[]{\Gamma}{t}{\later^\kappa A \to A} }
{\hastype{\Gamma}{\pfix[\kappa] \, t}{\latbind{\tickA}{\kappa}{\Path{A}{\tapp{(\dfix t)}}{t(\dfix t)}}}}
\end{mathpar}
\caption{Selected typing rules for Clocked Cubical Type Theory \cite{CubicalCloTT}.
  The telescope $\timeless{\Gamma'}$ is
  composed of the timeless assumptions in $\Gamma$, i.e. interval variables
  and faces (as in Cubical Type Theory) as well as clock variables.}
\label{fig:later:typing}
\end{figure}

CCTT extends CTT with multi-clock guarded recursion. The central component in this is a modal type operator $\later$ indexed by clocks
$\kappa$, used to classify data that is delayed by one time step on clock $\kappa$. The most important typing rules of CCTT are collected in
Figure~\ref {fig:later:typing}. Clocks are introduced as special assumptions $\kappa : \clocktype$ in a context, and can be abstracted and applied
to terms of the type $\forall \kappa . A$ which behaves much like a $\Pi$-type for clocks. Like function extensionality, extensionality for $\forall\kappa . A$ also holds in CCTT.

The rules for $\later$ also resemble those of $\Pi$-types: Introduction is by abstracting special assumptions $\tickA : \kappa$
called \emph{ticks} on the clock $\kappa$. Since ticks can appear in terms, the
modal type $\latbind\tickA\kappa A$ binds $\tickA$ in $A$, just like a $\Pi$-type binds a variable. We write $\later^\kappa A$ for
$\latbind\tickA\kappa A$ when $\tickA$ does not appear in $A$. The introduction rule for $\later$ can be
read as stating that if $t$ has type $A$ after the tick $\tickA$, then $\tabs{\tickA}{\kappa} t$ has type $\latbind{\tickA}{\kappa} A$ now.

The modality $\later$ is eliminated by applying a term to a tick. Note that the term $t$ applied to the tick $\beta$ cannot already contain $\tickB$
freely. This restriction prevents $t$ from being applied twice to the same tick, which would construct terms of type $\later^\kappa \later^\kappa A \to \later^\kappa A$, collapsing two steps into one. Moreover, $t$ cannot contain any variables nor other ticks occurring in the context after $\tickB$, only \emph{timeless}
assumptions, i.e., clocks, interval assumptions and faces. One application of timeless assumptions is to type the extensionality principle for $\later$:
\begin{equation} \label{eq:later:ext}
  (t \peq_{\latbindsmall{\tickA}{\kappa} A} u) \equi \latbind{\tickA}{\kappa} (\tapp t \peq_A \tapp u).
\end{equation}
For all explicit applications of terms to ticks in this paper, the term will not use timeless assumptions.
The usual $\eta$ and $\beta$ laws hold for tick abstraction and application.

As an applicative functor, $\later$ can even be given a dependent applicative action of type
\[
 \Pi(f : \later^\kappa(\Pi (x : A) . B(x)) . \Pi(y : \later^\kappa A) . \latbind\tickA\kappa B(\tapp y).
\]

Ticks are named in CCTT for reasons of normalisation~\cite{bahr2017clocks}, but are essentially identical. This is expressed in type theory as the \emph{tick irrelevance principle}:
\begin{equation} \label{eq:tirrAx}
  \tirrAx : \Pi(x : \later^\kappa A) . \,\latbind{\tickA}{\kappa}\,  \latbind{\tickB}{\kappa} (\Path{A}{\tapp{x}}{\tapp[\tickB] x}).
\end{equation}
The term $\tirrAx$ is defined in CCTT using special combinators on ticks, allowing for computational content to $\tirrAx$. This means that the rule
for tick application is more general than the one given in Figure~\ref {fig:later:typing}. However, we will not need this further generality for
anything apart from $\tirrAx$, which we use directly.

Finally, CCTT has a fixed point operator $\dfix[]$ which unfolds up to path equality as witnessed by $\pfix[]$. Using these, one can define
$\fix : (\later^\kappa A \to A) \to A$ as $\fix(t) = t(\dfix t)$ and prove $\fix(t) \peq t(\nextop^\kappa( \fix(t)))$ where
$\nextop^\kappa \defeq (\lambda (x : A). \tabs\tickA\kappa x) : A \to \later^\kappa A$.
Note that this relies on being able to introduce variables that appear before a tick in a context. This is not the case
in all Fitch-style modal type theories~\cite{clouston2018fitch,gratzer2020multimodal}.

\subsection{Guarded Recursive Types}

A guarded recursive type is a recursive type in which the recursive occurrences of the type are all guarded by a $\later$. These
can be encoded up to equivalence of types using fixed points of maps on the universe. Our primary example is the guarded recursive delay monad $\grD$
defined to map an $X$ to the recursive type
\begin{align*}
 \grD X \equi X + \later^\kappa(\grD X).
\end{align*}
We write 
\begin{align}\label{eq:now:step:kappa}
\now^\kappa  & : X \to \grD X & \step^\kappa & : \later^\kappa (\grD X) \to \grD X  
\end{align}
for the two maps given by inclusion and the equivalence above.

Since $\later^\kappa$ preserves the property of being an hset, one can prove by guarded recursion
that $\grD X$ is an hset whenever $X$ is. $\grD$
can be seen as a free construction in the following sense.

\begin{defi} \label{def:delay:algebra}
 A \emph{delay algebra} on the clock $\kappa$ is an hset $X$ together with a map
 $\later^\kappa X \to X$.
\end{defi}

Given an hset $X$, the hset $\grD X$ carries a delay algebra structure. It is the free delay algebra
in the sense that given any other delay algebra $(Y, \xi)$, and a map $f : X \to Y$,
there is a unique homomorphism $\overline f : \grD X \to Y$ extending $f$ along $\now$,
defined by the clause
\begin{align} \label{eq:extension:case:step}
 \overline f(\step^\kappa (x)) & = \xi(\tabs \tickA\kappa \overline f (\tapp x)).
\end{align}
This is a recursive definition that can be encoded as a fixed point of a map
$h : \later^\kappa(\grD X \to Y) \to (\grD X \to Y)$ defined using the clause
 $h(g)(\step^\kappa (x)) = \xi(\tabs \tickA\kappa (\tapp g) (\tapp x))$.
In this paper we use the simpler notation of (\ref{eq:extension:case:step}) for such definitions
rather than the explicit use of $\fix$.

We sketch the proof that $\overline f$ is the unique homomorphism extending $f$,
to illustrate the use of $\fix$ for proofs. Suppose $g$ is another such extension. To use
guarded recursion, assume that $\later^\kappa( g\peq \overline f)$. We show
that $g(\step^\kappa(x)) = \xi(\tabs \tickA\kappa \overline f (\tapp x))$. Since $g$ is a homomorphism:
 $g(\step^\kappa(x)) = \xi(\tabs \tickA\kappa g (\tapp x))$.
So by extensionality for $\later$ (\ref{eq:later:ext}) it suffices to show that
  $\latbind\tickA\kappa {(g (\tapp x) \peq \overline f (\tapp x))}$,
which follows from the guarded recursion hypothesis.

Tick irrelevance implies that $\grD$ is a commutative monad in the sense of Kock \cite{kock1970}.

\subsection{Encoding Coinductive Types}

Coinductive types can be encoded using a combination of guarded recursive types and
quantification over clocks. This was first observed by Atkey and McBride~\cite{atkey13icfp}.
We recall a special case of a more general theorem for this in
CCTT~\cite{CubicalCloTT}. First a definition.

\begin{defi}
 A functor $F : \UhSet \to \UhSet$ \emph{commutes with clock quantification}, if the canonical map
   $F(\forall \kappa . (\capp X)) \to \forall\kappa .F(\capp X)$
 is an equivalence for all $X : \forall\kappa. \UhSet$. An hset $X$ is \emph{clock irrelevant} if the
 constant functor to $X$ commutes with clock quantification, i.e. if the canonical map
 $X \to \forall\kappa . X$ is an equivalence.
\end{defi}

Note that functors commuting with clock quantification map clock irrelevant types to clock irrelevant types.

\begin{thmC}[\cite{CubicalCloTT}] \label{thm:encoding:coinduct}
 Let $F$ be an endofunctor on the category of hsets commuting with clock quantification, and
 let $\nu^\kappa F$ be the guarded recursive type satisfying
  $F(\later^\kappa(\nu^\kappa F)) \equi \nu^\kappa F$,
 then $\nu F \defeq \forall\kappa . \nu^\kappa F$ carries a final coalgebra structure for $F$.
\end{thmC}

In order to apply Theorem~\ref{thm:encoding:coinduct}, of course, one needs a large collection
of functors $F$ commuting with clock quantification. Fortunately, the collection of such functors
is closed under almost all type constructors, including finite sum and product, $\Pi$ and
$\Sigma$ types, $\later$, $\forall\kappa$, and guarded recursive types~\cite[Lemma~4.2]{CubicalCloTT}.
Clock irrelevant types are likewise closed under the same type constructors, and path equality. The only exception
to clock irrelevance is the universe type.

For example, if $X$ is clock irrelevant, then $F(Y) = X + Y$ commutes with clock quantification,
and so $\ciD X \defeq \forall\kappa . \grD X$ is the coinductive
solution to $\ciD X \equi X + \ciD X$. We write 
\begin{align*}
 \now & : X \to \ciD X & \step & : \ciD X \to \ciD X
\end{align*}
Note the distinction in notation from the corresponding inclusions for $\grD$ in (\ref{eq:now:step:kappa}).

CCTT moreover has a principle
of \emph{induction under clocks} allowing one to prove that many HITs are clock
irrelevant, including the empty type, booleans and natural numbers. Moreover, as the 
next proposition states, also free model monads commute with clock quantification.
The proof can be found in Appendix~\ref{app:encoding:alg:th}.

\begin{prop} \label{prop:free:monad:cirr}
 Let $A = (\Sigma_A, E_A)$ be an equational theory such that $\Sigma_A$ and $E_A$ are clock
 irrelevant. Then the free model monad $T$ commutes with clock quantification.
 In particular, $T(X)$ is clock irrelevant for all clock irrelevant $X$.
\end{prop}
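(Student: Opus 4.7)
The plan is to reduce Proposition~\ref{prop:free:monad:cirr} to the closure properties of clock-commuting functors already invoked in the paper, using the HIT encoding of $T$ from Appendix~\ref{app:encoding:alg:th}. In that encoding, $T(X)$ is a higher inductive type with point constructors $\eta : X \to T(X)$ for variables and $[\nop] : T(X)^{\ar(\nop)} \to T(X)$ for each $\nop \in \Sigma_A$, path constructors witnessing each equation of $E_A$ on every valuation, and an hset truncation. The canonical comparison map $c_X : T(\forall\kappa. \capp X) \to \forall\kappa. T(\capp X)$ is $c_X(t) \defeq \Lambda\kappa. T(\lambda y. \capp y)(t)$, and the task is to exhibit an inverse.

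First I would treat the raw term algebra without equations. The polynomial functor $P_X(Y) \defeq X + \coprod_{\nop \in \Sigma_A} Y^{\ar(\nop)}$ is built from finite sums and products, coproducts indexed by the clock-irrelevant set $\Sigma_A$, and the clock-varying argument $X$. Each of these constructions preserves commutation with clock quantification by the closure list cited immediately after Theorem~\ref{thm:encoding:coinduct}. Inductive types built from such functors again commute with clock quantification, so the free-term functor $\mathrm{Tm}(X)$ obtained as the initial $P_X$-algebra satisfies $\mathrm{Tm}(\forall\kappa. \capp X) \equi \forall\kappa. \mathrm{Tm}(\capp X)$ via the canonical map.

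Next I would add back the equations and the truncation. Clock irrelevance of $E_A$ means each equation and each valuation give rise to a clock-uniform family of path constructors, so the set-quotient of $\mathrm{Tm}$ by the congruence generated by $E_A$ also commutes with clock quantification: a clock-indexed family of equivalence classes lifts to an equivalence class of clock-indexed terms because the witnessing equations are themselves clock irrelevant. The final truncation layer is unproblematic because $\forall\kappa$ preserves hsets. Composing the three layers gives the sought equivalence $T(\forall\kappa. \capp X) \equi \forall\kappa. T(\capp X)$. The ``in particular'' clause then follows from the observation, stated just after the definition of clock irrelevance, that functors commuting with clock quantification send clock irrelevant types to clock irrelevant types.

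The main obstacle is the HIT bookkeeping in the second and third steps. Concretely, one must construct the inverse on representative terms and verify that it is well-defined on the quotient by $E_A$ and on the 0-truncation; this requires unpacking clock irrelevance of $E_A$ into genuine path equalities between clock-quantified representatives, and coherently matching each path constructor under $\forall\kappa$. It is this detailed coherence, rather than any conceptual subtlety, that is deferred to the appendix.
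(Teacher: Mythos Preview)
Your decomposition into (raw term algebra) $\to$ (quotient by equations) $\to$ (truncation) is not what the paper does, and the middle step has a real gap. You write that ``a clock-indexed family of equivalence classes lifts to an equivalence class of clock-indexed terms because the witnessing equations are themselves clock irrelevant,'' but given $x : \forall\kappa.\, (\mathrm{Tm}(\capp X)/{\sim})$, producing a representative in $\forall\kappa.\, \mathrm{Tm}(\capp X)$ requires choosing representatives coherently across all $\kappa$. Clock irrelevance of $E_A$ only tells you that the \emph{relation} is uniform in $\kappa$; it does not hand you a section of the quotient map uniformly in $\kappa$, and that is exactly the kind of step that fails constructively without either a choice principle or a dedicated elimination rule. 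The same issue already lurks in your first step: the closure list after Theorem~\ref{thm:encoding:coinduct} covers finite sums and products, $\Sigma$, $\Pi$, $\later$, $\forall\kappa$, and guarded recursive types, but not W-types or general inductive types, so commutation of the initial-algebra construction with $\forall\kappa$ is not a formality either. (Incidentally, the HIT in Appendix~\ref{app:encoding:alg:th} is not the one you describe: it has a single $\conTerm$ constructor taking a whole syntactic term together with a valuation, plus a flattening path $\conFlat$, precisely because the per-operation encoding you assume would need recursion in the boundary of $\conEq$, which the CCTT HIT format disallows.)

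The paper's proof sidesteps both issues by working directly with the HIT and invoking the principle of \emph{induction under clocks} (spelled out in the appendix), a CCTT-specific elimination principle for $\forall\kappa.\, T(\capp X)$ that lets one define the inverse $g$ by cases on the outermost constructor of $\capp x$ simultaneously for all $\kappa$, and then verify $f\circ g = \id$ by the same principle. That principle is the missing ingredient in your sketch; what you call ``HIT bookkeeping'' is in fact the entire content of the argument.
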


The collection of clock irrelevant propositions can be shown to be closed under standard logical
connectives. Alternatively, one can assume a global clock constant $\clockc$, which then can
be used to prove that all propositions are clock irrelevant.

\begin{convention}
 In the remainder of this paper, the word \emph{set} will refer to a clock-irrelevant hset,
 and the word \emph{proposition} will refer to clock-irrelevant homotopy propositions. We
 will write $\USet$ and $\Prop$ for the universes of these. Similarly, whenever we mention functors these are assumed to commute with clock quantification.
\end{convention}

\section{Specific Combinations with Delay}\label{sec:examples}

In this section we look at some specific examples of monads, and see how they combine with the delay monads. In particular, we will look at the exception, reader, state, continuation, and selection monads. Intuitively, these monads model (parts of) the process: read input - compute - do something with the output. For instance, the state monad reads a state, then both updates the current state and gives an output. Combining the state monad with the delay monads allows us to model the fact that the computation in between reading the input and giving the output takes time, and might not terminate.

The examples we give follow the same pattern as the adaptation of these monads to domain theory: we insert a delay monad where one would use lifting
in the domain theoretic case. However, we also show that the algebraic status of these monads is much simpler in the guarded recursive case than
in the domain theoretic one: they can simply be understood as being generated by algebraic theories where one operation ($\step$) has a non-standard
arity. In the domain theoretic case, the algebraic description is in terms of enriched Lawvere theories~\cite{hyland2006discrete}.
We give no algebraic description of the combinations with the coinductive delay monads, because this does not by itself have an algebraic description.

First note that for combinations with delay via a distributive law, it is enough to find a distributive law for the guarded recursive version $\grD$.
\begin{lem} \label{lem:dist:grD:to:ciD}
 Let $T$ be a monad. A distributive law $\distlaw_X : \forall\kappa . T(\grD(X)) \to \grD (T(X))$ for the guarded delay monad induces a distributive law
 $T\ciD \to \ciD T$ for the coinductive delay monad. Similarly, if $T^\kappa$ is a family of monads indexed by $\kappa$ then $T(X) = \forall\kappa . T^\kappa(X)$ carries a monad structure.
\end{lem}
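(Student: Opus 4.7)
The plan is to construct the coinductive distributive law by composing the clock commutation isomorphism for $T$ with the clock-abstracted family of guarded distributive laws, and then to verify the axioms pointwise in $\kappa$. For each clock $\kappa$, write $\distlaw^\kappa_X : T(\grD X) \to \grD(T X)$ for the instance of the given $\distlaw_X$ at that clock. I would then define
\begin{align*}
\distlaw^{\ciD}_X \;\defeq\; T(\ciD X) \;=\; T(\forall\kappa.\,\grD X) \;\equi\; \forall\kappa.\, T(\grD X) \;\xrightarrow{\Lambda\kappa.\, \distlaw^\kappa_X}\; \forall\kappa.\, \grD(T X) \;=\; \ciD(T X),
\end{align*}
where the middle equivalence uses the Convention in Section~\ref{sec:CCTT} that $T$ commutes with clock quantification. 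Naturality of $\distlaw^{\ciD}$ in $X$ follows from naturality of each $\distlaw^\kappa$ together with naturality of the clock-commutation isomorphism.

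To check the four axioms of Definition~\ref{def:distlaw}, I would rely on three observations: $\now : X \to \ciD X$ is the clock abstraction $\Lambda\kappa.\, \now^\kappa$; the multiplication $\mu^{\ciD}$ restricts clock-wise to $\mu^{\grD}$ via the clock-commutation equivalence for $\grD$; and $\eta^T, \mu^T$ commute with clock abstraction/application through the clock-commutation isomorphism for $T$. Granted these, each axiom for $\distlaw^{\ciD}$ reduces, after applying $[\kappa]$ on both sides, to the corresponding axiom for $\distlaw^\kappa$, which holds by assumption. Extensionality for $\forall\kappa$ then lifts the clockwise identities back to identities on $\ciD$-typed data.

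For the second claim, I would define
\[
\eta^T_X(x) \;\defeq\; \Lambda\kappa.\, \eta^{T^\kappa}_X(x), \qquad \mu^T_X(t) \;\defeq\; \Lambda\kappa.\, \mu^{T^\kappa}_X\bigl(T^\kappa(\lambda s.\, s[\kappa])(t[\kappa])\bigr),
\]
where $\lambda s.\, s[\kappa]$ instantiates each inner $\forall\kappa'$-quantified element at the outer clock $\kappa$. The three monad laws then follow pointwise in $\kappa$ from those for each $T^\kappa$ via $\forall\kappa$-extensionality, since the clock-abstracted compositions reduce at each $\kappa$ to the corresponding monad-law diagrams for $T^\kappa$.

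The main obstacle I expect is purely bookkeeping rather than a conceptual issue: every identity on $\ciD$-typed terms must be unwrapped through the clock-commutation isomorphism so that the guarded axioms can be applied at each clock, and one must check that the natural transformation components $T\eta^{\ciD}$, $\eta^T\ciD$, $T\mu^{\ciD}$, and $\mu^T\ciD$ factor coherently through their guarded counterparts. Given that $T$ commutes with $\forall\kappa$ naturally, no deeper issue arises and the verifications proceed uniformly.
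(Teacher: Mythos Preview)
Your proposal is correct and follows essentially the same approach as the paper: both construct $\distlaw^{\ciD}$ as the canonical map $T(\forall\kappa.\grD X)\to\forall\kappa.T(\grD X)$ followed by $\Lambda\kappa.\distlaw^\kappa$, and both verify the distributive-law axioms and the monad laws for $\forall\kappa.T^\kappa$ by reducing them clockwise to the assumed guarded axioms. The paper spells out one unit axiom and the associativity of $\mu^T$ as explicit commuting diagrams (decomposed into naturality squares and the guarded axioms), whereas you describe the same reductions more abstractly via $\forall\kappa$-extensionality; conceptually there is no difference.
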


\begin{proof}
 The distributive law can be constructed as the composite
 \[
 \begin{tikzcd}
 T(\forall\kappa . \grD( X)) \ar{r}{} & (\forall\kappa . T(\grD(X))) \ar{r}{} & \forall\kappa. \grD(T(X)),
 \end{tikzcd}
 \]
 of the canonical map (mapping $t$ to $\Lambda\kappa . T(\mathsf{ev}_\kappa)\,t$, where $\mathsf{ev}_\kappa \,x =\capp x $) and $\forall\kappa. \capp\distlaw$. 
  
 We show the proof for one of the unit axioms, the other unit axiom and the multiplication axioms follow similarly.
 \[
 \begin{tikzcd}
  & \forall\kappa . \grD( X)  \ar[ld, "\eta^T"']\ar{d}{\forall\kappa . \eta^T}\ar{rd}{\forall\kappa . \grD( \eta^T)} &  \\
  T(\forall\kappa . \grD( X)) \ar{r} & (\forall\kappa . T(\grD(X))) \ar{r}{\forall\kappa. \capp\distlaw} & \forall\kappa. \grD(T(X))
 \end{tikzcd}
 \]
 Here, the left triangle commutes due to naturality of $\eta^T$, and the right triangle commutes because $\distlaw$ is a distributive law. 
 
 For the second statement, define the multiplication as the composite
 \[\forall\kappa . T^\kappa(\forall\kappa' . T^{\kappa'}(X)) \to (\forall\kappa . T^\kappa(T^\kappa(X))) \to \forall\kappa . T^\kappa(X).\]
 We show that this satisfies the multiplication axiom of a monad, that is, we show that the outer square below commutes. This follows from the fact that all four inner squares commute: The top left square commutes because both paths set $\kappa'$ and $\kappa''$ to $\kappa$. The top right square commutes by naturality of $\mu^\kappa$. The lower left square commutes because applying the parameterised multiplication $\mu^{\kappa'}$ and then instantiating it for $\kappa' = \kappa$ is the same as first choosing $\kappa' = \kappa$ and then applying $\mu^\kappa$. And finally, the square in the lower right corner is the multiplication axiom for $T^\kappa$, which holds because we assume that each $T^\kappa$ is a monad. 
 \begin{center}
   \begin{tikzcd}
      \forall\kappa . T^\kappa(\forall\kappa' . T^{\kappa'}( \forall\kappa'' . T^{\kappa''}(X) )) \ar[d] \ar[r] & \forall\kappa . T^\kappa(T^\kappa( \forall\kappa'' . T^{\kappa''}(X))) \ar[r, "\forall\kappa . \mu^\kappa"]\ar[d] & \forall\kappa . T^\kappa( \forall\kappa'' . T^{\kappa''}(X)) \ar[d] \\
      \forall\kappa . T^\kappa(\forall\kappa' . T^{\kappa'}(T^{\kappa'}(X)) ) \ar[d, "T(\forall\kappa'. \mu^{\kappa'})"]\ar[r] &
      \forall\kappa . T^\kappa(T^{\kappa}(T^{\kappa}(X))) \ar[r, "\forall\kappa . \mu^\kappa"]\ar[d, "T \mu^\kappa" ]
       & \forall\kappa . T^\kappa(T^\kappa(X)) \ar[d, "\forall\kappa . \mu^\kappa"] \\
      \forall\kappa . T^\kappa(\forall\kappa' . T^{\kappa'}(X)) \ar[r] &  \forall\kappa . T^\kappa(T^\kappa(X)) \ar[r, "\forall\kappa . \mu^\kappa"] & \forall\kappa . T^\kappa(X)
   \end{tikzcd}
 \end{center}
 The proof showing that $T(X) = \forall\kappa . T^\kappa(X)$ also satisfies the unit axioms of a monad is similar.
\end{proof}

\paragraph*{Exceptions}
The first monad we consider is the exception monad. For a set of exceptions $E$, the exception monad is given by the functor $(-+E)$, with obvious unit and multiplication. The exception monad is the free model monad of the algebraic theory consisting of a signature with a constant $e$ for each exception in $E$, and no equations.

It is well known that the exception monad distributes over any monad, and therefore we have a distributive law
$\distlaw : (\grD(-)+E)  \rightarrow \grD (-+E)$.
The resulting composite monad $\grD (-+E)$ is the free model monad of the theory consisting of constants $e : E$ and a $\step$-operator forming a delay algebra, with no additional equations.

\paragraph*{Reading}
The reader monad $(-)^R$ is presented by the algebraic theory consisting of a single operation $\lookup : X^R \rightarrow X$, satisfying the equations
\begin{align*}
\forall x : X . \, & \lookup (\lambda r. x) = x &
\forall g : (X^R)^R . \, & \lookup (\lookup \circ g) = \lookup (\lambda s . g \, s\, s).
\end{align*}

To combine the reader monad with the delay monad, we define a distributive law $\grD R \rightarrow R\grD$ by the clauses
\begin{align*}
\distlaw_{R\grD} (\now^\kappa f) & = \lambda r. \now^\kappa (f r) \\
\distlaw_{R\grD} (\step^\kappa d) & = \lambda r. \step^\kappa (\tabs{\tickA}{\kappa} (\distlaw_{R\grD} (\tapp[\tickA]{d})) r),
\end{align*}
where $f : X^R$ and $d : \later^\kappa (X^R)$. The resulting composite monad $R\grD$ is the free model monad of the theory consisting of $\lookup$ and $\step$ satisfying the above equations for $\lookup$, and
\begin{align} \label{eq:dist:lookup:step}
\forall d : \later^\kappa (X^R) . \, \step^\kappa (\tabs{\tickA}{\kappa} \lookup (\tapp[\tickA]{d})) &  = \lookup (\lambda r . \step^\kappa (\tabs{\tickA}{\kappa} \tapp[\tickA]{d} \,r)).
\end{align}
Diagrammatically:
\begin{center}
  \begin{tikzcd}
    \later^\kappa (X^R) \ar{d}{\later^\kappa (\lookup)} \ar{r}{} & (\later^\kappa X)^R \ar{r}{{\step{^\kappa}}^R} & X^R \ar{d}{\lookup} \\
    \later^\kappa X \ar{rr}{\step^\kappa} & & X 
  \end{tikzcd}
\end{center}

\paragraph*{Writing}
For a monoid $M$, the writer monad forms the cartesian product $M \times (-)$, again with obvious unit and multiplication. Its algebras are sets with a left $M$-action. The writer monad distributes over the delay monad via a distributive law of type $(M\times\grD(-)) \rightarrow \grD(M\times(-))$:
\begin{align*}
\distlaw_{\grD W} (m , \now^\kappa x) & = \now^\kappa (m , x) \\
\distlaw_{\grD W} (m , \step^\kappa d) & = \step^\kappa (\tabs{\tickA}{\kappa} (m , \distlaw (\tapp[\tickA]{d}))).
\end{align*}
This has the effect of delaying all output until the computation is finished. 

There is also a distributive law combining the reader and writer monads, of type $(M\times(-)^R) \rightarrow (M\times(-))^R$, given by:
\begin{align*}
\distlaw_{RW} (m , f) & = \lambda r. (m , (f r)).
\end{align*}
When $M = R$, this functor is the same as that of the state monad. However, it has a different monad structure: The unit of this monad is the composite of the units of the reader and writer monads, sending an $x : X$ to $\lambda r . (\varepsilon , x)$, where $\varepsilon$ is the unit of the monoid $M$. Contrast this with the unit of the state monad, which sends an $x : X$ to $\lambda r . (r , x)$. The former changes the incoming state $r$ to some special state $\varepsilon$, whereas the latter leaves the incoming state $r$ unchanged. 

Still, it is interesting to note that the three distributive laws $\distlaw_{R\grD}, \distlaw_{\grD W}$, and $\distlaw_{RW}$ satisfy the Yang-Baxter equation, meaning that they give rise to a monad structure on the functor $(\grD(M \times (-)))^R$ \cite{Cheng2011}. This places the delay monad in between the reading and writing steps, which is exactly what we will do in our combination of the delay monad with the actual state monad.

\paragraph*{Global State}
Plotkin and Power~\cite{plotkin2002} show that the global state monad $(S \times -)^S$ can be described algebraically by two operations:
$\lookup : \,  X^S \rightarrow X$ and $\update : \,  X \rightarrow X^S$, satisfying four equations:
\begin{align*}
 &\forall f : (X^S)^S .\, \lookup (\lambda s . \lookup (f s)) = \lookup (\lambda s . f s s) \\
 &\forall x : X .\,  \lookup (\update x) = x \\
 &\forall g : (X^S) .\,   \forall s : S .\, \update (\lookup g) s = \update (g s) s \\
 &\forall x : X .\, \forall s t : S .\, \update ((\update x) s) t = (\update x) s
\end{align*}
They call the category of such algebras \emph{GS-algebras}.

The natural combination of global state and $\grD$ is $(\grD(S \times -))^S$, describing computations whose steps occur between reading the initial state and writing back the updated state. This combination does not arise via a distributive law, but rather via the well-known monad transformer for the state monad \cite{benton2000}.
 
To describe this monad algebraically, define a GSD-algebra to be a GS-algebra that also carries a delay algebra structure, which interacts with the GS-algebra structure as follows:
\begin{align*}\label{eq:lookup-update-step-interact}
\forall d : \later^\kappa (X^S) . \,  & \step^\kappa (\tabs{\tickA}{\kappa} \lookup (\tapp[\tickA]{d}))   = \lookup (\lambda s . \step^\kappa (\tabs{\tickA}{\kappa} \tapp[\tickA]{d} \,s)) \\
\forall x : \later^\kappa X . \,  & \lambda s . \update (\step^\kappa x) s
 = \lambda s . \step^\kappa (\tabs{\tickA}{\kappa} \update (\tapp[\tickA]{x}) s).
\end{align*}
Diagrammatically:
\begin{center}
  \begin{tikzcd}
    \later^\kappa (X^S) \ar{d}{\later^\kappa (\lookup)} \ar{r}{} & (\later^\kappa X)^S \ar{r}{{\step^\kappa}^S} & X^S \ar{d}{\lookup}  &
        \later^\kappa X \ar{rr}{\later^\kappa (\update)}\ar{d}{\step^\kappa} && \later^\kappa (X^S)\ar{r}{} & (\later^\kappa X)^S \ar{d}{{\step^\kappa}^S} \\
    \later^\kappa X \ar{rr}{\step^\kappa} & & X &
     X \ar{rrr}{\update} & & & X^S
  \end{tikzcd}
\end{center}

\begin{thm} \label{thm:state:free}
The monad $(\grD(S \times -))^S$ is the free model monad of the theory of GSD-algebras.
\end{thm}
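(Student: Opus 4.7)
The plan is to directly verify the universal property of the free GSD-algebra construction. That is, I would show $(\grD(S\times X))^S$ carries a GSD-algebra structure, with an appropriate unit $\eta_X : X \to (\grD(S\times X))^S$, such that any function $f : X \to A$ into a GSD-algebra $A$ extends uniquely to a GSD-homomorphism $\overline{f}$. The GSD-algebra structure on $(\grD(S\times X))^S$ is the evident one: $\lookup(g)(s) = g(s)(s)$ and $\update(d)(s')(s) = d(s')$ implement the standard free GS-structure pointwise, while $\step^\kappa(d)(s) = \step^\kappa_{\grD}(\tabs{\tickA}{\kappa}\tapp[\tickA]{d}(s))$ lifts the delay structure through the reader. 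The unit is $\eta_X(x) = \lambda s.\,\now^\kappa(s,x)$. The four GS-axioms, the delay-algebra axiom, and the two interaction axioms all reduce to pointwise equalities in $\grD(S\times X)$ and are verified by unfolding the definitions of $\lookup$, $\update$ and $\step^\kappa$.

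For the universal property, given a GSD-algebra $(A, \lookup_A, \update_A, \step^\kappa_A)$ and $f : X \to A$, I first define a map $h : \grD(S \times X) \to A$ by guarded recursion via the clauses
\begin{align*}
  h(\now^\kappa(s',x)) &= \update_A(f(x))(s'), \\
  h(\step^\kappa(d)) &= \step^\kappa_A(\tabs{\tickA}{\kappa} h(\tapp[\tickA]{d})),
\end{align*}
and then set $\overline{f}(g) = \lookup_A(\lambda s.\,h(g(s)))$. The equation $\overline{f} \circ \eta_X = f$ follows from the $\lookup$--$\update$ axiom $\lookup(\update\,x) = x$ in $A$. For homomorphism: preservation of $\update$ relies on the $\update$-absorption equation $\update((\update\,x)s)t = (\update\,x)s$ in $A$; preservation of $\lookup$ uses the two $\lookup$ axioms; and preservation of $\step^\kappa$ uses the $\step$--$\lookup$ interaction axiom on $A$ together with the guarded clause for $h$. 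Uniqueness of $\overline{f}$ is proved by guarded recursion: assuming $\later^\kappa(k \peq \overline{f})$ for another homomorphism $k$ extending $f$, one reduces to showing $k(g)(s) = h(g(s))$ for every $s$, which by case analysis on $g(s)$ reduces to the two clauses of $h$, using respectively that $k$ is an $\update$-homomorphism extending $f$ and that $k$ preserves $\step^\kappa$ together with the guarded recursion hypothesis (passed under $\later^\kappa$ via extensionality).

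The main obstacle will be the homomorphism verification for $\overline{f}$, because the operations on $(\grD(S\times X))^S$ interleave state passing with delay steps, so each axiom forces a combination of a pure GS-equation on $A$ with one of the $\step$--$\lookup$ or $\step$--$\update$ interaction equations on $A$. A secondary subtlety is that the inner map $h$ produces elements of $A$ of the form $\update_A(f(x))(s')$ rather than merely $f(x)$, so the outer $\lookup_A$ must be commuted past $\update_A$ using the axiom $\lookup(\lambda s.\update(g s)s) = \lookup(\lambda s.g(s)s)$ to obtain the expected behaviour on $\eta_X(x)$ and to align the $\update$-homomorphism calculation.
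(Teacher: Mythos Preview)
Your approach is essentially identical to the paper's: the same GSD-structure on $(\grD(S\times X))^S$, the same unit, the same extension $\overline f$ built from an inner guarded-recursive map (the paper calls it $f'$, you call it $h$), and uniqueness by guarded recursion and case analysis.

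One point needs fixing. In your uniqueness sketch you write ``reduces to showing $k(g)(s) = h(g(s))$'', but $k(g)$ is an element of $A$, not a map $S\to A$, so this does not typecheck. The paper makes the reduction precise as follows: since $k$ is a $\lookup$/$\update$-homomorphism and $g = \lookupspec(\updatespec g)$, one has
\[
  k(g) \;=\; \lookup_A\bigl(\lambda s.\, k(\lambda s'.\, g\,s)\bigr),
\]
and the goal becomes $k(\lambda s'.\, d) = h(d)$ for every $d : \grD(S\times X)$, i.e.\ the restriction of $k$ to \emph{constant} state-readers agrees with $h$. That statement is what is then proved by guarded recursion and case analysis on $d$. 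Your case analysis ``on $g(s)$'' is the right instinct, but the reduction to constant functions is the missing step that makes it typecheck.
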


\begin{proof}
We show that $(\grD(S \times -))^S$ maps a set $X$ to the carrier set of the free GSD-algebra.
First note that $(\grD(S \times X))^S$ carries a GSD-algebra structure $(\lookupspec, \updatespec, \stepspec)$ defined as follows:
\begin{align*}
\forall x : ((\grD(S \times X))^S)^S, s : S . \, & \lookupspec \,x\, s = x\, s\, s \\
\forall x : (\grD(S \times X))^S . \, & \updatespec\, x = \lambda s . \lambda s' . x \, s \\
\forall x : \later^\kappa ((\grD(S \times X))^S) . \, & \stepspec \,x = \lambda s . \step^\kappa(\tabs{\tickA}{\kappa} (\tapp[\tickA]{x}) \,s).
\end{align*}
and that there is an inclusion $\eta^{DS} : X \to (\grD(S \times X))^S$ defined as $\eta^{DS}(x) = \lambda s . \now^\kappa (s, x)$. 

Given a GSD-algebra $Y$, and any map $f : X \rightarrow Y$, 
we must show that there is a unique homomorphism $\bar{f} : (\grD(S \times X))^S \rightarrow Y$ 
such that $\bar{f} \circ \eta^{DS} = f$. Define 
\[
\bar{f}(x) = \lookup_Y (\lambda s . f' (x \,s)),
\]
where $f' : \grD(S \times -) \rightarrow Y$ is defined as:
\begin{align*}
f'(\now^\kappa (s, x)) = \; & \update_Y (f(x))\, s \\
f'(\step^\kappa d) = \; & \step_Y(\tabs{\tickA}{\kappa} f' (\tapp d)).
\end{align*}
We omit the straightforward verification that $\bar f$ is a homomorphism and that $\bar f \circ\eta^{DS} = f$.

Finally, we show that $\bar{f}$ is the unique extension of $f$. Suppose that $g$ is a GSD-algebra homomorphism and that $g \circ \eta^{DS} = f$. We must show that $g = \bar{f}$.

First notice that, since $g$ is a homomorphism:
\begin{align*}
g(x) & = g(\lookupspec (\updatespec x)) \\
     & = g(\lookupspec (\lambda s . \lambda s' . x \,s )) \\
     & =  \lookup_Y (\lambda s . g(\lambda s' . x \, s)).
\end{align*}

So it suffices to show that $g(\lambda s' . x) = f'(x)$, for all $x : \grD(S \times X)$. We prove this by guarded recursion and cases of $x$. 

Suppose first that $x = \now^\kappa(x', s)$, then
\begin{align*}
  f'(x) & = \update_Y (f(x')) \,s \\
  & = \update_Y (g \circ \eta^{DS}(x')) s \\
  & = \update_Y (g (\lambda t . \now^\kappa (t, x'))) s \\
  & = g (\updatespec (\lambda t . \now^\kappa (t, x')) s) \\
  & = g(\lambda s' . \now^\kappa (s, x')) \\
  & = g(\lambda s' . x).
\end{align*}

If $x = \step^\kappa d$, then
\begin{align*}
f'(x) & = \step_Y(\tabs{\tickA}{\kappa} f' (\tapp d)) \\
 & = \step_Y(\tabs{\tickA}{\kappa} g (\lambda s . \tapp d)) \\
% & = \step_Y(\tabs{\tickA}{\kappa} g (\tabs{\tickB}{\kappa} \tapp[\tickA]{\lambda s' . \tapp d})) \\
 & = \step_Y((\later^\kappa g) (\tabs{\tickA}{\kappa}{ \lambda s . \tapp[\tickA] d})) \\
 & = g(\stepspec(\tabs{\tickA}{\kappa} \lambda s . \tapp d )) \\
 & = g(\lambda s . \step^\kappa(\tabs{\tickA}{\kappa} \tapp d)) \\
 & = g(\lambda s . x).
\end{align*}

So $\bar{f}$ is the unique extension of $f$, proving that the monad $(\grD(S \times -))^S$ is the free model monad of the theory of GSD-algebras.
\end{proof}

\noindent Note that also $(\ciD(S \times -))^S$ is a monad by Lemma~\ref{lem:dist:grD:to:ciD}, since $(\ciD(S \times -))^S \equi \forall\kappa.((\grD(S \times -))^S)$.

\paragraph*{Continuations}
The continuation monad $\mathcal{C}X = (X \rightarrow R) \rightarrow R$ is a classic example of a monad that is not algebraic. Indeed, Hyland et al. mention that continuations have more of a \emph{logical} character than an algebraic one \cite{hyland2007}. Still, it is a monad often used in combination with algebraic monads. In op. cit. they show that algebraic operations on the continuations monad are in 1-1 correspondence with algebraic operations on $R$, where the former can be constructed pointwise from the latter.

While $\step^\kappa$ is not strictly an algebraic operation due to its non-standard arity, it does follow this pointwise construction. 
\begin{prop}
The continuation monad $\mathcal{C}X$ has a delay algebra structure for all $X$ iff $R$ has a delay algebra structure. 
\end{prop}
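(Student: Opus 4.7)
The plan is to prove both implications. For the forward direction, assume $R$ carries a delay algebra structure $\xi : \later^\kappa R \to R$. I would lift this pointwise to each $\mathcal{C}X$ by setting
\[
  \xi_X(d)(k) \defeq \xi\bigl(\tabs{\tickA}{\kappa}\,(\tapp[\tickA]{d})(k)\bigr)
\]
for $d : \later^\kappa\mathcal{C}X$ and $k : X \to R$. This is well-typed because $\tapp[\tickA]{d}$ has type $(X\to R)\to R$, hence $(\tapp[\tickA]{d})(k) : R$, so the tick abstraction inhabits $\later^\kappa R$, where $\xi$ applies. This is exactly the pointwise pattern for algebraic-style operations on the continuations monad identified by Hyland et al.~\cite{hyland2007}; the only non-standard aspect is the arity of $\step^\kappa$, which is handled uniformly by the tick abstraction.

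For the converse, I would instantiate the hypothesis at a conveniently chosen $X$. The cleanest option is $X = \emptyset$: since there is a unique function $\emptyset \to R$, the canonical evaluation map yields an equivalence $\mathcal{C}\emptyset \equi \mathbf{1}\to R \equi R$, along which the hypothesised delay algebra structure on $\mathcal{C}\emptyset$ transports to one on $R$. If one prefers to avoid the empty type, an equivalent construction uses $X = R$: given $d : \later^\kappa R$, define
\[
  \step_R(d) \defeq \xi_R\bigl(\tabs{\tickA}{\kappa}\,\lambda k.\, k(\tapp[\tickA]{d})\bigr)(\lambda r. r),
\]
using the given delay algebra on $\mathcal{C}R$ applied to the identity continuation.

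I do not expect any significant obstacle. The forward construction is a routine pointwise lift, and the converse only requires picking an $X$ for which $\mathcal{C}X$ canonically projects onto $R$. In both directions one must verify that $\xi_X$ (resp.\ $\step_R$) lands in the relevant hset; this is immediate since $\later^\kappa$ preserves hsets, function spaces of hsets are hsets, and the defining equations use only these constructors.
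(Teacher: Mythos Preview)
Your proof is correct and essentially matches the paper's: the forward direction is the identical pointwise lift, and your second option for the converse (instantiate at $X=R$ and evaluate at the identity continuation) is the paper's argument, differing only in that you embed $d$ via the monad unit $\lambda k.\, k(\tapp[\tickA]{d})$ where the paper uses the constant function $\lambda g.\, \tapp[\tickA]{d}$. Your first option using $X=\emptyset$ and the equivalence $\mathcal{C}\emptyset \equi R$ is a clean alternative not in the paper.
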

\begin{proof}
If $\step^\kappa_R : \later R \rightarrow R$, define $\step^\kappa_{\mathcal{C}X}: \later ((X \rightarrow R) \rightarrow R) \rightarrow (X \rightarrow R) \rightarrow R$ as 
\[
\step^\kappa_{\mathcal{C}X} d \,g = \step^\kappa_R (\tabs{\tickA}{\kappa} (\tapp[\tickA]{d} (g)))
\]
Conversely, given $\step^\kappa_{\mathcal{C}X}: \later ((X \rightarrow R) \rightarrow R) \rightarrow (X \rightarrow R) \rightarrow R$ for any $X$, 
and $d : \later R$, instantiate $\step^\kappa_{\mathcal{C}X}$ with $R$, the constant map to $d$, and the identity on $R$ to get an element of $R$:
\[
\step^\kappa_R d = (\step^\kappa_{\mathcal{C}R} (\tabs{\tickA}{\kappa} \lambda g. (\tapp[\tickA]{d}))) (\id_R) \qedhere
\]
\end{proof}

Other than lifting operations from $R$ to $\mathcal{C}X$, Hyland et al. investigate sums, tensors and monad transformers as ways to combine algebraic monads with the continuation monad \cite{hyland2007}. As we show below, the sum of $\grD$ with any other monad always exists, so in particular the sum $\monadsum {\mathcal{C}}{\grD}$. The tensor product of $\mathcal{C}$ with the coinductive delay monad {\ciD} exists in $\catset$, since $\ciD$ has countable rank, and so Theorem 6 of \cite{hyland2007} applies, 
at least for $\ciD$ considered an endofunctor on sets in the classical sense. We have not worked out the details in the setting of Clocked Cubical Type Theory that we work in here. 
The monad transformer \cite{benton2000} combines the guarded delay monad with the continuations monad by wrapping the delay monad around $R$, and then applying the above lifting of delay-algebras on $\grD R$ to delay algebras on $(X \rightarrow (\grD R)) \rightarrow (\grD R)$.

Combinations via a distributive law are absent in this analysis of Hyland et al. A combination of the continuation monad with the delay monad via a distributive law would, depending on the direction of the distributive law, result in either a monad of type $\grD((X \rightarrow R) \rightarrow R)$ or $((\grD X) \rightarrow R) \rightarrow R$. However, we believe that neither distributive law exists. We will prove that the former case is not possible in Corollary~\ref{cor:no:go:continuation} below, if $R$ has a binary operation that is idempotent and commutative (for instance, if $R$ is the set of Booleans).  
For the latter, to define a distributive law, we would need to construct a continuation out of an element of form $\step^\kappa d$, where we cannot use guarded recursion to access $d$. Without any information about $d$, it seems unlikely that we can define something meaningful here.

Distributive laws involving the coinductive delay monad and the continuations monad might still exist, but have so far eluded us.

\paragraph*{Selections}

The selection monad $\mathcal{J}X = (X \rightarrow R) \rightarrow X$ \cite{escardo_oliva_2010} is a close companion to the continuation monad $\mathcal{C}X = (X \rightarrow R) \rightarrow R$, with many applications in game theory and functional programming \cite{escardo2010b, Hedges2014}. It takes a function $X \rightarrow R$, and selects an input $x : X$ to return. This could, for example, be an input for which the function attains an optimal value. 

Contrary to the continuation monad, the selection monad easily combines with the delay monad via a distributive law of type $\grD \mathcal{J} \rightarrow \mathcal{J}\grD$. 
It is similar to the distributive law for the delay monad over the reader monad, and is given by:
\begin{align*}
\distlaw (\now f) & = \lambda g. \now f(\lambda x. g(\now x)) \\
\distlaw (\step d) & = \lambda g. \step (\tabs{\tickA}{\kappa} (\distlaw (\tapp[\tickA]{d})) g),
\end{align*}
where $f: (X \rightarrow R) \rightarrow X$, $g : (\grD X) \rightarrow R$ and $d : \later^\kappa (\grD((X \rightarrow R) \rightarrow X))$.
Compared to the distributive law for the reader monad, the case for $\now f$ is slightly more complex, because $f$ takes a function from $X$ to $R$, but $g$ has type $\grD X \rightarrow R$. This is solved by giving $f$ the function $\lambda x. g(\now x)$, which is indeed a function from $X$ to $R$.

In fact, Fiore proved that there is a distributive law of type $T \mathcal{J} \rightarrow \mathcal{J} T$ for any strong monad $T$ \cite{FioreNotes}. 
The distributive law for the delay monad over the selection monad given above is an instance of this general distributive law.

Rather than using distributive laws to combine the selection monad with other monads, Escard{\'o} and Oliva study \emph{$T$-selection functions}: $\mathcal{J}^T X = (X \rightarrow R) \rightarrow TX$ for any strong monad $T$ \cite{escardo2017}, with the extra requirement that $R$ is a $T$-algebra. Where a distributive law wraps the monad $T$ around both occurrences of $X$ in the selection monad, $T$-selection functions only wrap $T$ around the resulting $X$.
Since $\grD$ is a strong monad, we also get the monad $\mathcal{J}^\grD$.

It is tempting to believe that, under the assumption that $R$ has a $T$-algebra structure, $\mathcal{J}^T X$ is a retract of $\mathcal{J}T X$. 
Indeed, we can define maps $\phi : \mathcal{J}^T \rightarrow \mathcal{J}T$ and $\psi : \mathcal{J}T \rightarrow \mathcal{J}^T$ such that $\psi \circ \phi = \id$.
  \begin{align*}
    \phi(f) & = \lambda g . f (\lambda x . g (\eta^T x))\\
    \psi(f') & = \lambda g' . f' (\lambda t . (\alpha_R \circ T(g)) t).
  \end{align*}
However, these maps are not monad maps, as they fail to commute with the multiplications of the two monads.

\paragraph*{Free Combinations With Delay}
\label{sec:free}

The sum of two monads $T$ and $S$ is a monad $\monadsum TS$ whose algebras are objects $X$ with algebra structures for both $T$ and $S$~\cite{hyland2006}. In terms of algebraic theories, the sum can be understood as combining two theories with no equations between them.
The sum of $\grD$ with any other monad always exists \cite[Theorem 4]{hyland2006}:

\begin{prop} \label{thm:free:comb}
  Let $T$ be a monad, and define $\monadsum T{\grD}$ as the guarded recursive type:
  \[
  (\monadsum T{\grD}) X \equi T(X + \later^\kappa ((\monadsum T{\grD}) X)).
  \]
  Then $(\monadsum T{\grD}) (X)$ is the carrier of the free $T$-algebra and delay algebra structure.
\end{prop}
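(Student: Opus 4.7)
The plan is to unfold the guarded recursive definition of $(\monadsum T{\grD})X$, equip it with both algebra structures in the obvious way, and then exploit the fact that at each layer it is, by construction, a free $T$-algebra on $X + \later^\kappa((\monadsum T{\grD})X)$, together with guarded recursion, to obtain and verify the universal map. Write $M = \monadsum T{\grD}$ for brevity and let $\iota : MX \equi T(X + \later^\kappa(MX))$ be the guarded recursive equivalence.

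First I would define the structures. The $T$-algebra structure is the free one inherited from $\iota$: transport $\mu^T : T T(X + \later^\kappa(MX)) \to T(X + \later^\kappa(MX))$ along $\iota$. The delay algebra structure $\step : \later^\kappa(MX) \to MX$ is $\iota^{-1} \circ \eta^T \circ \inr$, and the unit $\eta^M : X \to MX$ is $\iota^{-1} \circ \eta^T \circ \inl$. Verifying the monad laws for these structures is routine bookkeeping with $\mu^T$ and the coproduct.

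Now suppose $(Y, \alpha : TY \to Y, \xi : \later^\kappa Y \to Y)$ carries both a $T$-algebra and a delay algebra structure, and $f : X \to Y$ is given. I would construct $\bar f : MX \to Y$ by guarded recursion: assuming $\bar f$ available under a $\later$, define an auxiliary $g : X + \later^\kappa(MX) \to Y$ by cases, $g(\inl x) = f(x)$ and $g(\inr d) = \xi(\tabs\tickA\kappa \bar f(\tapp d))$, and set $\bar f = \alpha \circ Tg \circ \iota$. Then $\bar f \circ \eta^M = g \circ \inl = f$ is immediate, and $\bar f \circ \step = \xi \circ \later^\kappa(\bar f)$ follows since $\bar f(\step(d)) = \alpha(Tg(\eta^T(\inr d))) = \alpha(\eta^T(g(\inr d))) = g(\inr d)$ by the unit law for $\alpha$. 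That $\bar f$ is a $T$-algebra homomorphism is the standard fact that maps of the shape $\alpha \circ Tg \circ \iota$ out of a free $T$-algebra commute with multiplication.

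For uniqueness, suppose $h : MX \to Y$ is another map of both structures extending $f$. I would use guarded recursion: assume $\later^\kappa(h \peq \bar f)$ and show $h \peq \bar f$. Since both are $T$-algebra homomorphisms from the free $T$-algebra on $X + \later^\kappa(MX)$, it suffices to show they agree on the generators, i.e., after precomposition with $\iota^{-1}\circ\eta^T$. On $\inl x$ both give $f(x)$. On $\inr d$, using that $h$ is a delay homomorphism, $h(\iota^{-1}(\eta^T(\inr d))) = h(\step d) = \xi(\tabs\tickA\kappa h(\tapp d))$; by the guarded hypothesis and extensionality for $\later$ \eqref{eq:later:ext} we have $\latbind\tickA\kappa (h(\tapp d) \peq \bar f(\tapp d))$, hence this equals $\xi(\tabs\tickA\kappa \bar f(\tapp d)) = \bar f(\iota^{-1}(\eta^T(\inr d)))$, and we are done.

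The main obstacle I anticipate is being rigorous about the interplay between the guarded recursion used to \emph{define} $\bar f$ and the guarded recursion used to \emph{characterise} it: in particular, one must phrase the extension property and $T$-homomorphism condition in a way that makes the free $T$-algebra universal property of $T(X + \later^\kappa(MX))$ applicable pointwise at each guarded layer, rather than trying to apply it once globally to the fully unfolded type. Keeping the equivalence $\iota$ explicit (so that the $T$-algebra structure on $MX$ really is the free one on $X + \later^\kappa(MX)$ up to equivalence) is what makes the uniqueness argument slide through cleanly.
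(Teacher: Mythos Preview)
Your proposal is correct and follows essentially the same approach as the paper: define $\bar f$ by guarded recursion as $\alpha \circ Tg$ (with $g$ handling the $\inl$/$\inr$ cases via $f$ and the delay structure), and establish uniqueness by combining guarded recursion with the free $T$-algebra universal property. You have merely spelled out more of the details (the explicit algebra structures on $MX$, the verification of the homomorphism conditions, and the uniqueness argument on generators) than the paper, which leaves these mostly implicit.
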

\begin{proof}
It is easy to see that $(\monadsum T{\grD}) (X)$ is both a $T$-algebra, and a delay algebra. To see that it is the free one, suppose that $Y$ is also a $T$-algebra and a delay algebra, given by $\alpha_Y : TY \rightarrow Y$ and $\step^\kappa_Y : \later^\kappa Y \rightarrow Y$ respectively, and suppose that we have a map $f : X \rightarrow Y$. We define $\bar{f} : (\monadsum T{\grD}) (X) \rightarrow Y$ as:
\begin{align*}
  \bar{f} & = \alpha_Y (T(g)), 
\end{align*}
where 
\begin{align*}
g (\inl x) & = f x & 
  g (\inr x) & = \step^\kappa_Y(\tabs{\tickA}{\kappa} \bar{f} (\tapp[\tickA]{x}))
\end{align*}
Uniqueness follows by guarded recursion (similar to the proof underneath Equation~(\ref{eq:extension:case:step})), and from the fact that $TX$ is the free $T$-algebra.
\end{proof}

The monad mapping $X$ to $\forall\kappa . (\monadsum T{\grD}) (X)$ includes the coinductive delay monad and $T$, but we have not been
able to prove a general universal property for this. We believe that it is not the sum of the two.

\section{Parallel and Sequential Distribution of Operations}
\label{sec:parseq}

We now consider distributive laws of type $TD \rightarrow DT$, where $D$ is one of the delay monads and $T$ is any presentable monad.
Such laws equip the composite $DT$ with a monad structure, which is the natural one in particular for monads describing data structures, such
as those in the Boom hierarchy.

We again focus on distributive laws involving the guarded version of the delay monad, invoking Lemma \ref{lem:dist:grD:to:ciD}. Intuitively, such a distributive law pulls all the steps out of the algebraic structure of $T$: it turns a $T$-structure with delayed elements into a delayed $T$-structure.
There are two obvious candidates for such a lifting: \emph{parallel} and \emph{sequential} computation.
We define both of these on operations using guarded recursion. A lifting of terms then follows inductively from lifting each operation in the signature of the presentation of $T$.

\begin{defi}[Parallel Lifting of Operators]
  \label{parlift}
Let $A$ be an algebraic theory, and let $X$ be an $A$-model. Define, for each n-ary operation
$\binop$ in $A$, a lifting $\parinterpretation{\binop}{\grD X} : (\grD X)^n \to \grD X$ by:
\begin{align*}
    \parinterpretation{\binop}{\grD X} (\now^\kappa x_1, \dots,  \now^\kappa x_n) & = \now^\kappa (\interpretation{\binop}{X}(x_1, \dots, x_n)) \\
    \parinterpretation{\binop}{\grD X} (x_1, \dots, x_n) & = \step^\kappa(\tabsshort\tickA\kappa (\parinterpretation{\binop}{\grD X} (x_1', \dots, x_n'))),
\end{align*}
where the second clause only applies if one of the $x_i$ is of the form $\step(x_i'')$ and
\[
 x_i' =
\begin{cases}
 x_i & \text{ if } x_i = \now^\kappa(x_i'') \\
 \tapp{x_i''} & \text{ if } x_i = \step^\kappa(x_i'')
\end{cases}
\]
\end{defi}

\begin{defi}[Sequential Lifting of Operators]
  \label{seqlift}
Let $A$ be an algebraic theory, and let $X$ be an $A$-model.
Define, for each n-ary operation $\binop$ in $A$, a lifting $\seqinterpretation{\binop}{\grD X} : (\grD X)^n \to \grD X$ by:
\begin{align*}
    \seqinterpretation{\binop}{\grD X} (\now^\kappa x_1, \dots, \now^\kappa x_n) & = \now^\kappa (\interpretation{\binop}{X}(x_1, \dots,  x_n)) \\
    \seqinterpretation{\binop}{\grD X} (\now^\kappa x_1, \dots, \step^\kappa x_i, \dots x_n)
    & = \step^\kappa(\tabsshort\tickA\kappa(\seqinterpretation{\binop}{\grD X} (\now^\kappa x_1, \dots, (\tapp{x_i}), \dots, x_n))), %\\
\end{align*}
where, in the second clause, the $i$th argument is the first not of the form $\now^\kappa(x_k')$.
\end{defi}

In general, for an $n$-ary operation $\binop$, parallel lifting evaluates all arguments of the form
$\step (x_i)$ in parallel, and sequential lifting evaluates them one by one from the left.
Parallel lifting of an operator therefore terminates in as many steps as the maximum required
for each of its inputs to terminate, while sequential lifting terminates in the sum of the number of steps required for each
input.

The evaluation order of arguments in the case of sequential lifting is inessential, which can be proved using guarded recursion and tick irrelevance.

\begin{lem}\label{lem:takestepsoutanyorder-binary}
Let $A$ be an algebraic theory, and let $\nop$ be an $n$-ary
operation in $A$. Then
  \begin{align*}
   \seqinterpretation{\binop}{\grD X}(x_1, \ldots, \,\step^\kappa(x_i), \dots x_n) =
  & \step^\kappa(\tabs\tickA\kappa \seqinterpretation{\binop}{\grD X} (x_1, \ldots, \tapp{x_i}, \ldots, x_n)).
  \end{align*}
\end{lem}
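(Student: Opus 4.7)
The plan is to proceed by guarded recursion on the statement itself, with a case analysis on the arguments in positions strictly before $i$. Let $P$ denote the claim, assume $\later^\kappa P$ as the guarded hypothesis, and aim to derive $P$.

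If every $x_j$ with $j<i$ is of the form $\now^\kappa(-)$, then the $i$-th slot is the leftmost argument not of this form, so the equation is exactly the second defining clause of $\seqinterpretation{\binop}{\grD X}$ in Definition~\ref{seqlift}. Otherwise, pick the smallest $j<i$ with $x_j = \step^\kappa(x_j')$. Applying the defining clause at position $j$ rewrites the left-hand side to $\step^\kappa(\tabs\tickA\kappa \seqinterpretation{\binop}{\grD X}(\ldots, \tapp{x_j'}, \ldots, \step^\kappa(x_i), \ldots))$. Under $\tickA$ the guarded hypothesis becomes $P$, and applying it at position $i$ pulls that step out as well, producing $\step^\kappa(\tabs\tickA\kappa \step^\kappa(\tabs\tickB\kappa \seqinterpretation{\binop}{\grD X}(\ldots, \tapp{x_j'}, \ldots, \tapp[\tickB]{x_i}, \ldots)))$. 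Treating the desired right-hand side symmetrically---this time the guarded hypothesis is applied inside the $\tickA$ to pull the step out of position $j$---yields the same double-$\step^\kappa$ shape, but with $x_j'$ now applied to $\tickB$ and $x_i$ to $\tickA$.

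What remains is to prove the equality of these two inner expressions under the nested $\later$-abstractions. Two applications of $\later$-extensionality~(\ref{eq:later:ext}) reduce this to an equality for each choice of the ticks $\tickA, \tickB$, which is then exactly an instance of tick irrelevance $\tirrAx$ from~(\ref{eq:tirrAx}) applied to $x_j'$ and $x_i$. The main obstacle I anticipate is precisely this tick bookkeeping: the two applications of the guarded hypothesis produce inner expressions whose arguments sit under different ticks, and $\tirrAx$ is the essential ingredient that lets us identify them; apart from that, the proof is routine case analysis against the defining clauses of sequential lifting.
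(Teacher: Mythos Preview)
Your proof is correct and follows essentially the same approach as the paper's: case split on whether the arguments before position $i$ are all $\now^\kappa$, then in the interesting case use the defining clause at the leftmost step position $j$, guarded recursion at position $i$, and tick irrelevance to reconcile the two tick names. One small imprecision: on the right-hand side you invoke the guarded hypothesis to pull the step out of position $j$, but since $j$ was chosen as the \emph{smallest} index with a $\step^\kappa$, all earlier arguments are $\now^\kappa$, so this is simply the second defining clause of $\seqinterpretation{\binop}{\grD X}$ again---no appeal to the hypothesis (which, as you set it up, concerns position $i$) is needed. The paper's version sidesteps this by rewriting purely left-to-right, using the defining clause in reverse at position $j$ rather than expanding both sides and meeting in the middle; the content is the same.
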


\begin{proof}
  For readability, we prove this just in the case of $n=2$ and $i=2$, naming the variables $x$ and $y$.
  If $x = \now^\kappa x'$, then the equation holds by definition of $\seqinterpretation{\binop}{\grD X}$.
  If $x = \step^\kappa x'$, then:
    \begin{align*}
        \seqinterpretation{\binop}{\grD X}(x, \step^\kappa y)
      = \; & \step^\kappa(\tabsshort\tickA\kappa \seqinterpretation{\binop}{\grD X}((\tapp {x'}), \step^\kappa y)) \\
      = \; & \step^\kappa(\tabsshort\tickA\kappa \step^\kappa(\tabsshort\tickB\kappa (\seqinterpretation{\binop}{\grD X} ((\tapp {x'}), (\tapp[\tickB]{y}))))) \\
      = \; & \step^\kappa(\tabsshort\tickA\kappa \step^\kappa(\tabsshort\tickB\kappa (\seqinterpretation{\binop}{\grD X} ((\tapp[\tickB]{x'}), (\tapp y))))) \\
      = \; & \step^\kappa(\tabsshort\tickA\kappa  \seqinterpretation{\binop}{\grD X} (\step^\kappa x', (\tapp y))) \\
      = \; & \step^\kappa(\tabsshort\tickA\kappa  \seqinterpretation{\binop}{\grD X} (x, (\tapp y))),
    \end{align*}
    using guarded recursion and tick irrelevance.
\end{proof}

\subsection{Preservation of Equations}

Parallel lifting preserves all non-drop equations, whereas sequential lifting only preserves balanced equations. We prove this in the two following propositions. We write $\parinterpretation{s}{\grD X}$ for the interpretation of a term $s$ on $\grD X$ defined by induction of $s$ using the parallel lifting of operations, and likewise $\seqinterpretation{s}{\grD X}$ for the interpretation defined using sequential lifting of operations.

\begin{prop}[Parallel Preserves Non-Drop]\label{prop:par-eq-preservation}
  Let $A = (\Sigma_A, E_A$) be an algebraic theory, $X$ an $A$-model, and
  $s = t$ a non-drop equation that is valid in $A$.
  Then also $\parinterpretation{s}{\grD X} = \parinterpretation{t}{\grD X}$.
\end{prop}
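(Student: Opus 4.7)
The plan is to reduce the proposition to a structural lemma characterising the shape of $\parinterpretation{s}{\grD X}(y_1,\ldots,y_n)$ on arbitrary delayed inputs, and then close the main claim by guarded recursion. The lemma I would prove, by induction on the term $s$ (using the distinct variables $x_1,\ldots,x_n$ actually appearing in $s$), is: (i) if every $y_i$ has the form $\now^\kappa y_i'$, then $\parinterpretation{s}{\grD X}(y_1,\ldots,y_n) = \now^\kappa\, \interpretation{s}{X}(y_1',\ldots,y_n')$; and (ii) otherwise, $\parinterpretation{s}{\grD X}(y_1,\ldots,y_n) = \step^\kappa(\tabsshort{\tickA}{\kappa}\, \parinterpretation{s}{\grD X}(y_1',\ldots,y_n'))$, where $y_i' = y_i$ when $y_i = \now^\kappa(\cdots)$ and $y_i' = \tapp{y_i''}$ when $y_i = \step^\kappa y_i''$.

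The base case $s = x_i$ is immediate from the two clauses. For the inductive case $s = \binop(s_1,\ldots,s_k)$, applying the IH to each subterm $s_j$ gives values $z_j = \parinterpretation{s_j}{\grD X}(\ldots)$ that are either $\now^\kappa$ or $\step^\kappa$. When all $y_i$ are $\now^\kappa$, all $z_j$ are $\now^\kappa$ of $\interpretation{s_j}{X}(\cdots)$, so the all-$\now^\kappa$ clause of Definition~\ref{parlift} fires, producing $\now^\kappa\,\interpretation{s}{X}(\cdots)$. Otherwise, at least one $z_j$ is a $\step^\kappa$ by IH and the step clause applies; here a short case split according to whether each $s_j$ contains a $\step^\kappa$-valued variable shows that the unfolded argument $z_j'$ inside the step matches $\parinterpretation{s_j}{\grD X}(y_1',\ldots,y_n')$ in both subcases, so the whole expression coincides with $\step^\kappa(\tabsshort{\tickA}{\kappa}\,\parinterpretation{s}{\grD X}(y_1',\ldots,y_n'))$.

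Given the lemma, the proposition follows by guarded recursion on the pointwise equality of $\parinterpretation{s}{\grD X}$ and $\parinterpretation{t}{\grD X}$. Since $s = t$ is non-drop, $\var(s) = \var(t)$ and the two interpretations share a common variable context. For any inputs $(y_1,\ldots,y_n)$: in the all-$\now^\kappa$ subcase, the lemma reduces both sides to $\now^\kappa\,\interpretation{s}{X}(y_1',\ldots,y_n') = \now^\kappa\,\interpretation{t}{X}(y_1',\ldots,y_n')$, which holds because $X$ is an $A$-model and $s = t$ is valid in $A$; in the step subcase, the lemma yields $\step^\kappa$ terms whose bodies are equated by the guarded recursion hypothesis together with extensionality for $\later$ from (\ref{eq:later:ext}).

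The main obstacle I expect is the step sub-case of the inductive clause of the key lemma: one must carefully reconcile the definition of $\parinterpretation{\binop}{\grD X}$ on mixed $(\now^\kappa,\step^\kappa)$-tuples with the values of $\parinterpretation{s_j}{\grD X}$ on the unfolded inputs, uniformly for subterms whose variables are all $\now^\kappa$ and for subterms containing a step-valued variable. This is also the precise point where the non-drop hypothesis is essential: if $\var(s) \neq \var(t)$, a step input $y_i$ on a variable of $s$ that is absent from $t$ would force $\parinterpretation{s}{\grD X}$ into the step case while $\parinterpretation{t}{\grD X}$ remains in the $\now^\kappa$ case, and the two shapes produced by the lemma would then no longer match.
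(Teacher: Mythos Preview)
Your proposal is correct and follows essentially the same approach as the paper. The structural lemma you formulate is exactly what the paper dismisses as ``an easy induction shows that these both satisfy the parallel lifting equations of Definition~\ref{parlift}'', and your final step by guarded recursion matches the paper's argument verbatim; you have simply been more explicit about the inductive step (including the mixed $\now^\kappa$/$\step^\kappa$ subcase for subterms) that the paper leaves implicit.
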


\begin{proof}
Let $n$ be the number of free variables of $s$ and $t$ (by the non-drop assumption $s$ and $t$ have the same free variables). Then $\parinterpretation{s}{\grD X},
\parinterpretation{t}{\grD X} : (\grD X)^n \to \grD X$,
and an easy induction shows that these both satisfy the parallel lifting equations of Definition~\ref{parlift}. 
The case where all $x_i$ are of the form $\now(x_i')$ is easy. So suppose that at least one of the arguments
$x_1, \dots, x_n$ is a step and write
\[
 x_i' =
\begin{cases}
 x_i & \text{ if } x_i = \now^\kappa(x_i'') \\
 \tapp{x_i''} & \text{ if } x_i = \step^\kappa(x_i'')
\end{cases}
\]
Then
\begin{align*}
  \parinterpretation{s}{\grD X} (x_1, \dots, x_n) & = \step^\kappa(\tabs\tickA\kappa (\parinterpretation{s}{\grD X}(x_1', \dots, x_n'))) \\
  & = \step^\kappa(\tabs\tickA\kappa (\parinterpretation{t}{\grD X}(x_1', \dots, x_n'))) \\
  & = \parinterpretation{t}{\grD X} (x_1, \dots, x_n)
\end{align*}
by guarded recursion.
\end{proof}

The restriction to non-drop equations is necessary, because divergence in a dropped variable leads to divergence on one side of the equation, but not on the other.

M{\o}gelberg and Vezzosi \cite{mogelberg2021two} observed that parallel lifting does not define a distributive law in the case of the
finite powerset monad. Their proof uses idempotency, but in fact parallel lifting does not define a monad even just in the presence of a
single binary operation.

\begin{thm}\label{thm:mogelvezzo}
 Let $T$ be an algebraic monad with a binary operation $\binop$. Then the natural transformation $\distlaw : T\ciD \rightarrow \ciD T$ induced by parallel lifting does not define a distributive law, because it fails the second multiplication axiom.
\end{thm}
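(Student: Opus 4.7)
The plan is to exhibit a single element of $T\ciD\ciD X$ on which the two sides of the second multiplication axiom $\distlaw\circ T\mu^{\ciD} = \mu^{\ciD}T\circ \ciD\distlaw\circ \distlaw\ciD$ disagree. Pick any $x,y\in X$ and set
\[
 e \defeq \binop\bigl(\step(\now(\now\, x)),\; \now(\step(\now\, y))\bigr) \in T\ciD\ciD X.
\]
The crucial feature of $e$ is that each argument of $\binop$ carries exactly one $\step$, but placed at \emph{different} nesting layers: the first argument has its step on the outer $\ciD$, the second on the inner $\ciD$. The left-hand side collapses the two $\ciD$-layers before distributing, aligning the two steps so that parallel lifting fires them simultaneously; the right-hand side distributes layer by layer, so the misalignment has to be paid for twice.

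For the left-hand side I would first apply $T\mu^{\ciD}_X$, using the clauses $\mu^{\ciD}(\step d)=\step(\mu^{\ciD}d)$ and $\mu^{\ciD}(\now d)=d$, obtaining $\binop(\step(\now\, x),\step(\now\, y))\in T\ciD X$. Both outer layers are $\step$, so a single application of parallel lifting yields $\step(\now(\binop(x,y)))\in\ciD TX$, containing exactly one $\step$.

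For the right-hand side I would first apply $\distlaw_{\ciD X}$. The outer layers of the two arguments of $\binop$ are mismatched ($\step$ versus $\now$), so parallel lifting emits one synchronising step and returns the element $\step(\now(\binop(\eta(\now\, x),\eta(\step(\now\, y)))))\in\ciD T\ciD X$, in which the inner $T\ciD X$ datum is still a mismatched $\binop$. Applying $\ciD\distlaw_X$ next runs $\distlaw_X$ on that inner datum $\binop(\eta(\now\, x),\eta(\step(\now\, y)))$; the arguments are again out of sync, and parallel lifting spends one further step to align them, producing $\step(\now(\binop(x,y)))\in\ciD TX$ inside. The resulting element of $\ciD\ciD TX$ is therefore $\step(\now(\step(\now(\binop(x,y)))))$, which $\mu^{\ciD}_{TX}$ flattens to $\step(\step(\now(\binop(x,y))))$, containing two $\step$s.

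Since $\ciD TX\iso TX+\ciD TX$ coinductively, the elements $\step(\now(\binop(x,y)))$ and $\step(\step(\now(\binop(x,y))))$ of $\ciD TX$ are distinct (using only that $\binop(x,y)$ inhabits $TX$), so the two paths genuinely disagree on $e$ and the axiom fails. I expect the main obstacle to be nothing conceptual, only the careful bookkeeping of the units of $T$ inserted by $\distlaw$ and of the parallel-lifting clauses across three nested $\ciD$-layers; the conceptual content is that once $\mu^{\ciD}$ has fused two delay layers into one, parallel lifting can amortise a synchronisation step that it was forced to spend when processing the two layers separately.
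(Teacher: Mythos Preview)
Your proof is correct and uses essentially the same counterexample as the paper: two arguments of $\binop$, each carrying a single $\step$ placed at different $\ciD$-layers, so that applying $\mu^{\ciD}$ first lets parallel lifting merge the steps while processing the layers separately costs two steps. The paper's element is $(\now(\step(\now x)),\,\step(\now(\now y)))$, which is your $e$ with the arguments swapped; the paper also phrases the computation via Beck's equivalence (checking that $\mu^{\ciD}$ fails to be a $T$-algebra homomorphism for the parallel-lifted structure) rather than chasing the multiplication axiom directly, but this is just an equivalent repackaging of the same calculation.
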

\begin{proof}
The counter example is the same as used by M{\o}gelberg and Vezzosi:
\begin{align*}
\parinterpretation{\nop}{\ciD \ciD X}(\mu^\ciD (\now (\step \now x)), \mu^\ciD (\step (\now (\now y))) & =
\step (\now (\interpretation{\nop}{X}(x, y))) \\
\mu^\ciD  (\parinterpretation{\nop}{\ciD \ciD X} (\now (\step (\now x)), \step (\now (\now y))))
& =\step (\step (\now (\interpretation{\nop}{X}(x, y)))). \qedhere
\end{align*}
\end{proof}
Note that we used the coinductive version of the delay monad in the above theorem. By Lemma \ref{lem:dist:grD:to:ciD}, this implies the same result for the guarded recursive version.

\begin{prop}[Sequential Preserves Balanced]\label{prop:seq-eq-preservation}
  Let $A$ be an algebraic theory, and $s, t$ be two $A$-terms such that $s = t$ is a balanced equation that is valid in $A$. Then also $\seqinterpretation{s}{\grD X} = \seqinterpretation{t}{\grD X}$.
\end{prop}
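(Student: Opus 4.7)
The overall plan is to establish $\forall \vec{a}.\ \seqinterpretation{s}{\grD X}(\vec{a}) = \seqinterpretation{t}{\grD X}(\vec{a})$ by guarded recursion, where $\vec{a}$ ranges over tuples indexed by $\var(s) = \var(t)$. Under the guarded recursion hypothesis, I case split on whether every component of $\vec{a}$ has the form $\now^\kappa(a_i')$, or at least one has the form $\step^\kappa(a_i'')$.

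For the all-$\now$ case, a routine structural induction on $s$ and $t$ using only the first clause of Definition~\ref{seqlift} gives $\seqinterpretation{s}{\grD X}(\vec{a}) = \now^\kappa(\interpretation{s}{X}(\vec{a}'))$ and likewise for $t$. Validity of $s = t$ in $A$ then yields $\interpretation{s}{X}(\vec{a}') = \interpretation{t}{X}(\vec{a}')$, and applying $\now^\kappa$ closes the case.

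For the step case, I would first prove an auxiliary step-extraction lemma by structural induction on terms: for any $A$-term $u$ in which $x_i$ occurs $k$ times, and any $a_i = \step^\kappa(a_i'')$, the value $\seqinterpretation{u}{\grD X}(\vec{a})$ is equal to $k$ nested $\step^\kappa$'s (under successive tick abstractions) wrapped around $\seqinterpretation{u}{\grD X}(\vec{a}[i \mapsto \tapp{a_i''}])$. The inductive step uses Lemma~\ref{lem:takestepsoutanyorder-binary} to pull a single step through an operator application, iterated once per occurrence of $x_i$ in each argument subterm. Applying this lemma to both $s$ and $t$, the balanced hypothesis ensures $x_i$ occurs the same number $k$ of times in each, so both sides acquire identical towers of $k$ outer $\step^\kappa$'s. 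By $\later$-extensionality~(\ref{eq:later:ext}), equality of the towers reduces to the equation $\seqinterpretation{s}{\grD X}(\vec{a}[i \mapsto \tapp{a_i''}]) = \seqinterpretation{t}{\grD X}(\vec{a}[i \mapsto \tapp{a_i''}])$ under the tick-abstracted context, which is delivered by the guarded recursion hypothesis (one use of the hypothesis suffices because we need only one tick in scope to open a $\later^\kappa$).

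The main obstacle will be formulating the step-extraction lemma cleanly: after pulling out $k$ nested steps, the remaining argument lives under $k$ successive tick abstractions, and the induction must thread the ticks through the subterms, using tick irrelevance~(\ref{eq:tirrAx}) to rearrange them when the sequential evaluation order of operands differs from the order in which occurrences of $x_i$ are encountered. Once that bookkeeping is settled, balance enters only at the final matching step to align the step counts on the two sides, which is exactly the point where the proof would break for non-balanced but non-drop equations (different $k$ on each side) and for drop equations (one side has no step to pull out at all).
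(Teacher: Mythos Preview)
Your proposal is correct and follows essentially the same route as the paper's proof: guarded recursion, a case split on all-$\now^\kappa$ versus at-least-one-$\step^\kappa$, and in the step case a structural induction (your ``step-extraction lemma'') using Lemma~\ref{lem:takestepsoutanyorder-binary} and tick irrelevance to pull out exactly $k$ steps from each side, where $k$ is the common multiplicity of the stepping variable.

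One small point of comparison: the paper packages the $k$ extracted steps as a single outer $\step^\kappa(\tabs\tickA\kappa{(\nextstep)^{k-1}(\ldots)})$ with $\nextstep = \step^\kappa \circ \nextop^\kappa$, rather than as $k$ nested tick abstractions. This keeps only one tick in scope for the inner term, so the guarded recursion hypothesis applies directly without any further tick-irrelevance reasoning. Your formulation with $k$ successive ticks is equivalent (and you correctly observe that one tick suffices to discharge the $\later^\kappa$ hypothesis), but the single-tick-plus-$(\nextstep)^{k-1}$ shape streamlines the bookkeeping you flag as the ``main obstacle''.
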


\begin{proof}
Suppose $s$ and $t$ both have $n$ free variables. Then:
\begin{align*}
 \seqinterpretation{s}{\grD X} (\now^\kappa(x_1), \dots, \now^\kappa(x_n)))
 & = \now^\kappa(\interpretation{s}{X}(x_1, \dots, x_n)) \\
 & = \now^\kappa(\interpretation{t}{X}(x_1, \dots, x_n)) \\
 & = \seqinterpretation{s}{\grD X} (\now^\kappa(x_1), \dots, \now^\kappa(x_n))).
\end{align*}
Suppose one of the inputs steps, say $x_i = \step^\kappa(x_i')$, and that
$x_i$ occurs $k$ times in both $s$ and $t$. Then an easy induction using
Lemma~\ref{lem:takestepsoutanyorder-binary} and tick irrelevance
shows that
\begin{align*}
\seqinterpretation{s}{\grD X}(x_1,  \dots, \step^\kappa(x_i'), \dots, x_n) & =
  \step^\kappa (\tabs\tickA\kappa((\nextstep)^{k-1} (\seqinterpretation{s}{\grD X}(x_1, \dots, \tapp{x_i'}, \dots, x_n)))),
\end{align*}
where $\nextstep = \step^\kappa \circ \nextop^\kappa$,
and likewise for $\seqinterpretation{t}{\grD X}(x_1,  \dots, \step(x_i'), \dots, x_n)$. The result now follows from
guarded recursion.
\end{proof}

Balance is necessary. For example, if $t$ and $s$ are terms in a single variable which occurs twice in $t$ and once in $s$, then
$\seqinterpretation{t}{\grD X}(\step^\kappa(x))$ takes at least two steps, but $\seqinterpretation{s}{\grD X}(\step^\kappa(x))$ might take only one.
Building on Proposition~\ref{prop:seq-eq-preservation}, one can prove the following.

\begin{thm}\label{thm:seq-lifting-distlaw}
Let $T$ be the free model monad of algebraic theory $\bb{T} = (\Sigma_\bb{T}, E_\bb{T})$, such that $E_\bb{T}$ only contains \emph{balanced} equations.
Then sequential lifting defines a distributive law $T\grD \rightarrow \grD T$.
\end{thm}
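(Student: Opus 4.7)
The plan is to apply Beck's theorem (Theorem~\ref{th:beck:dist-lift}): since $T$ is the free model monad of $\bb{T}$, its Eilenberg--Moore category is $\bb{T}$-alg, so producing a distributive law $T\grD \to \grD T$ is the same as lifting $\grD$ to $\bb{T}$-alg. Concretely we must (i) equip $\grD X$ with a $\bb{T}$-model structure for every $\bb{T}$-model $X$, and then check that (ii) $\now^\kappa$, (iii) $\mu^\grD$, and (iv) $\grD f$ for every $\bb{T}$-homomorphism $f$, are $\bb{T}$-homomorphisms.

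For (i), interpret each $\nop \in \Sigma_\bb{T}$ on $\grD X$ as the sequential lifting $\seqinterpretation{\nop}{\grD X}$. Since all equations in $E_\bb{T}$ are balanced by hypothesis, Proposition~\ref{prop:seq-eq-preservation} ensures that these interpretations validate every equation, so $\grD X$ is a $\bb{T}$-model. Condition (ii) is immediate from the base clause of Definition~\ref{seqlift}, which already states $\seqinterpretation{\nop}{\grD X}(\now^\kappa x_1, \ldots, \now^\kappa x_n) = \now^\kappa(\interpretation{\nop}{X}(x_1, \ldots, x_n))$. Condition (iv) follows by guarded recursion on the defining equations of $\grD f$ and a case split on whether the argument is a $\now^\kappa$ or a $\step^\kappa$, using that $f$ already commutes with $\interpretation{\nop}{X}$.

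The main obstacle is (iii): showing that $\mu^\grD : \grD\grD X \to \grD X$ commutes with sequentially lifted operations, i.e.\ for every $n$-ary $\nop \in \Sigma_\bb{T}$ and every $d_1, \ldots, d_n : \grD\grD X$,
\[
 \mu^\grD\bigl(\seqinterpretation{\nop}{\grD\grD X}(d_1, \ldots, d_n)\bigr) = \seqinterpretation{\nop}{\grD X}(\mu^\grD d_1, \ldots, \mu^\grD d_n).
\]
I would prove this by guarded recursion with a case split on whether each $d_i$ is a $\now^\kappa$ or a $\step^\kappa$. If every $d_i = \now^\kappa(e_i)$, both sides reduce to $\seqinterpretation{\nop}{\grD X}(e_1, \ldots, e_n)$, using that $\mu^\grD \circ \now^\kappa = \id$ together with the definition of the lifted operation on $\grD X$. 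If some $d_i = \step^\kappa(e_i)$, pick the leftmost such $i$: the left-hand side pulls out a $\step^\kappa$ via the second clause of Definition~\ref{seqlift} and then absorbs it into the outer $\step^\kappa$ coming from $\mu^\grD$, while the right-hand side first applies $\mu^\grD$ to each $d_j$ (which may itself expose a $\step^\kappa$) and then extracts a step via sequential lifting. Lemma~\ref{lem:takestepsoutanyorder-binary} together with tick irrelevance lets us re-order the steps that appear on the two sides, after which both reduce to a $\step^\kappa$ wrapping the guarded recursion hypothesis, closing the argument.
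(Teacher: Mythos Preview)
Your proposal is correct and follows essentially the same route as the paper: reduce to a lifting of $\grD$ to $\bb{T}$-alg via Beck's theorem, use Proposition~\ref{prop:seq-eq-preservation} to get the $\bb{T}$-model structure on $\grD X$, and verify that $\eta^{\grD}$ and $\mu^{\grD}$ are homomorphisms, the latter by guarded recursion together with Lemma~\ref{lem:takestepsoutanyorder-binary}. The paper's proof of the $\mu^{\grD}$ case is slightly more streamlined---it picks an arbitrary stepping argument $x_i$ and uses Lemma~\ref{lem:takestepsoutanyorder-binary} once on each side at that same position, avoiding any need to talk about re-ordering---but the content is the same.
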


\begin{proof}
We use the fact that a distributive law of the given type is equivalent to a lifting of the monad $\grD$ to the Eilenberg-Moore category of $T$, see Theorem \ref{th:beck:dist-lift}.

By Proposition~\ref{prop:seq-eq-preservation}, sequential lifting defines a lifting of $T$-algebra structures on $X$ to $T$-algebra structures on $\grD X$. It therefore suffices to show that $\eta^{\grD}$ and $\mu^{\grD}$ are homomorphisms of $T$-algebras. We just show this for $\mu^{\grD}$, and only in the case where one of the input steps, say 
$x_i = \step^\kappa(x'_i)$. Then, using Lemma~\ref{lem:takestepsoutanyorder-binary}:
\begin{align*}
 & \mu (\seqinterpretation{\nop}{\grD \grD X} (x_1, \ldots, \step^\kappa(x'_i), \ldots, x_n)) \\
 & = \mu (\step^\kappa(\tabs{\tickA}{\kappa} \seqinterpretation{\nop}{\grD \grD X}(x_1, \ldots, (\tapp[\tickA]{x'_i}), \ldots, x_n))) \\
 & = \step^\kappa(\tabs{\tickA}{\kappa} (\mu (\seqinterpretation{\nop}{\grD \grD X}(x_1, \ldots, (\tapp[\tickA]{x'_i}), \ldots, x_n)))) \\
 & = \step^\kappa(\tabs{\tickA}{\kappa} (\seqinterpretation{\nop}{\grD X}((\mu x_1), \ldots, (\mu (\tapp[\tickA]{x'_i})), \ldots, (\mu x_n)))) \\
% & = \seqinterpretation{\nop}{\grD X}((\mu x_1), \ldots, \step(\tabs{\tickA}{\kappa} (\mu (\tapp[\tickA]{x'_i})), \ldots, (\mu x_n))) \\
 & = \seqinterpretation{\nop}{\grD X}((\mu x_1), \ldots, (\mu x_i), \ldots, (\mu x_n)).
\end{align*}
Using guarded recursion in the third equality.
\end{proof}

Combining this with Lemma~\ref{lem:dist:grD:to:ciD} we obtain a distributive law $T\ciD \to \ciD T$ for all $T$ as in Theorem~\ref{thm:seq-lifting-distlaw}.

\begin{rem}
 Since $\grD$ is a commutative monad, we already know from Manes and Mulry \cite{ManesMulry2007} and Parlant \cite{parlant-thesis} that there is a distributive law in the case where $\bb{T}$ only has \emph{linear} equations. We can extend this linearity requirement here to allow duplications of variables, as long as there are equally many duplicates on either side of each equation.
\end{rem}

\begin{exa}
The sequential distributive law successfully combines the delay monad with the binary tree monad, the list monad, and the multiset monad, resulting in the monads $\grD B$, $\grD L$, and $\grD M$, respectively.
\end{exa}

\section{Idempotent Equations}
\label{sec:nogo-idemp}

This section studies distributive laws $T\ciD \rightarrow \ciD T$ for $T$ an algebraic monad with an idempotent binary operation ``$\binop$''. Since idempotency is not a balanced equation, as remarked after Proposition~\ref{prop:seq-eq-preservation}, sequential distribution does not respect it, and so neither parallel nor sequential distribution define distributive laws in this case. Idempotency turns out to be a tricky equation: We first show two examples of such a theory $T$ where no distributive law $T\ciD \rightarrow \ciD T$ is possible, then a theory where it is, and finally we show that no
distributive law of type $T\grD \rightarrow \grD T$ is possible. First observe the following.

\begin{lem} \label{lem:binopOne}
 Let $T$ be an algebraic monad with an idempotent binary operation $\binop$ and let $\distlaw : T\ciD \rightarrow \ciD T$ be a distributive law.
 There exist binary $T$-operations $\binopOne$ and $\binopPrime$ such that for any $T$-model $X$, the lifting of $\binop$ to $\ciD X$ satisfies
\begin{equation}\label{eq:bopOne-proof}
  \binop(\step x, \step y)  = \step (\binopOne(x,y)),
\end{equation}
and either
\begin{equation}\label{eq:bopPrime1}
  \binop(\step x, y) = \step (\binopPrime(x,y)) \quad \text{and} \quad \binopPrime(x, \step y) = \binopOne(x,y) 
\end{equation}
or
\begin{equation}\label{eq:bopPrime2}
  \binop(\step x, y) = \binopPrime(x,y) \quad \text{and} \quad \binopPrime(x, \step y) = \step (\binopOne(x,y)). 
\end{equation}
Furthermore, $\binopOne$ is idempotent, and associative and/or commutative whenever $\binop$ is so.
\end{lem}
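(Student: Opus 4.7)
The plan is to extract $\binopOne$ and $\binopPrime$ as explicit $T$-terms in two variables by analysing $\binop$ at the free $T$-model $T(2)$ on two generators, then to propagate the resulting identities to arbitrary $T$-models by naturality and to arbitrary elements of $\ciD X$ by applying $\mu^\ciD$. Throughout, I use that by Beck's Theorem~\ref{th:beck:dist-lift} $\distlaw$ lifts $\ciD$ to an endofunctor on $\EM{T}$, so that $\now$ and $\mu^\ciD$ are $T$-algebra morphisms between the lifted algebras; I also use the underlying-set naturality $\ciD f \circ \step = \step \circ \ciD f$, together with the computations $\mu^\ciD \circ \step = \step \circ \mu^\ciD$ and $\mu^\ciD \circ \now = \id$ built into the delay monad.

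First I would extract $\binopOne$. Let $z := \binop(\step(\now 0), \step(\now 1)) \in \ciD T(2)$ and let $f : T(2) \to T(1)$ be the $T$-algebra map collapsing both generators. Since $\binop$ is a component of the lifted $T$-algebra structure and $\ciD f$ is a $T$-algebra morphism, $\ciD f(z) = \binop(\step(\now *), \step(\now *)) = \step(\now *)$ by idempotency. A case analysis on $z \in \ciD T(2) \iso T(2) + \ciD T(2)$ shows that $z$ must have the shape $\step(\now t)$ for some $t \in T(2)$: if $z = \now t'$ then $\ciD f(z) = \now(f t') \neq \step(\now *)$, while $z = \step(\step z'')$ gives $\ciD f(z) = \step(\step(\ciD f(z''))) \neq \step(\now *)$. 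Setting $\binopOne := t$ and invoking naturality along arbitrary $T$-algebra maps $g : T(2) \to Y$, I obtain $\binop(\step(\now y_1), \step(\now y_2)) = \step(\binopOne(y_1, y_2))$ for all $y_i \in Y$. The extension to arbitrary $x_1, x_2 \in \ciD Y$ is then obtained by applying this identity at the $T$-algebra $\ciD Y$ with inputs $\now x_1, \now x_2 \in \ciD \ciD Y$ and then applying $\mu^\ciD_Y$.

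Next I would apply the same programme to $z' := \binop(\step(\now 0), \now 1) \in \ciD T(2)$, writing $z' = \step^k(\now s)$ for some $k \geq 0$ and $s \in T(2)$ and propagating to arbitrary $\ciD X$ via naturality and $\mu^\ciD$; this gives $\binop(\step x, y) = \step^k(s(x, y))$ for all $x, y \in \ciD X$. Specialising $y = \step y'$ and comparing with $\binop(\step x, \step y') = \step(\binopOne(x, y'))$ yields $\step(\binopOne(x, y')) = \step^k(s(x, \step y'))$. For $k \geq 2$, injectivity of $\step$ would force $\binopOne(x, y') = \step^{k-1}(s(x, \step y'))$; but evaluating the left-hand side at $x = \now u, y' = \now v$ gives $\now(\binopOne(u, v))$ (since $\binopOne$ is a $T$-term and $\now$ is a $T$-algebra morphism), whereas the right-hand side starts with $\step$, contradicting the disjointness of $\now$ and $\step$ in $\ciD X \iso X + \ciD X$. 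Hence $k \leq 1$. When $k = 1$, setting $\binopPrime := s$ yields $\binop(\step x, y) = \step(\binopPrime(x, y))$, and $\binopPrime(x, \step y') = \binopOne(x, y')$ follows from $\step(\binopPrime(x, \step y')) = \step(\binopOne(x, y'))$ by injectivity; this is case~(\ref{eq:bopPrime1}). When $k = 0$, setting $\binopPrime := s$ yields $\binop(\step x, y) = \binopPrime(x, y)$ and $\binopPrime(x, \step y') = s(x, \step y') = \step(\binopOne(x, y'))$; this is case~(\ref{eq:bopPrime2}).

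Idempotency, commutativity and associativity transfer from $\binop$ to $\binopOne$ by combining the identity $\binop(\step x, \step y) = \step(\binopOne(x, y))$ with injectivity of $\step$: idempotency follows from $\step(\binopOne(x, x)) = \binop(\step x, \step x) = \step x$; commutativity from $\step(\binopOne(x, y)) = \binop(\step x, \step y) = \binop(\step y, \step x) = \step(\binopOne(y, x))$; and associativity from decomposing $\binop(\step x, \binop(\step y, \step z)) = \binop(\binop(\step x, \step y), \step z)$ in two ways. These identities hold on every $\ciD Y$; the corresponding equalities of $T$-terms follow because $\binopOne$ commutes with $\now$ (being a $T$-term) and $\now$ is injective, so the equations pull back to equalities in the free $T$-models $T(2)$ and $T(3)$. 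The main obstacle I anticipate is the step-count argument ruling out $k \geq 2$ in the analysis of $z'$: it rests on the observation that a $T$-term fed $\now$-inputs produces a $\now$-output, combined with the $\mu^\ciD$-propagation that lets us compare step-counts on both sides of the derived identity.
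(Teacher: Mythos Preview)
Your approach is essentially the paper's: extract $\binopOne$ as an element of $T(2)$ by evaluating $\binop$ on the generic pair in $\ciD T(2)$, constrain its shape via idempotency along $T(!)\colon T(2)\to T(1)$, and then transport the identity to arbitrary $T$-models first by naturality of the lifted $\ciD$ and then to arbitrary elements of $\ciD X$ by the $\mu^\ciD$-homomorphism trick. The treatment of the inherited equational properties of $\binopOne$ is likewise identical to the paper's.

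There is one genuine but easily repaired gap in your handling of $\binopPrime$. You write ``$z' = \step^k(\now s)$ for some $k\geq 0$'', but an element of $\ciD T(2)$ need not terminate, so constructively no such $k$ and $s$ are handed to you. The fix is to run the same two-level case split you already used for $z$: either $z' = \now s_0$, or $z' = \step(\now s_1)$, or $z' = \step(\step w_2)$ with $w_2$ arbitrary. Your contradiction for ``$k\geq 2$'' disposes of the third branch directly: propagation along $\ciD g$ and $\mu^\ciD$ still gives $\binop(\step x,y) = \step(\step(\text{something}))$ even without knowing $w_2$ is a $\now$, since both maps commute with $\step$; specialising $y=\step y'$ and comparing with $\step(\binopOne(x,y'))$ then forces $\binopOne(\now u,\now v)=\now(\binopOne(u,v))$ to begin with a $\step$, which is absurd. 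The remaining two branches are exactly (\ref{eq:bopPrime2}) and (\ref{eq:bopPrime1}). The paper's own argument here is terser (it simply asserts the dichotomy for $\binopPrimePrime(x,y):=\binop(\step x,y)$), so your explicit case analysis, reordered as above, is in fact a clarification.
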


\begin{proof}
To see that there exists a $T$-operation $\binopOne$ satisfying (\ref{eq:bopOne-proof}), consider the following two naturality diagrams, one for the unique map $! : 2 \to 1$ (left) and one for any $f : 2 \rightarrow X$ (right):
\[
 {\small
 \begin{tikzcd}
  \ciD T 2\times \ciD T 2 \ar{r}{\binop} \ar[swap]{d}{\ciD T ! \times \ciD T !} & 
  \ciD T 2 \ar{d}{\ciD T !} & &
  \ciD T 2\times \ciD T 2 \ar{r}{\binop} \ar[swap]{d}{\ciD T f \times \ciD T f} & 
  \ciD T 2 \ar{d}{\ciD T f}  \\
  \ciD T  1\times \ciD T 1 \ar{r}{\binop} & \ciD T 1 & &  
  \ciD T X\times \ciD T X \ar{r}{\binop} & \ciD T X 
 \end{tikzcd}
 }
\]

 Let $\eta^T$ be the unit of $T$, and write $\tt, \ff$ for the two elements of $2$, and $\star$ for the unique element of $1$. Then chasing the element $(\step(\now(\eta^T(\tt))), \step(\now(\eta^T(\ff))))$ along the down, then right path of the first naturality diagram yields (using idempotence of $\binop$):
\begin{align*}
 & \binop(\step(\now(\eta^T(\star))), \step(\now(\eta^T(\star)))) = \step(\now(\eta^T(\star))).
\end{align*}
So then by following the right, then down path of this naturality diagram:
\begin{align} \label{eq:bopOne:element}
\binop(\step(\now(\eta^T(\tt))), \step(\now(\eta^T(\ff)))) = & \step(\now(\binopOne(\tt, \ff))),
\end{align}
for some element $\binopOne(\tt,\ff)$ in $T(2)$ satisfying $T(!)(\binopOne(\tt,\ff)) = \eta^T(\star)$.
Such an element corresponds to an operation $T \times T \to T$, for which we shall also write $\binopOne$.
We then use the second naturality diagram to extend (\ref{eq:bopOne:element}) to any algebra\footnote{
Note that $\binopOne$ on the right hand side of (\ref{eq:bopOne:general}) should be interpreted in $X$ using the algebra structure.} $X$ for $T$, by choosing, for any $x,y : X$, $f(\tt) = x$ and $f(\ff) = y$:
\begin{equation} \label{eq:bopOne:general}
\binop(\step(\now(x)), \step(\now(y))) = \step(\now(\binopOne(x,y))).
\end{equation}

To conclude Equation~(\ref{eq:bopOne-proof}), we use that the multiplication $\mu$ for $\ciD$ is a homomorphism. That is, for any $x,y: \ciD X$:
\begin{align*}
 \mu(\binop(\step(\now(x)), \step(\now(y))))
& =  \binop(\mu(\step(\now(x))), \mu(\step(\now(y)))).
\end{align*}
The left hand side of this equation rewrites to:
\begin{align*}
 \mu(\binop(\step(\now(x)), \step(\now(y))))
& =  \mu(\step(\now(\binopOne(x,y)))) \\
& =  \step(\binopOne(x,y)),
\end{align*}
and the right hand side to:
\begin{align*}
  \binop(\mu(\step(\now(x))), \mu(\step(\now(y))))
  & =  \binop(\step(x), \step(y)),
\end{align*}
which, indeed, proves Equation~(\ref{eq:bopOne-proof}).

Idempotence, associativity, and commutativity of $\binopOne$ now follow from (\ref{eq:bopOne-proof}). We show idempotence. Below, we first use that $\binop$ is idempotent, then apply (\ref{eq:bopOne-proof}), and finally use that $\now$ is a $T$-algebra homomorphism.
\begin{align*}
  \step (\now x) & = \binop(\step (\now x), \step (\now x)) \\
   & = \step (\binopOne (\now x, \now x)) \\
   & = \step (\now (\binopOne(x, x)))
\end{align*}
By injectivity of $\step$ and $\now$, we can therefore conclude that $\binopOne(x, x) = x$ in all $T$-algebras. In other words, $\binopOne$ is idempotent. Associativity and commutativity are shown similarly.

Finally, we want to show that there exists a $T$-operation $\binopPrime$ such that either Equations~(\ref{eq:bopPrime1}) or (\ref{eq:bopPrime2}) hold.
To see this, define for any $T$-algebra $X$ an operation $\binopPrimePrime: \ciD X \times \ciD X \to \ciD X$ by:
 \[
  \binopPrimePrime(x,y) = \binop(\step(x),y).
 \]
 We then have: $\binopPrimePrime(x, \step(y)) = \binop(\step(x), \step(y)) = \step(\binopOne(x,y))$.
 So it follows that: either $\binopPrimePrime(x,y) = \step(\binopPrime(x,y))$ for some operation $\binopPrime$ on $T$ such that:
 \[
 \binopPrime(x,\step(y)) = \binopOne(x,y),
 \]
 or $\binopPrimePrime(x,y) = \binopPrime(x,y)$ for some operation $\binopPrime$ on $T$ such that:
 \[
 \binopPrime(x,\step(y)) = \step(\binopOne(x,y)). \qedhere
 \]
\end{proof}

Lemma~\ref{lem:binopOne} provides the key step in ruling out distributive laws for both the finite powerset monad over the delay monad and the finite distributions monad over the delay monad.

\subsection*{No Distributive Law for Powerset}
In Example~\ref{ex:boom-monads-alg}, we introduced finite powerset as the free monad of the algebraic theory of semilattices. While this monad exists as a consequence of
the general constructions of Appendix~\ref{app:encoding:alg:th}, it is convenient to work with a more direct construction as a HIT generated by singletons, $\cup$, $\emptyset$,
and paths for the equations as well as set-truncation~\cite{Geuvers2018}. Using this, one gets the usual induction principle for $\Pfin$ as a consequence of HIT-induction:
If $\phi : \Pfin(X) \to \Prop$, to prove that $\phi(A)$ holds for all $A$, it suffices to show $\phi(\emptyset)$, $\phi(\{x\})$ for all $x$, and that $\phi(A)$ and $\phi(B)$
implies $\phi(A\cup B)$.

The proof of Proposition~\ref{prop:nodistpowerset} below uses contradictions of various kinds. As we are in constructive type theory, we would like to briefly explain how we arrive at these contradictions. These methods are also relevant for the proof of Proposition~\ref{prop:nodistprobability}. 

\begin{rem}\label{rem:contr}
 We derive contradictions in two ways.
\begin{itemize}
      \item We conclude that a certain equation must be true in all $\Pfin$- (or $\Dfin$)-algebras. In this case, we have an obvious algebra in mind where the equation is provably false. For instance, the \\[0.5ex]
          \begin{minipage}{0.80\textwidth}
           equation $x \cup y = x$ for the union operation of $\Pfin$. Take the $\Pfin$-algebra consisting of 4 distinct elements, arranged in a semilattice. We interpret $\emptyset$ as the bottom element, $0$, and $\cup$ as the least upper bound. The four element set can be constructed as a $4$-fold sum of the unit type by itself, and so has decidable equality. 
      The equation $1 \cup 2 = 3$ therefore implies false.\\
          \end{minipage}
          \begin{minipage}{0.15\textwidth}
          \begin{center}
          \scalebox{0.55}{
            \begin{tikzcd}[ampersand replacement=\&]
             \& 3 \& \\
             1 \ar[ru, dash] \ar[rd, dash] \&  \& 2 \ar[lu, dash] \ar[ld, dash] \\
             \& 0 \& \\
            \end{tikzcd}
          }
          \end{center}
          \end{minipage}
        
  \item We conclude that $\step x = \now y$ for some $x$ and $y$. This gives a contradiction because $\step$ and $\now$ are the $\inr$ and $\inl$ injections into a coproduct, which are never equal \cite[Eq.2.12.3]{hottbook}. 
\end{itemize}
\end{rem}

\begin{prop}\label{prop:nodistpowerset}
 There is no distributive law $\Pfin \ciD \to \ciD\Pfin$ for $\Pfin$ the finite powerset functor.
\end{prop}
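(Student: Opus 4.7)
My plan is to apply Lemma~\ref{lem:binopOne} to the union operation $\cup$ of $\Pfin$, which is idempotent, commutative, and associative, and then rule out all possibilities for the resulting operations $\binopOne$ and $\binopPrime$ by case analysis.

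By Lemma~\ref{lem:binopOne}, $\binopOne$ must itself be idempotent and commutative. The natural binary operations on $\Pfin$ correspond to elements of $\Pfin(\{a,b\})$, giving four options: the constant $\emptyset$, the two projections $\pi_1$ and $\pi_2$, and $\cup$. The constant $\emptyset$ is not idempotent on any $\Pfin$-algebra with more than one element, and the projections are not commutative, so $\binopOne$ must be $\cup$.

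For $\binopPrime$ I would split on the two cases of Lemma~\ref{lem:binopOne} and on the four possibilities for $\binopPrime$. In each of the resulting eight subcases, the defining equations force either an equation that provably fails in the four-element semilattice of Remark~\ref{rem:contr} (such as $x = x \cup y$ when $\binopPrime = \pi_1$ in Case~(\ref{eq:bopPrime1})), or an equality of the form $\step z = \now w$, which is impossible by injectivity of the coproduct injections. The most delicate subcase is $\binopPrime = \cup$ in Case~(\ref{eq:bopPrime2}): here the two equations read $\cup(\step x, y) = \cup(x, y)$ and $\cup(x, \step y) = \step(\cup(x, y))$, and applying commutativity of $\cup$ on $\ciD\Pfin(X)$ to the first equation gives $\cup(x, \step y) = \cup(x, y)$. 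Combined with the second equation this yields $\cup(x, y) = \step(\cup(x, y))$, and instantiating $x = y = \now \emptyset$ produces $\now \emptyset = \step(\now \emptyset)$, a contradiction. The main obstacle is simply verifying all eight subcases uniformly; once the right test element is identified in each, it follows quickly from idempotence, commutativity, or a direct $\now$/$\step$ clash.
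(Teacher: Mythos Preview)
Your proposal is correct and follows essentially the same approach as the paper: apply Lemma~\ref{lem:binopOne} to $\cup$, pin down $\binopOne = \cup$ from the four elements of $\Pfin(2)$ using idempotence and commutativity, and then eliminate the remaining possibilities for $\binopPrime$ by case analysis. The paper organises the elimination of $\binopPrime \in \{\emptyset, \pi_1, \pi_2\}$ slightly more uniformly by first computing $\binopPrime(\now x, \step(\now y))$ once (Equation~(\ref{eq:binopPrime-now-step})) and comparing, rather than treating the two cases of Lemma~\ref{lem:binopOne} separately, and it derives the final contradiction as $\step(x) = \step^2(x)$ instead of your $\now\emptyset = \step(\now\emptyset)$; but these are cosmetic differences, and your argument for the ``delicate'' subcase is sound (the symmetric argument disposes of $\binopPrime = \cup$ in Case~(\ref{eq:bopPrime1}) the same way).
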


\begin{proof}
 Assuming that a distributive law exists, we can apply Lemma~\ref{lem:binopOne} to the union operation $\cup$ of $\Pfin$. We see there are only 
 four possible cases for the obtained $\binopOne(x,y)$ and $\binopPrime(x,y)$: $\emptyset, x, y$ and $x \cup y$ (this can be proved by HIT-induction). For $\binopOne$, the first three possibilities are immediately ruled out by the fact that $\binopOne$ is idempotent and commutative (since $\cup$ is). Therefore we must have $\binopOne(x,y) = x \cup y$.

We show that also $\binopPrime(x,y) = x \cup y$.

First, notice that:
\begin{align}\label{eq:binopPrime-now-step}
  \binopPrime(\now x, \step( \now y)) = & \begin{cases}
        \now (x \cup y), & \mbox{in case of Equation~(\ref{eq:bopPrime1})}, \\
        \step(\now (x \cup y)), & \mbox{in case of Equation~(\ref{eq:bopPrime2})}.
      \end{cases}
\end{align}
This follows from, in case of (\ref{eq:bopPrime1}):
\begin{align*}
\binopPrime(\now x, \step( \now y)) & = \binopOne(\now x, \now y) \\
                                    & = \now \binopOne(x, y)  \\
                                    & = \now (x \cup y).
\end{align*}
and in case of (\ref{eq:bopPrime2}):
\begin{align*}
\binopPrime(\now x, \step( \now y)) & = \step (\binopOne(\now x, \now y)) \\
                                    & = \step (\now \binopOne(x, y))  \\
                                    & = \step (\now (x \cup y)).
\end{align*}

Equation~(\ref{eq:binopPrime-now-step}) allows us to rule out $\binopPrime(x,y) = \emptyset$, $\binopPrime(x,y) = x$, and $\binopPrime(x,y) = y$.
 
If $\binopPrime(x,y) = \emptyset$, then $\binopPrime(\now x, \step( \now y)) = \emptyset = \now \emptyset$, because $\now$ is a $\Pfin$-algebra homomorphism.
So from (\ref{eq:binopPrime-now-step}), we conclude either $\now(x \cup y) = \now \emptyset$, 
which in turn implies that $x \cup y = \emptyset$ in all $\Pfin$-algebras, 
or $\step (\now(x \cup y)) = \now \emptyset$. Both of these are contradictions, as explained in Remark~\ref{rem:contr}.

If $\binopPrime(x,y) = x$, then $\binopPrime(\now x, \step( \now y)) = \now x$, and from (\ref{eq:binopPrime-now-step}) we conclude $\now x = \now (x \cup y)$ and hence $x \cup y = x$ in all $\Pfin$-algebras in case of (\ref{eq:bopPrime1}), or $\now x = \step (\now (x \cup y))$ in case of (\ref{eq:bopPrime2}), both of which are again contradictions.

If $\binopPrime(x,y) = y$, then $\binopPrime(\now x, \step( \now y)) = \step (\now y)$, and from (\ref{eq:binopPrime-now-step}) we conclude $\step (\now y) = \now(x \cup y)$ in case of (\ref{eq:bopPrime1}), or $\step (\now y) = \step (\now (x \cup y))$ in case of (\ref{eq:bopPrime2}), which then implies that $y = x \cup y$ in all $\Pfin$-algebras. So again, both cases lead to a contradiction.

So we conclude that also $\binopPrime(x,y) = x \cup y$. From this we now prove                   
\[
\step(x) \cup y = \step(x \cup y).
\]
In the case of (\ref{eq:bopPrime1}), this is the first equation of (\ref{eq:bopPrime1}), where we use that $\binopPrime(x, y) = x \cup y$. In the case of (\ref{eq:bopPrime2}), it is proved as follows
\[
\step(x) \cup y = y \cup \step(x) = \binopPrime(y, \step(x)) = \step(\binopOne(y,x)) = \step(y \cup x) = \step(x \cup y).
\]

Finally the contradiction proving the proposition can be obtained as follows:
\[
  \step(x) = \step(x) \cup \step (x) = \step(x \cup \step(x)) = \step(\step x \cup x) = \step^2(x).  \qedhere
\]
\end{proof}

\subsection*{No Distributive Law for Probability Distributions}
We now turn to the probability distributions monad. Again, we formalise this monad as a HIT, $\Dfin$, using the constructors and equalities of the theory of convex algebras~\cite{Stassen2024}. For the unit interval, we restrict to the rational unit interval, which we also denote by $\II$. 
This provides us with decidable equality on $\II$, which is essential to our proof.

\begin{prop}\label{prop:nodistprobability}
 There is no distributive law $\Dfin \ciD \to \ciD\Dfin$ for $\Dfin$ the finite distributions functor.
\end{prop}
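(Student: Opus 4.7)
The plan is to adapt the proof of Proposition~\ref{prop:nodistpowerset}, using the idempotent commutative operation $x \oplus_{1/2} y$ in place of $x \cup y$. I would assume for contradiction that a distributive law $\distlaw : \Dfin \ciD \to \ciD \Dfin$ exists, and apply Lemma~\ref{lem:binopOne} with $\binop = \oplus_{1/2}$ to obtain binary $\Dfin$-operations $\binopOne$ and $\binopPrime$, with $\binopOne$ both idempotent and commutative.

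First I would characterise $\binopOne$ and $\binopPrime$. By naturality, every binary $\Dfin$-operation is induced by an element of $\Dfin(2)$, hence has the form $x \oplus_q y$ for some rational $q \in \IIc$ (the endpoints $q \in \{0,1\}$ giving the two projections). Commutativity of $\binopOne$ rules out both projections and forces $q = 1-q$, so $\binopOne(x,y) = x \oplus_{1/2} y$, and in particular $\step u \oplus_{1/2} \step v = \step(u \oplus_{1/2} v)$. Writing $\binopPrime(x,y) = x \oplus_r y$, I would then determine $r$ by computing $(\step(\now a) \oplus_{1/2} \now b) \oplus_{1/2} \now c$ in the lifted $\Dfin$-structure on $\ciD(\Dfin\{a,b,c\})$ in two ways: once by applying the case~(\ref{eq:bopPrime1}) or case~(\ref{eq:bopPrime2}) formula of Lemma~\ref{lem:binopOne} twice, and once by first reducing to $\step(\now a) \oplus_{1/4} \now(b \oplus_{1/3} c)$ using the convex associativity axiom in $\ciD X$ and then applying the lemma to $\oplus_{1/4}$. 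Mixed cases (one case for $\oplus_{1/2}$ and the other for $\oplus_{1/4}$) immediately give $\step = \now$ contradictions; in the two matching cases, comparing the resulting rational mixing parameters via convex associativity in the free $\Dfin$-algebra on three generators pins $r$ to $1/2$. The degenerate possibilities $r \in \{0,1\}$ can be excluded separately by instantiating the lemma's equations with $\now$-images of distinct generators of $\Dfin\{a,b\}$.

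With $\binopOne = \binopPrime = {\oplus_{1/2}}$ established, the final contradiction follows uniformly. In case~(\ref{eq:bopPrime1}) the lemma gives $\step u \oplus_{1/2} v = \step(u \oplus_{1/2} v)$ and $u \oplus_{1/2} \step v = u \oplus_{1/2} v$; using convex commutativity and the first equation, $u \oplus_{1/2} \step v = \step v \oplus_{1/2} u = \step(v \oplus_{1/2} u) = \step(u \oplus_{1/2} v)$, so combining with the second yields $u \oplus_{1/2} v = \step(u \oplus_{1/2} v)$. Specialising to $u = \now u_0$, $v = \now v_0$ contradicts the disjointness of the $\now$ and $\step$ inclusions in $\ciD X \iso X + \ciD X$. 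Case~(\ref{eq:bopPrime2}) is symmetric: one has $\step u \oplus_{1/2} v = u \oplus_{1/2} v$ and $u \oplus_{1/2} \step v = \step(u \oplus_{1/2} v)$, and the analogous commutativity manipulation again produces $u \oplus_{1/2} v = \step(u \oplus_{1/2} v)$.

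The hard part will be step~2, identifying $\binopPrime$: binary operations on $\Dfin$ form a one-parameter family indexed by $\IIc \cap \mathbb{Q}$ rather than a finite list, so the simple enumeration used in the proof of Proposition~\ref{prop:nodistpowerset} does not suffice. The convex associativity axiom, combined with decidable equality of rationals in the free algebra on three generators, supplies precisely the extra constraint needed to force $r = 1/2$, and this is the step where the full structure of the convex-algebra theory (beyond idempotence and commutativity of $\oplus_{1/2}$) genuinely enters the argument.
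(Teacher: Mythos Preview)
Your proposal is correct and reaches the same contradiction, but by a genuinely different route from the paper's proof. Both proofs agree up through the identification $\binopOne = \oplus_{1/2}$ and the exclusion of the projection cases for $\binopPrime$. From there they diverge. You pin $\binopPrime = \oplus_{1/2}$ exactly, by applying Lemma~\ref{lem:binopOne} a second time to $\oplus_{1/4}$ and computing $(\step(\now a)\oplus_{1/2}\now b)\oplus_{1/2}\now c$ two ways via convex associativity; matching the resulting coefficients in $\Dfin(3)$ forces $r=1/2$, after which the final contradiction is a two-line commutativity argument yielding $u\oplus_{1/2}v = \step(u\oplus_{1/2}v)$. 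The paper instead \emph{never} determines $p$: it leaves $\binopPrime = \oplus_p$ arbitrary, and obtains the contradiction by choosing $N$ with $(1/2)^N \le 1-p$, rewriting $\step(\now x)\oplus_{1-p}\now y$ as $\step(\now x)\oplus_{(1/2)^N} z$, and proving by induction on $n$ that $\step x'\oplus_{(1/2)^n} z' = \step(x'\oplus_{p^n} z')$, so that the same term is shown to be both a $\now$ and a $\step$. Your approach trades the paper's inductive tail for an extra invocation of the lemma and a coefficient-matching step; it gives a cleaner endgame but requires coordinating the case splits (\ref{eq:bopPrime1})/(\ref{eq:bopPrime2}) for two different operations. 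Both arguments make essential use of convex associativity; they just deploy it at different points.
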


\begin{proof}[Proof]
 We apply Lemma~\ref{lem:binopOne} to $\DistChoice{\frac{1}{2}}{\step(x)}{\step(y)}$. There are now only three possible cases for the obtained $\binopOne(x,y)$ and $\binopPrime(x,y)$: $x, y,$ and $\DistChoice{p}{x}{y}$ for some $p \in \II$ (this can be proved by HIT-induction). Just like in the case for powerset, we know that $\binopOne$ is commutative because $\DistChoice{\frac{1}{2}}{}{}$ is so, which immediately rules out $\binopOne(x,y) = x$ and $\binopOne(x,y) = y$. Hence $\binopOne(x, y) = \DistChoice{p}{x}{y}$ for some $p \in \II$. In fact, we must have $\binopOne(x, y) = \DistChoice{\frac{1}{2}}{x}{y}$. To see this, consider the closed unit interval $\IIc$ as a $\Dfin$ model, where the terms $\DistChoice{q}{x}{y}$ are interpreted as $q * x + (1-q) * y$. Because of commutativity of $\binopOne$, we then have:
 \begin{align*}
 p & = \DistChoice{p}{1}{0} = \binopOne(1, 0) = \binopOne(0, 1) = \DistChoice{p}{0}{1} = 1 - p
\end{align*}    
Since $\II$ has decidable equality, we know that $p = 1 - p$ is only true for $p = \frac{1}{2}$. Therefore, we must have $\binopOne(x,y) = \DistChoice{\frac{1}{2}}{x}{y}$.

For $\binopPrime(x,y)$, the possibilities $\binopPrime(x,y) = x$ and $\binopPrime(x,y) = y$ are also ruled out in the same way as they were in the proof for powerset. We conclude that $\binopPrime(x,y) = \DistChoice{p}{x}{y}$ for some $p \in (0,1)$, but now we do not necessarily have $p = \frac{1}{2}$.

Putting this information in Equations~(\ref{eq:bopPrime1}) and (\ref{eq:bopPrime1}) tells us that either
\begin{equation}\label{eq:bopPrime1-DistChoice}
  \DistChoice{\frac{1}{2}}{\step x}{y} = \step (\DistChoice{p}{x}{y}) \quad \text{and} \quad \DistChoice{p}{x}{\step y} = \DistChoice{\frac{1}{2}}{x}{y} 
\end{equation}
or
\begin{equation}\label{eq:bopPrime2-DistChoice}
  \DistChoice{\frac{1}{2}}{\step x}{y} = \DistChoice{p}{x}{y} \quad \text{and} \quad \DistChoice{p}{x}{\step y} = \step (\DistChoice{\frac{1}{2}}{x}{y}). 
\end{equation}

We show that the equations of (\ref{eq:bopPrime1-DistChoice}) contradict each other. Consider the term 
\[\DistChoice{1-p}{\step (\now x)}{\now y}.\]
We see that the second equation of (\ref{eq:bopPrime1-DistChoice}), via symmetry, implies that 
\[\DistChoice{1-p}{\step (\now x)}{\now y} = \DistChoice{p}{\now y}{\step(\now x)}= \DistChoice{\frac{1}{2}}{\now y}{\now x} = \now ( \DistChoice{\frac{1}{2}}{x}{y}).\]
We will now use the first equation to show that $\DistChoice{1-p}{\step (\now x)}{\now y}$ in fact takes a step.

First, we use that there exists an $N$ such that $(\frac{1}{2})^N \leq (1-p)$. Then we can write:
\begin{align*}
 & \DistChoice{1-p}{\step (\now x)}{\now y} = \DistChoice{(\frac{1}{2})^N}{\step(\now x)}{z}, \\
& \text{where } z = \DistChoice{\frac{(1-p) - (\frac{1}{2})^N}{1 - (\frac{1}{2})^N}}{(\step (\now x))}{(\now y)}. 
\end{align*}
We show by induction on $n$ that for all $n$ and all $x', z'$:
\[
\DistChoice{(\frac{1}{2})^n}{\step x'}{z'} = \step (\DistChoice{p^n}{x'}{z'}).
\]
For $n = 1$, this is the first equation of (\ref{eq:bopPrime1-DistChoice}). For the induction step, we rewrite $\DistChoice{(\frac{1}{2})^{n+1}}{\step x'}{z'}$, apply the first equation of (\ref{eq:bopPrime1-DistChoice}), and finally the induction hypothesis. 
\begin{align*}
\DistChoice{(\frac{1}{2})^{n+1}}{\step x'}{z'} & = \DistChoice{(\frac{1}{2})^n}{(\DistChoice{\frac{1}{2}}{\step x'}{z'})}{z'} \\
& = \DistChoice{(\frac{1}{2})^n}{\step(\DistChoice{p}{x'}{z'})}{z'} \\
& = \step(\DistChoice{p^n}{(\DistChoice{p}{x'}{z'})}{z'}) \\
& = \step(\DistChoice{p^{n+1}}{x'}{z'})
\end{align*}

Therefore, we have $\DistChoice{1-p}{\step (\now x)}{\now y} = \DistChoice{(\frac{1}{2})^N}{\step(\now x)}{z} = \step(\DistChoice{p^N}{\now x}{z})$, contradicting our earlier conclusion that $\DistChoice{1-p}{\step (\now x)}{\now y} = \now (\DistChoice{\frac{1}{2}}{x}{y})$. (See Remark~\ref{rem:contr}).
Proving that the two equations of (\ref{eq:bopPrime2-DistChoice}) contradict each other works similarly.
\end{proof}

\subsection*{Distributive Laws in General: Yes and No.}
Lemma~\ref{lem:binopOne} does not rule out distributive laws for \emph{all} monads with idempotent binary operations.
\begin{exa} \label{ex:dist:law:idempotent}
Let $A$ be the algebraic theory with one idempotent binary operation $*$ and one unary operator $!$, with no further equations. Let $T$
be the monad generated by $A$. There is a distributive law $\distlaw : T\ciD \to \ciD T$ given by the following clauses
\begin{align*}
 ! (\step (x)) & = x & \step(x) * y & = \step(x * (! y)) & x * \step(y) & = \step((!x) * y).
\end{align*}
Note in particular, that $\step(x) * \step(y) = \step(x * (!(\step(y)))) = \step( x * y)$. Even though this seems to implement parallel computation, the unary $!$ ensures that we cannot `hide' steps by putting them in different layers of $DDX$, like we did in the proof of Theorem~\ref{thm:mogelvezzo}: remember that applying $\mu^D \circ D \distlaw \circ \distlaw$ to $\step \now (\now x) * \now (\step \now y)$ resulted in two steps: $\step \step \now (x * y)$, because the steps were not executed in parallel. But here, $!$ acts as a messenger between the two layers of $D$, cancelling the second step:
\begin{align*}
\step \now (\now x) * \now (\step \now y) & = \step( (\now (\now x)) * (! \now (\step \now y))) \\
& = \step ((\now (\now x)) * (\now ! (\step \now y))) \\
& = \step \now ((\now x) * ! (\step \now y)) \\
& = \step \now ((\now x) * (\now y)) \\
& = \step \now (\now (x * y)). 
\end{align*}
This example can be extended to $*$ associative, if
the equation $!(x * y) = (!x) * (!y)$ is added.
\end{exa}

Example~\ref{ex:dist:law:idempotent} cannot be adapted to give a distributive law $T\grD \rightarrow \grD T$, because in the corresponding version of the first rule $! (\step^\kappa (x)) = x$, the $x$ on the left hand side would have type $\later^\kappa (\grD X)$, but needs type $\grD X$ on the right hand side. In fact, no such distributive law is possible.

\begin{thm}[No-Go Theorem]
 \label{thm:gen:no:go}
Let $T$ be a monad with a binary algebraic operation that is commutative and idempotent.
Then there is no distributive law of type $T\grD \rightarrow \grD T$.
\end{thm}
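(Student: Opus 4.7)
The plan is to assume that a distributive law $\distlaw\colon T\grD \to \grD T$ exists and derive a contradiction via Beck's theorem. By Theorem~\ref{th:beck:dist-lift}, such a distributive law corresponds to a lifting of $\grD$ to the Eilenberg-Moore category of $T$, so for every $T$-algebra $X$ the set $\grD X$ carries a $T$-algebra structure in which $\now^\kappa$, $\step^\kappa$, and the multiplication $\mu^\grD$ of $\grD$ are all $T$-algebra homomorphisms. Writing $*$ for the commutative idempotent binary operation of $T$, I would first use that $\step^\kappa\colon\later^\kappa\grD X \to \grD X$ is a $T$-homomorphism, together with the strength-induced $T$-algebra structure on $\later^\kappa\grD X$, to obtain the parallel formula for step--step products
\[
  \step^\kappa(x) \,*\, \step^\kappa(y) \;=\; \step^\kappa\bigl(\tabs\tickA\kappa\,(\tapp x \,*\, \tapp y)\bigr). \qquad (\dagger)
\]

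The next step is a case analysis on $\now^\kappa(x) * \step^\kappa(\tabs\tickA\kappa \now^\kappa(b))$, which sits in $\grD X \equi X + \later^\kappa\grD X$ and therefore is either of the form $\now^\kappa(s(x,b))$ or of the form $\step^\kappa(\tabs\tickA\kappa\,r(x,b))$. In both cases, naturality in $X$ over $T$-algebras, combined with the fact that idempotency of $*$ makes every constant function between $T$-algebras a $T$-homomorphism, would force $s$ (respectively $r$) to be built from a $T$-term in two variables and the $\now^\kappa, \step^\kappa$ constructors. Because $*$ is commutative and idempotent, the $T$-terms in two variables that remain to consider are sharply restricted (in the pure commutative-idempotent theory only $a$, $b$, and $a*b$ remain irreducible), so the case analysis is finitary.

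The contradiction in each subcase will come from applying the $T$-homomorphism property of $\mu^\grD\colon\grD\grD X \to \grD X$ to the element $\now^\kappa(\step^\kappa(\tabs\tickB\kappa\, \now^\kappa(x))) \,*\, \step^\kappa(\tabs\tickA\kappa\, \now^\kappa(\now^\kappa(b)))$ of $\grD\grD X$. Flattening first with $\mu^\grD$ and then multiplying using $(\dagger)$ produces $\step^\kappa(\tabs\tickA\kappa \now^\kappa(x * b))$, a single step containing $\now^\kappa(x*b)$. Multiplying first using the case's defining formula in $\grD X$ and then flattening produces a different specific expression whose comparison with the previous one, under the enumeration of $T$-terms for $s$ or $r$, yields either a direct clash $\now^\kappa(\cdots) = \step^\kappa(\cdots)$ or forces a $T$-equation such as $x = x * b$ that fails in the free $T$-algebra on the two generators $\{x,b\}$. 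The main obstacle, mirroring the difficulty in Proposition~\ref{prop:nodistpowerset}, will be to carry out this enumeration uniformly across every monad $T$ with a commutative idempotent operation; here commutativity and idempotency play the role that the semilattice axioms played in that proof.
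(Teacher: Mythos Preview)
Your proposal has two genuine gaps, and it misses the key technique that makes the paper's argument go through.

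\textbf{The formula $(\dagger)$ is unjustified.} A distributive law $T\grD\to\grD T$ lifts $\grD$ to the Eilenberg--Moore category of $T$: it makes $\now^\kappa$ and $\mu^{\grD}$ into $T$-homomorphisms, and $\grD f$ a $T$-homomorphism whenever $f$ is. It does \emph{not} follow that $\step^\kappa\colon\later^\kappa\grD X\to\grD X$ is a $T$-homomorphism for any particular $T$-structure on $\later^\kappa\grD X$; $\step^\kappa$ is not part of the monad data. What the paper can extract (Lemma~\ref{lem:binopOne}) is only that $\step(x)*\step(y)=\step(\binopOne(x,y))$ for \emph{some} $T$-operation $\binopOne$, and the whole difficulty is that $\binopOne$ need not equal $*$. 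Your $(\dagger)$ assumes it does, which short-circuits the problem.

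\textbf{The case analysis is not finitary.} The hypothesis is that $T$ \emph{has} a commutative idempotent binary operation, not that $T$ is presented by one. Hence $T$ may have many further operations, and the set of $T$-terms in two variables can be infinite. The enumeration you propose (``only $a$, $b$, $a*b$ remain'') is the semilattice-specific trick used in Proposition~\ref{prop:nodistpowerset}; it does not extend to general~$T$.

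\textbf{What the paper does instead.} The paper first passes from $\grD$ to $\ciD$ via Lemma~\ref{lem:dist:grD:to:ciD} and invokes Lemma~\ref{lem:binopOne} to obtain auxiliary operations $\binopOne,\binopPrime$ and a two-way case split. The crucial extra information is that the $\ciD$-law is \emph{induced} from a $\grD$-law, which yields a guarded (causal) equation such as
\[
\binop\bigl(\step^\kappa(\tabs\tickA\kappa \capp x),\,\capp y\bigr)\;=\;\step^\kappa\bigl(\tabs\tickA\kappa\,\binopPrime(\capp x,\capp y)\bigr).
\]
From this the contradiction is obtained by \emph{guarded recursion} (L\"ob induction): one proves $\later^\kappa\bot$ by assuming it, i.e.\ assuming $\step^\kappa(\tabs\tickA\kappa\now^\kappa(\tt))=\step^\kappa(\tabs\tickA\kappa\now^\kappa(\ff))$, and then using idempotency to open a $\binop$, the causal equation to push a $\step^\kappa$ out, the recursion hypothesis to swap $\tt$ for $\ff$ \emph{under the tick}, and finally commutativity to close the circle. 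No enumeration of $T$-terms is needed, and no equality between $\binopOne$ and $*$ is assumed. This L\"ob-style argument is precisely what is unavailable for the coinductive $\ciD$ (witness Example~\ref{ex:dist:law:idempotent}), and it is the ingredient your proposal is missing.
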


\begin{proof}
Given the algebraic monad $T$, presented by a theory with a binary operation $\binop$ that is commutative and idempotent, we assume that there is a distributive law of type $T\grD \rightarrow \grD T$, and derive a contradiction.

By Lemma \ref{lem:dist:grD:to:ciD}, such a distributive law induces a distributive law of type $T \ciD \rightarrow \ciD T$.
Then by Lemma \ref{lem:binopOne} we know that there exist binary operations $\binopOne$ and $\binopPrime$ such that for any $T$-model $X$, the lifting of $\binop$ to $\ciD X$ satisfies:
\begin{equation}\label{eq:stepstepbinopOne}
\binop(\step x, \step y)  = \step (\binopOne(x,y)),
\end{equation}
and either
\begin{align}\label{eq:nogoGR1}
   \binop(\step x, y)  & = \step (\binopPrime(x,y)) &
   \binopPrime(x, \step y)  & = \binopOne(x,y),
\end{align}
or
\begin{align}\label{eq:nogoGR2}
   \binop(\step x, y)  & = \binopPrime(x,y) &
   \binopPrime(x, \step y)  & = \step (\binopOne(x,y)).
\end{align}
We show that both cases lead to a contradiction.

First we assume that we are in case (\ref{eq:nogoGR1}).
The fact that the distributive law $\distlaw : T \ciD \rightarrow \ciD T$ is induced by a distributive law
$\distlaw^\kappa: T \grD \rightarrow \grD T$ means that for any $X$, and any $x : T(\ciD X)$
\[
 \capp{\distlaw (x)} = \distlaw^\kappa(T( \mathsf{ev}_\kappa) x)
\]
where $T( \mathsf{ev}_\kappa)$ is the functorial action of $T$ applied to the map $\mathsf{ev}_\kappa : \ciD X \to \grD X$,
mapping $y$ to $\capp y$. From this and (\ref{eq:nogoGR1}) it follows that for any $T$-model $X$, the
$T$-algebra structure on $\grD X$ induced by $\distlaw^\kappa$ satisfies
\begin{align} \label{eq:binop:causal}
   \binop(\step^\kappa (\tabs\alpha{\kappa}{(\capp x)}), \capp y)  & = \step^\kappa(\tabs{\alpha}{\kappa}(\binopPrime(\capp x,\capp y))).
\end{align}
for any $x,y : \ciD X$.

Now consider the case of $X = T(2)$. We write $\tt, \ff$ for the two elements of $2$ and, leaving the unit of $T$ implicit, also for
the corresponding two elements of $T(2)$. To arrive at a contradiction, we will prove $\later^\kappa \bot$ by guarded recursion.
Since the proof is parametric in $\kappa$, we can use it to prove $\forall\kappa . \later^\kappa \bot$, which implies
$\forall\kappa . \bot$, in turn implying $\bot$ since $\bot$ is clock irrelevant.

Note that since $\tt = \ff$ is equivalent to $\bot$,
by extensionality for $\later^\kappa$ and injectivity of $\step^\kappa$ and $\now^\kappa$, $\later^\kappa \bot$ is equivalent to
\begin{equation} \label{eq:step:tt:eq:ff}
\step^\kappa(\tabs\alpha{\kappa}{\now^\kappa(\tt)}) = \step^\kappa(\tabs\alpha{\kappa}{\now^\kappa(\ff)}),
\end{equation}
where the equality is between elements of $\grD T(2)$.
So we will prove (\ref{eq:step:tt:eq:ff}) by guarded recursion. First note
\begin{align*}
  \step^\kappa(\tabs\alpha{\kappa}{\now^\kappa(\tt)})
   & = \binop (\step^\kappa(\tabs\alpha{\kappa}{\now^\kappa(\tt)}), \step^\kappa(\tabs\alpha{\kappa}{\now^\kappa(\tt)})) \\
                 & = \step^\kappa (\tabs{\alpha}{\kappa}(\binopPrime(\now(\tt), \step^\kappa(\tabs\alpha{\kappa}{\now^\kappa(\tt)}))) \\
                 & = \step^\kappa (\tabs{\alpha}{\kappa}(\binopPrime(\now(\tt), \step^\kappa(\tabs\alpha{\kappa}{\now^\kappa(\ff)})))) \\
   & = \binop (\step^\kappa(\tabs\alpha{\kappa}{\now^\kappa(\tt)}), \step^\kappa(\tabs\alpha{\kappa}{\now^\kappa(\ff)})),
\end{align*}
using idempotency of $\binop$, (\ref{eq:binop:causal}), and the guarded recursion assumption. By a similar argument
\[
 \step^\kappa(\tabs\alpha{\kappa}{\now^\kappa(\ff)}) =
 \binop (\step^\kappa(\tabs\alpha{\kappa}{\now^\kappa(\ff)}), \step^\kappa(\tabs\alpha{\kappa}{\now^\kappa(\tt)})),
\]
and so by commutativity of $\binop$, (\ref{eq:step:tt:eq:ff}) follows as desired.

Suppose now (\ref{eq:nogoGR2}). Again we want to arrive at a contradiction by considering the $T$-algebra structures
on $\ciD X$ and $\grD X$ induced by one on $X$. %First 
Note that $\binopOne$ is commutative and idempotent by Lemma \ref{lem:binopOne}.

We show that
$\binopOne(x,y) = \binopOne(z,y)$ for distinct variables $x,y,z$ leads to a contradiction. This can be read as an equation
between elements in $T(3)$. Since we can substitute for $y$ in the assumed
equation we get
\begin{align*}
 x = \binopOne(x,x) = \binopOne(z,x) = \binopOne(x,z) = \binopOne(z,z) = z,
\end{align*}
which is a contradiction, since $x$ and $z$ are distinct. Note that this argument uses that $3$ has decidable equality.

Similarly, also the equation $\binopPrime(x, y) = \binopPrime(z, y)$ for distinct variables $x,y,z$ leads to a contradiction, since
by (\ref{eq:nogoGR2}) this leads to
\begin{align*}
 \step(\now \binopOne(x,y)) & = \step(\binopOne(\now(x), \now(y))) \\
 & = \binopPrime(\now(x), \step (\now(y))) \\
 & = \binopPrime(\now(z), \step (\now(y))) \\
 & =  \step(\now \binopOne(z,y)),
\end{align*}
and so $\binopOne(x,y) = \binopOne(z,y)$, which we have just proven inconsistent.

Similarly to the argument in the case of  (\ref{eq:nogoGR1}) we can show that the fact that the distributive law is induced by one on $T\grD \to \grD T$ implies
that the model structure on $\grD X$ for any $T$-model $X$ satisfies
\begin{equation} \label{eq:op:op':guarded}
\binop(\step^\kappa (\tabs{\alpha}{\kappa}{\capp x}), \capp y)  = \binopPrime(\capp x,\capp y),
\end{equation}
for any $x,y: \ciD X$.

We use the above to prove $\bot$ by guarded recursion. So assume $\later^\kappa\bot$. As we have seen, to prove $\bot$, it suffices to show
$\binopPrime(x, y) = \binopPrime(z, y)$ for distinct variables $x,y,z$. We prove that $\now^\kappa(\binopPrime(x, y)) = \now^\kappa(\binopPrime(z, y))$ as elements
of $\grD(T(3))$:
\begin{align*}
 \now^\kappa(\binopPrime(x, y))  & = \binopPrime(\now^\kappa(x), \now^\kappa(y)) \\
 & = \binop(\step^\kappa(\tabs{\alpha}{\kappa}{\now^\kappa(x)}), \now^\kappa(y)) \\
 & = \binop(\step^\kappa(\tabs{\alpha}{\kappa}{\now^\kappa(z)}), \now^\kappa(y)) \\
 & = \binopPrime(\now^\kappa(z), \now^\kappa(y)) \\
 & = \now^\kappa(\binopPrime(z, y)),
\end{align*}
where we have applied (\ref{eq:op:op':guarded}) to $\Lambda\kappa . \now^\kappa(x)$ and $\Lambda\kappa . \now^\kappa(y)$
in the second equality and used the assumption $\later^\kappa\bot$ in the third. 
\end{proof}

\begin{cor}\label{cor:no:go:continuation}
There is no distributive law of type $\mathcal{C}\grD \rightarrow \grD \mathcal{C}$, distributing the continuations monad $\mathcal{C}X = (X \rightarrow R) \rightarrow R$ over the guarded delay monad, whenever $R$ has a binary operation that is commutative and idempotent.
\end{cor}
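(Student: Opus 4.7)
The plan is to derive the corollary as a direct application of Theorem~\ref{thm:gen:no:go} to $T = \mathcal{C}$. The first step is to lift the given commutative and idempotent binary operation on $R$, call it $\binop$, to a commutative and idempotent binary algebraic operation on $\mathcal{C}$. Following the pointwise correspondence between algebraic operations on $\mathcal{C}$ and those on $R$ recalled earlier in this section (following Hyland, Plotkin and Power), I would define, for each set $X$,
\[
 \binop_{\mathcal{C}X}(f, g) \defeq \lambda k.\, (f\, k) \binop (g\, k),
\]
for $f, g : (X \to R) \to R$ and $k : X \to R$. Naturality in $X$ is immediate from the definition (unfold the action of $\mathcal{C}$ on morphisms), and the commutativity and idempotence of $\binop_{\mathcal{C}X}$ follow pointwise from the corresponding properties of $\binop$ on $R$.

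With this operation in hand, the corollary would follow from Theorem~\ref{thm:gen:no:go}: assuming the existence of a distributive law $\distlaw : \mathcal{C}\grD \to \grD\mathcal{C}$, apply the theorem with $T = \mathcal{C}$ and the binary operation $\binop_{\mathcal{C}X}$ to obtain a contradiction. Crucially, Theorem~\ref{thm:gen:no:go} is stated for any monad carrying a commutative and idempotent binary algebraic operation, rather than specifically for monads presented by equational theories, and its proof (through Lemma~\ref{lem:binopOne}) relies only on naturality of the operation together with the monad structure.

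The main subtlety I expect is checking that the internal arguments in the proof of Theorem~\ref{thm:gen:no:go} involving $T(2)$ and $T(3)$ remain valid for $T = \mathcal{C}$. Specifically, one needs the $\eta^{\mathcal{C}}$-images of distinct elements of $2$ and $3$ to be distinguishable in $\mathcal{C}(2)$ and $\mathcal{C}(3)$, which holds as long as there exist separating maps into $R$ -- automatic whenever $R$ contains at least two elements. In the degenerate case where $R$ is a singleton, $\mathcal{C}$ collapses to the constant functor to the terminal object and the statement of the corollary holds vacuously, so this edge case causes no trouble.
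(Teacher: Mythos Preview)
Your proposal is correct and matches the paper's approach: lift the commutative idempotent operation on $R$ pointwise to an algebraic operation on $\mathcal{C}$ (the paper simply cites Hyland et al.\ for this) and invoke Theorem~\ref{thm:gen:no:go}. Your extra scrutiny of whether $\eta^{\mathcal{C}}$ separates the generators in $\mathcal{C}(2)$ and $\mathcal{C}(3)$, and of the degenerate singleton-$R$ case, goes beyond the paper's two-line argument; one small quibble is that the singleton case is not literally vacuous (the hypothesis is satisfied), though the conclusion still holds there because the first unit axiom would force $\grD(!) : \grD X \to \grD 1$ to be constant, which it is not.
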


\begin{proof}
 If $R$ has a binary operation that is commutative and idempotent, so does $\mathcal C$~\cite{hyland2007}. The result now
 follows directly from Theorem~\ref{thm:gen:no:go}. 
\end{proof}

\section{Semi-Go Theorem: Up to Weak Bisimilarity}\label{sec:weak:bisim}

In the proof of Theorem~\ref{thm:gen:no:go} the failure of existence of distributive laws comes down to a miscounting of steps. This section shows that this is indeed
all that fails, and that parallel lifting defines a distributive law \emph{up to weak bisimilarity} for algebraic monads with no drop equations.
Weak bisimilarity is a relation on the coinductive delay monad, which relates computations that only differ by a finite number of steps. To make this precise, we have to work in a category of setoids, as simply quotienting $\ciD$ up to weak bisimilarity does not yield a monad [CUV19]. The objects are pairs
$(X,R)$, where $R$ is an equivalence relation on $X$, and morphisms are equivalence classes of maps $f$ between the underlying sets respecting the relations. Two such maps are equivalent if their values on equal input are related by the equivalence relation on the target type.

We first define a lifting of the coinductive delay monad $\ciD$ to the category of setoids. We do this via a similar relation
(taken from M{\o}gelberg and Paviotti~\cite{Paviotti:FPC:journal})
defined for the guarded delay monad, because that allows us to reason using guarded recursion. 
We write $\nextstep$ for $\step^\kappa \circ \nextop^\kappa : \grD X \to \grD X$. 

\begin{defiC}[\cite{Paviotti:FPC:journal}]\label{def:wbisim:gr}
  Let $X,Y$ be sets, and suppose $R : X \to Y \to \Prop$ is a relation. Define weak bisimilarity up to $R$, written
$\wbsimgr[R] : \grD X \to \grD Y \to \Prop$, by:
\begin{align*}
 \now^\kappa (x) \wbsimgr[R] y~\defeq \; & \exists (n : \N, y': Y) .
                                   y = (\nextstep)^n (\now^\kappa (y')) \text{ and } R(x,y'), \\
  x \wbsimgr[R] \now^\kappa(y)~\defeq \; & \exists (n : \N, x': X) .
                                  x = (\nextstep)^n (\now^\kappa(x')) \text{ and } R(x',y), \\
 \step^\kappa (x) \wbsimgr[R] \step^\kappa(y)~\defeq \; & \latbind\tickA\kappa (\tapp x \wbsimgr[R] \tapp y).
\end{align*}
\end{defiC}
Note that the two first cases both apply for $\now^\kappa(x) \wbsimgr[R] \now^\kappa (y)$, but that they are equivalent in that case.

\begin{defi} \label{def:wbisim:ci}
Let $R : X \to Y \to \Prop$ be a relation.
Define $ \wbsim[R]  : \ciD X \to \ciD Y \to \Prop$ as
\begin{align*}
 x \wbsim[R] y \defeq \forall\kappa . \capp x \wbsimgr[R] \capp y.
\end{align*}
\end{defi}

The above definition is an encoding (using guarded recursion) of the standard coinductive definition of weak bisimilarity: 
\begin{align*}
 \now (x) \wbsim[R] y~\defeq \; & \exists (n : \N, y': Y) .
                                   y = (\step)^n (\now (y')) \text{ and } R(x,y'), \\
  x \wbsim[R] \now(y)~\defeq \; & \exists (n : \N, x': X) .
                                  x = (\step)^n (\now(x')) \text{ and } R(x',y), \\
 \step (x) \wbsim[R] \step(y)~\defeq \; & x \wbsim[R] y.
\end{align*}
As one would expect, weak bisimilarity is closed under adding steps on either side:

\begin{lemC}[\cite{Paviotti:FPC:journal}] \label{lem:wbsimgr:step}
 If $x\wbsimgr[R] y$ then $\nextstep(x)\wbsimgr[R] y$.  As a consequence $x \wbsim[R] y$ implies $\step(x) \wbsim[R] y$. 
 If $\step^\kappa(x)\wbsimgr[R] y$ then ${\latbind\tickA\kappa (\tapp x\wbsimgr[R] y)}$.
\end{lemC}

\begin{proof}
 The proof of the first statement is by cases of $x$ and $y$. If $y = \now^\kappa(y')$, then the proof is easy. If $y = \step^\kappa(y')$ and $x = \now^\kappa(x')$, then there exists
 $n$ and $y''$ such that $(\nextstep)^n(\now^\kappa(y'')) = y$, and $R(x, y'')$. We must show that
 $\latbind\tickA\kappa (\now^\kappa(x) \wbsimgr[R] \tapp {y'})$, but since $\tapp{y'} = (\nextstep)^{n-1}(\now^\kappa (y''))$, this is easy. Finally, if
 $y = \step^\kappa(y')$ and $x = \step^\kappa(x')$ we must show that $\latbind\tickA\kappa (\step^\kappa(x') \wbsimgr[R] \tapp {y'})$ assuming
 $\latbind\tickA\kappa (\tapp{x'} \wbsimgr[R] \tapp {y'})$, which follows from guarded recursion using that tick-irrelevance implies
\begin{align*}
  \nextstep(\tapp{x'}) &= \step^\kappa(\tabs\tickB\kappa (\tapp{x'})) \\
  & = \step^\kappa(\tabs\tickB\kappa (\tapp[\tickB]{x'})) \\
  & = \step^\kappa(x')
\end{align*}

For the last statement, suppose $\step^\kappa(x)\wbsimgr[R] y$. Then by the first statement $\step^\kappa(x)\wbsimgr[R] \nextstep(y)$ which unfolds to
 $\latbind\tickA\kappa (\tapp x\wbsimgr[R] y)$.
\end{proof}

If $R$ is symmetric and reflexive, then the same properties hold for $\wbsimgr[R]$, but transitivity is not preserved.
In fact, if $\wbsimgr[\eq]$ were transitive, then one could prove that $\divergence^\kappa \wbsimgr[\eq] \now^\kappa(x)$ for any $x$ by guarded recursion,
where $\divergence^\kappa = \fix(\step^\kappa)$ is the divergent term:
Suppose $\later(\divergence^\kappa \wbsimgr[\eq] \now^\kappa(x))$. Then $\nextstep (\divergence^\kappa)\wbsimgr[\eq] \nextstep(\now^\kappa(x)))$, and so by 
Lemma~\ref{lem:wbsimgr:step} and transitivity also $\divergence^\kappa \wbsimgr[\eq] \now^\kappa(x)$. But,  
$(\Lambda\kappa. \divergence^\kappa) \wbsim[\eq] \now(x)$ does not hold, so $\wbsimgr[\eq]$ cannot be transitive. 
Danielson~\cite{Danielsson18} made a similar observation in the setting of sized types. 
On the other hand, $\wbsim[\eq]$ \emph{is} transitive:

\begin{lem}\label{lem:wbsim:trans}
 If $R$ is an equivalence relation, so is $\wbsim[R]$.
\end{lem}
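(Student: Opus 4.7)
The plan handles reflexivity, symmetry, and transitivity separately. The first two are routine; transitivity carries the substance, since, as the discussion preceding the lemma shows, one cannot reduce it to transitivity of $\wbsimgr[R]$, which in fact fails.

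For reflexivity and symmetry I would first establish the analogous properties of $\wbsimgr[R]$ by guarded recursion. For reflexivity, case on $u : \grD X$: if $u = \now^\kappa(x')$, the first clause of Definition~\ref{def:wbisim:gr} applies with $n = 0$ and reflexivity of $R$; if $u = \step^\kappa(d)$, the goal reduces to $\latbind\tickA\kappa(\tapp d \wbsimgr[R] \tapp d)$, which closes by the recursion hypothesis. Symmetry is parallel, exchanging the first two clauses and using symmetry of $R$. Both properties then lift to $\wbsim[R]$ pointwise in $\kappa$.

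For transitivity, the central ingredient is an auxiliary lemma: for every $n : \N$, clock $\kappa$, $u : \grD X$ and $v : \grD Y$, if $u \wbsimgr[R] v$ and $v = (\nextstep)^n(\now^\kappa(y'))$ for some $y'$, then $u = (\nextstep)^m(\now^\kappa(x'))$ for some $m, x'$ with $R(x', y')$. This is proved by induction on $n$, using disjointness and injectivity of the constructors $\now^\kappa$, $\step^\kappa$, and invoking the inductive hypothesis under $\latbind\tickA\kappa$ in the subcase $u = \step^\kappa(d)$. Given $x \wbsim[R] y$ and $y \wbsim[R] z$, to show $x \wbsim[R] z$ I would fix $\kappa$ and case on $\capp x$ and $\capp z$. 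If $\capp x = \now^\kappa(x')$, one application of the auxiliary lemma to $\capp x \wbsimgr[R] \capp y$ yields $\capp y = (\nextstep)^n(\now^\kappa(y'))$ with $R(x', y')$, and a second to $\capp y \wbsimgr[R] \capp z$ yields $\capp z = (\nextstep)^k(\now^\kappa(z'))$ with $R(y', z')$; transitivity of $R$ closes the case. The case $\capp z = \now^\kappa(z')$ is symmetric. Otherwise both are $\step^\kappa$: case further on $\capp y$; in the $\now^\kappa$ subcase apply the auxiliary lemma on both sides, and in the $\step^\kappa$ subcase reduce the goal to $\latbind\tickA\kappa(\tapp{d_x} \wbsimgr[R] \tapp{d_z})$ and close it by the guarded recursion hypothesis for transitivity, applied to the $\ciD$-predecessors of $x$, $y$, $z$.

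The main obstacle is ensuring that the guarded recursion delivers transitivity of $\wbsim[R]$ rather than reproducing the divergence trap that defeats transitivity of $\wbsimgr[R]$. The auxiliary lemma is what rescues the proof: it extracts genuine termination data from a hypothesis whose right-hand side is literally of the form $(\nextstep)^n(\now^\kappa(y'))$, and is only ever invoked in situations where the case hypotheses have forced the relevant side to be terminating. The spurious bisimilarity $\divergence^\kappa \wbsimgr[\eq] \now^\kappa(x)$, which could be obtained from a naive guarded transitivity, therefore never arises, because under the $\forall\kappa$ of $\wbsim[R]$ one can peel a $\step$ off $x, y, z$ simultaneously at the $\ciD$ level, converting guarded statements into statements about predecessors in $\ciD$ where the decidable case analysis on $X + \ciD X$ is available.
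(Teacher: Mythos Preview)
Your overall architecture matches the paper's: reflexivity and symmetry are routine, transitivity is the real content, a ``termination propagates'' lemma handles all cases where one side is a $\now$, and guarded recursion closes the step--step--step case. The gap is in your auxiliary lemma.

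You state the lemma at the $\grD$ level: for $u : \grD X$, if $u \wbsimgr[R] v$ and $v = (\nextstep)^n(\now^\kappa(y'))$, then $u = (\nextstep)^m(\now^\kappa(x'))$ for some $m,x'$. But the induction on $n$ you sketch does not go through. In the successor case with $u = \step^\kappa(d)$, the third clause of Definition~\ref{def:wbisim:gr} yields only $\latbind\tickA\kappa(\tapp d \wbsimgr[R] (\nextstep)^{n-1}(\now^\kappa(y')))$, and applying the inductive hypothesis under the tick gives
\[
  \latbind\tickA\kappa\bigl(\exists m',x'.\ \tapp d = (\nextstep)^{m'}(\now^\kappa(x')) \land R(x',y')\bigr).
\]
To conclude the existential for $\step^\kappa(d)$ you would need concrete witnesses $m,x'$, i.e.\ you would need to extract them from under $\later^\kappa$. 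There is no such principle: $\later^\kappa(\Sigma\dots) \to \Sigma\dots$ is not available, and this is exactly the phenomenon that makes $\wbsimgr[R]$ itself non-transitive.

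The paper avoids this by stating the termination lemma at the $\ciD$ level: if $x \wbsim[R] y$ and $x = \step^n(\now(x'))$ in $\ciD X$, then $y$ terminates to a related value. At that level one uses the coinductive unfolding $\ciD X \equi X + \ciD X$ to case on $y$; when $y = \step(y')$ one has an honest predecessor $y' : \ciD X$ and the equivalence $\step(x'') \wbsim[R] \step(y') \Leftrightarrow x'' \wbsim[R] y'$ (which holds because $\forall\kappa.\later^\kappa A \simeq \forall\kappa.A$), so the induction on $n$ proceeds with no $\later$ in sight. Your final paragraph correctly identifies that the case analysis must be done ``at the $\ciD$ level'', but your main argument fixes $\kappa$ and cases on $\capp x, \capp z$, which is the $\grD$ level and does not by itself produce $\ciD$-predecessors. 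Once both the auxiliary lemma and the case split are moved to $\ciD X$, your proof becomes exactly the paper's.
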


\begin{proof}
 We just show transitivity.  Say $x$ terminates to $x'$ if $x = \step^n (\now x')$. 
 It is easy to see 
 that if $x \wbsim[R] y$ and $x$ terminates to $x'$, then also $y$ terminates to a $y'$ related to $x'$ in $R$. 
 On the other hand, if 
 $x$ and $y$ both terminate to related elements then $x \wbsim y$. Now suppose $x \wbsim y$ and $y \wbsim z$.
  
 The proof now proceeds by guarded recursion to show that $\capp x \wbsimgr \capp z$.
 If one of $x,y,z$ is of the form $\now(w)$, then all of them terminate to related elements, implying that 
 $x \wbsim z$. Suppose now that 
\begin{align*}
 x & = \step(x') &
 y & = \step(y') &
 z & = \step(z') 
\end{align*}
 and $x' \wbsim[R] y'$ and $y' \wbsim[R] z'$. By guarded recursion, we then get
\[
 \latbind\tickA\kappa (\capp{x'} \wbsimgr[R] \capp{z'})
\]
which is equivalent to $\capp x \wbsimgr[R] \capp z$ as desired. 
\end{proof}

We note the following, which was also observed by Chapman et al~\cite{quotientingDelay}.

\begin{prop} \label{prop:monad:setoids}
 The mapping $\ciDsetoid(X,R) = (\ciD X, \wbsim[R])$ defines a monad on the category of setoids.
\end{prop}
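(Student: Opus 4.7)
The plan is to verify the three ingredients needed: $\ciDsetoid$ is a functor on setoids, the unit $\now$ and multiplication $\mu$ of $\ciD$ lift to natural transformations, and the monad axioms. The axioms and naturality will come for free, because they are already pointwise equalities of the underlying functions, hence hold up to any equivalence relation on the codomain. So the real content is the first two points.

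First I would check that $\ciDsetoid$ is well-defined as a functor. On objects, $\wbsim[R]$ is an equivalence relation whenever $R$ is, by Lemma~\ref{lem:wbsim:trans}. For a setoid morphism $[f] : (X,R) \to (Y,S)$ I set $\ciDsetoid([f]) = [\ciD f]$, and I need to show two compatibility statements: that $\ciD f$ respects the relations (if $x \wbsim[R] y$ then $\ciD f(x) \wbsim[S] \ciD f(y)$), and that if $f \sim g$ as setoid maps then $\ciD f \sim \ciD g$. Both reduce to a single guarded-recursive claim: for any relation $R'$ refining the graph $\{(a,b) \mid S(f(a), g(b))\}$, and any $x \wbsimgr[R'] y$, one has $\ciD f(\capp x) \wbsimgr[S] \ciD g(\capp y)$. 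The $\now/\step^n\circ\now$ case is immediate; the $\step/\step$ case unfolds to a $\later^\kappa$-application of the guarded recursion hypothesis. Functoriality itself is then inherited from the set-level functor $\ciD$.

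Next I would lift the unit and multiplication. For $\now$, if $R(x,y)$ then both sides terminate in zero steps to $R$-related values, so $\now(x) \wbsim[R] \now(y)$, yielding a setoid morphism. The main obstacle is lifting $\mu$: I must show that $u \wbsim[\wbsim[R]] v$ in $\ciD\ciD X$ implies $\mu(u) \wbsim[R] \mu(v)$ in $\ciD X$. I would again pass to the guarded variant and prove, by guarded recursion, that $\capp u \wbsimgr[\wbsimgr[R]] \capp v$ implies $\mu^\kappa(\capp u) \wbsimgr[R] \mu^\kappa(\capp v)$. In the $\step/\step$ case, tick-abstracting over the guarded recursion hypothesis gives the conclusion. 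In the asymmetric case where $\capp u = \now^\kappa(u')$ and $\capp v = (\nextstep)^n(\now^\kappa(v'))$ with $u' \wbsimgr[R] v'$, one computes $\mu^\kappa(\capp u) = u'$ and $\mu^\kappa(\capp v) = (\nextstep)^n(v')$; closure of $\wbsimgr[R]$ under adding $\nextstep$ on either side, which is Lemma~\ref{lem:wbsimgr:step}, then delivers the result.

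Finally, naturality of $\now$ and $\mu$, together with the two unit laws and associativity, are all pointwise equalities between the underlying maps on $\ciD$ since $\ciD$ is already a monad on sets; as equalities they are instances of reflexivity of $\wbsim$, so they transfer to the setoid category. The essential difficulty is entirely concentrated in the multiplication lifting: one must check that the two layers of weak bisimilarity in $\ciD\ciD X$ really do collapse, and this is precisely what the guarded-recursive argument combined with Lemma~\ref{lem:wbsimgr:step} achieves.
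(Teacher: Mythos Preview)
Your proposal is correct and matches the paper's approach: the paper reduces the proposition to the single lemma that $x \wbsimgr y$ in $\grD\grD X$ implies $\mu(x) \wbsimgr \mu(y)$, proved by guarded recursion on the cases of $x,y$ using Lemma~\ref{lem:wbsimgr:step} in the asymmetric case, which is exactly the ``main obstacle'' you isolate and handle in the same way. You are simply more explicit than the paper in spelling out the functoriality of $\ciDsetoid$ and the transfer of the monad axioms from the set level, which the paper leaves implicit.
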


Proposition~\ref{prop:monad:setoids} follows directly from the following lemma, which we will also need later.

\begin{lem}
 If $x,y : \grD(\grD(X))$ and $x\wbsimgr y$
 (omitting the subscript) then $\mu(x)\wbsimgr \mu(y)$
\end{lem}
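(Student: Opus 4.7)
The plan is to prove this by guarded recursion together with a case analysis on the shape of $x$ and $y$, using that $\mu$ commutes with $\step^\kappa$ and hence with $\nextstep$. Specifically, from the defining clause $\mu(\step^\kappa z) = \step^\kappa(\tabs\tickA\kappa \mu (\tapp z))$ one obtains the computation $\mu(\nextstep z) = \nextstep(\mu z)$ (using tick irrelevance), and by iteration $\mu((\nextstep)^n(\now^\kappa a)) = (\nextstep)^n(a)$, since $\mu(\now^\kappa a) = a$. These identities are what let us track step counts through $\mu$.

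Assume the guarded recursion hypothesis $\latbind\tickA\kappa (\forall x, y.\ x \wbsimgr y \to \mu(x) \wbsimgr \mu(y))$. I split on the cases defining $x \wbsimgr y$. If $x = \now^\kappa(a)$, then the hypothesis yields $n$ and $a'$ with $y = (\nextstep)^n(\now^\kappa(a'))$ and $a$ related to $a'$ by the subscript relation on $\grD X$. Then $\mu(x) = a$ and $\mu(y) = (\nextstep)^n(a')$, and $a \wbsimgr (\nextstep)^n(a')$ follows by iterated application of Lemma~\ref{lem:wbsimgr:step}. The symmetric case $y = \now^\kappa(b)$ is handled identically. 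In the remaining case, $x = \step^\kappa(x')$ and $y = \step^\kappa(y')$ with $\latbind\tickA\kappa(\tapp{x'} \wbsimgr \tapp{y'})$; here $\mu(x) = \step^\kappa(\tabs\tickA\kappa \mu(\tapp{x'}))$ and similarly for $y$, so the goal unfolds to $\latbind\tickA\kappa (\mu(\tapp{x'}) \wbsimgr \mu(\tapp{y'}))$, which follows from the guarded recursion hypothesis applied under the tick.

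The only step requiring any care is the first case, where one must read off the correct level at which the subscript relation lives: if the outer $\wbsimgr$ on $\grD(\grD X)$ is indexed by a relation $R$ on $\grD X$ of the form $\wbsimgr[R']$, one needs $a \wbsimgr[R'] a'$ to conclude $\mu(x) \wbsimgr[R'] \mu(y)$ after applying Lemma~\ref{lem:wbsimgr:step}. In the version where the subscript is just equality, one directly uses $a = a'$ and the $n$-fold application of the step-closure lemma. I expect this bookkeeping of the inner relation, together with the verification that the equality $\mu((\nextstep)^n(\now^\kappa a')) = (\nextstep)^n(a')$ holds definitionally (or by a short induction on $n$ using tick irrelevance), to be the only non-routine ingredients; the guarded recursion itself is entirely standard.
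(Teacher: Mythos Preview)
Your proof is correct and follows essentially the same approach as the paper: guarded recursion with case analysis on $x$ and $y$, using Lemma~\ref{lem:wbsimgr:step} $n$ times in the $\now^\kappa$ cases and the recursion hypothesis in the $\step^\kappa$/$\step^\kappa$ case. You are in fact more explicit than the paper about the identity $\mu((\nextstep)^n(\now^\kappa a')) = (\nextstep)^n(a')$ and about the bookkeeping of the inner subscript relation, both of which the paper leaves implicit.
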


\begin{proof}
 This is by guarded recursion and cases of $x$ and $y$.
 If $x = \now(x')$ we get $n$ such that $y = (\nextstep)^n (\now(y'))$ and $x' \wbsimgr[R] y'$. We must show
 $x' \wbsimgr[R] (\nextstep)^n(y')$, which is just an $n$-fold application of Lemma~\ref{lem:wbsimgr:step}. The case of $y = \now(y')$ is similar,
 and the case of both $x$, $y$ of the form $\step$ follows by guarded recursion.
\end{proof}

Similarly, any algebraic monad $T$ can be lifted to the category of setoids by defining $T(R)$ to be the smallest equivalence relation relating an equivalence class
$[t(x_1,\dots, x_n)]$ to $[t(y_1,\dots, y_n)]$ if $R(x_i, y_i)$ for all $i$. We write $\Tsetoid$ for this.

Recall that by Proposition \ref{prop:par-eq-preservation},
if $T$ is an algebraic monad given by a theory with no drop equations, then
parallel lifting defines a natural transformation $T\ciD \to \ciD T$ on the category of sets. We show that this map
lifts to a distributive law on the category of setoids.

\begin{thm} \label{thm:dist:law:setoid}
 Let $T$ be the free model monad of algebraic theory $\bb{T} = (\Sigma_\bb{T}, E_\bb{T})$, such that $E_\bb{T}$ contains no drop equations.
Then parallel lifting defines a distributive law of monads $\Tsetoid\ciDsetoid \to \ciDsetoid\Tsetoid$.
\end{thm}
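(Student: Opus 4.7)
The plan is to verify three conditions for parallel lifting viewed as a candidate distributive law of monads on the category of setoids: (i) that it descends to a well-defined natural transformation $\Tsetoid\ciDsetoid \to \ciDsetoid\Tsetoid$, i.e.\ respects the equivalence relations pointwise; (ii) the two unit axioms, which should hold strictly; and (iii) the two multiplication axioms, where the second will only hold up to weak bisimilarity. For each condition I would reduce the work on the coinductive side to a guarded-recursion argument on $\grD$, as is the strategy used throughout the paper.

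For (i), the key lemma is a congruence property of parallel lifting with respect to $\wbsim[R]$: if $d_i \wbsim[R] d_i'$ for $i=1,\dots,n$ and $\nop$ is an $n$-ary operation symbol, then
\[
\parinterpretation{\nop}{\ciD X}(d_1,\dots,d_n)\ \wbsim[TR]\ \parinterpretation{\nop}{\ciD X}(d_1',\dots,d_n').
\]
I would prove this by unfolding $\wbsim$ into $\wbsimgr$ and running a guarded recursion with case analysis on each $d_i$ and $d_i'$. The cases where some arguments are $\now$ and others are $\step$ introduce ``bookkeeping'' steps on one or both sides, which are absorbed by Lemma~\ref{lem:wbsimgr:step}. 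An induction on term structure then extends this from operation symbols to all $T$-terms, which together with naturality of the underlying set-level map from Proposition~\ref{prop:par-eq-preservation} shows that $\distlaw$ is a morphism of setoids, natural in the setoid variable.

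For (ii), the unit axiom $\distlaw \circ T\eta^\ciD = \eta^\ciD T$ holds strictly because $\parinterpretation{\nop}{\ciD X}(\now x_1,\dots,\now x_n) = \now(\nop(x_1,\dots,x_n))$ by the first clause of Definition~\ref{parlift}, and this extends inductively along terms. The other unit axiom $\distlaw \circ \eta^T\ciD = \ciD\eta^T$ holds strictly because $\eta^T$ produces single-variable terms that parallel lifting leaves untouched. For (iii), the first multiplication axiom $\distlaw \circ \mu^T\ciD = \ciD\mu^T \circ \distlaw T \circ T\distlaw$ also holds strictly: parallel lifting is defined inductively on the term structure, and so commutes with $T$-substitution, which is precisely the content of $\mu^T$.

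The main obstacle is the second multiplication axiom $\distlaw \circ T\mu^\ciD = \mu^\ciD T \circ \ciD\distlaw \circ \distlaw T$, which fails strictly by Theorem~\ref{thm:mogelvezzo}. The crucial lemma that I would prove is: for every $n$-ary operation $\nop$ and every $d_1,\dots,d_n : \ciD\ciD X$,
\[
\mu^\ciD\!\left(\parinterpretation{\nop}{\ciD\ciD X}(d_1,\dots,d_n)\right) \ \wbsim\ \parinterpretation{\nop}{\ciD X}(\mu^\ciD d_1,\dots,\mu^\ciD d_n).
\]
The proof is by guarded recursion with case analysis on whether each $d_i$ begins with $\now$ or $\step$. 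The problematic case is when some $d_i = \now(e_i)$ while others are of the form $\step$: the left-hand side commits a $\step$ outside the $\now$-arguments, whereas the right-hand side can commit an additional $\step$ coming from $\mu^\ciD$ unfolding the $\now$-wrapped inner $\ciD$-computation, exactly as exhibited in the counterexample of Theorem~\ref{thm:mogelvezzo}. The discrepancy in any single recursion step is by at most one $\step$, so Lemma~\ref{lem:wbsimgr:step} reduces the equality to the guarded-recursion hypothesis applied to the arguments' successors. Extending this from operation symbols to arbitrary $T$-terms by induction, and combining with (i), yields the weak-bisimilarity form of the second multiplication axiom and hence the desired distributive law on setoids.
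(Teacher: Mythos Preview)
Your overall decomposition matches the paper's: three of the four Beck axioms hold on the nose, naturality on setoids is Lemma~\ref{lem:seq:lifting:setoid}, and the real work is the $\mu^{\ciD}$-axiom, reduced to a per-operation statement and then extended to terms by induction. The target lemma you write down for an operation $\nop$,
\[
\mu^{\ciD}\bigl(\parinterpretation{\nop}{\ciD\ciD X}(d_1,\dots,d_n)\bigr)\ \wbsim\ \parinterpretation{\nop}{\ciD X}(\mu^{\ciD} d_1,\dots,\mu^{\ciD} d_n),
\]
is exactly the paper's target.

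There is, however, a genuine gap in how you propose to prove this per-operation statement. A direct guarded recursion on the statement \emph{with the same tuple $(d_1,\dots,d_n)$ on both sides} does not close. The problematic case is $d_i=\now^\kappa(\step^\kappa(e_i))$ while some other argument is a $\step^\kappa$. After one unfolding, the left-hand side still carries $d_i''=\now^\kappa(\step^\kappa(e_i))$ in position $i$, whereas on the right-hand side one has $\mu(d_i)=\step^\kappa(e_i)$, and after the outer $\step^\kappa$ the $i$th argument becomes $\tapp{e_i}$. Thus the two sides are no longer of the form ``$\mu$ applied to a common tuple'', and your guarded-recursion hypothesis does not apply. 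You might hope to bridge the gap by invoking the congruence from (i) together with Lemma~\ref{lem:wbsimgr:step}, but that would require transitivity of $\wbsimgr$, which fails (as noted just before Lemma~\ref{lem:wbsim:trans}); transitivity is only available for $\wbsim$, i.e.\ \emph{after} clock quantification, and hence cannot be used inside the guarded recursion.

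The paper's fix is to \emph{strengthen the induction hypothesis}: one proves by guarded recursion that for all $x_i,y_i:\grD\grD X$ with $x_i\wbsimgr y_i$,
\[
\parinterpretation{\nop}{\grD X}(\mu(x_1),\dots,\mu(x_n))\ \wbsimgr[T(R)]\ \mu\bigl(\parinterpretation{\nop}{\grD\grD X}(y_1,\dots,y_n)\bigr).
\]
Allowing the two sides to use \emph{different but weakly bisimilar} arguments absorbs exactly the drift described above (in each case one checks $\latbind\tickA\kappa(x_i'\wbsimgr y_i')$ using Lemma~\ref{lem:wbsimgr:step}), so the recursion closes without ever appealing to transitivity at the guarded level. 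Transitivity of $\wbsim$ is then used only once, outside the recursion, to pass from operations to arbitrary terms. If you add this strengthening, the rest of your outline goes through.
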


\begin{rem}
 The free monad on an algebraic theory could alternatively be expressed on the category of setoids
 by taking the set to be the free monad just on operations, introducing the equations of the theory
 into the equivalence relation. In the presence of the axiom of choice this generates a monad equivalent
 to $\Tsetoid$, and we expect that the proof of Theorem~\ref{thm:dist:law:setoid} be adapted to that choice as well.
\end{rem}

Before proving Theorem~\ref{thm:dist:law:setoid}, we first establish that parallel lifting defines a natural transformation.

\begin{lem} \label{lem:seq:lifting:setoid}
 Let $T$ be the free model monad of algebraic theory $\bb{T} = (\Sigma^\bb{T}, E^\bb{T})$, such that $E^\bb{T}$ contains no drop equations.
 Parallel lifting defines a natural transformation $\Tsetoid\ciDsetoid \to \ciDsetoid\Tsetoid$.
\end{lem}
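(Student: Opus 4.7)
The plan is to show first that parallel lifting yields a well-defined map of setoids (respecting the equivalence relations), and then that it is natural with respect to setoid morphisms.

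For the well-definedness, recall that $\zeta_X : T\ciD X \to \ciD TX$ is obtained by extending $\ciD(\eta^T_X) : \ciD X \to \ciD TX$ along the free $T$-algebra inclusion $\ciD X \to T\ciD X$, using the $T$-algebra structure on $\ciD TX$ given by parallel lifting, which is valid by Proposition~\ref{prop:par-eq-preservation} since $E_\bb{T}$ contains no drop equations. Explicitly, $\zeta_X([t(x_1,\ldots,x_n)]) = \parinterpretation{t}{\ciD TX}(\ciD(\eta^T)(x_1),\ldots,\ciD(\eta^T)(x_n))$. Since $T(\wbsim[R])$ is generated by pairs $([t(x_1,\ldots,x_n)], [t(y_1,\ldots,y_n)])$ with $x_i \wbsim[R] y_i$, and since $\eta^T_X$ sends $R$-related pairs to $T(R)$-related ones (taking $t$ to be a variable in the generator) so that $\ciD(\eta^T)$ sends $\wbsim[R]$ into $\wbsim[T(R)]$, the well-definedness reduces to: for every term $t$ and all $z_i \wbsim[T(R)] z_i'$ in $\ciD TX$, we have $\parinterpretation{t}{\ciD TX}(z_1,\ldots,z_n) \wbsim[T(R)] \parinterpretation{t}{\ciD TX}(z_1',\ldots,z_n')$.

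I would prove this reduction by induction on $t$. The variable case is immediate, and $t = \nop(t_1,\ldots,t_m)$ reduces via the induction hypothesis to the key lemma: the parallel lifting of each operation $\nop$ on $\ciD TX$ preserves $\wbsim[T(R)]$ componentwise. This key lemma I would establish by guarded recursion, unfolding $\wbsim$ through $\wbsimgr$ as in Definition~\ref{def:wbisim:ci}, and performing case analysis on whether each $\capp{z_i}$ and $\capp{z_i'}$ is a $\now^\kappa$ or a $\step^\kappa$. When all inputs on both sides are $\now^\kappa$ of $T(R)$-related elements, I would use that $T(R)$ is by construction a congruence for the $T$-operations on $TX$. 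When some inputs step, I would use Lemma~\ref{lem:wbsimgr:step} to absorb any asymmetry in the number of leading steps between the two sides, and then apply the guarded recursion hypothesis to the delayed remainders. For naturality on setoids, given a representative $f : X \to Y$ of a setoid morphism, the naturality square holds on the nose on underlying sets by an easy induction on terms, which a fortiori gives agreement up to $\wbsim[T(S)]$.

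The main obstacle is the step-case of the key lemma. Since weak bisimilarity permits arbitrary finite mismatches in leading steps, but parallel lifting propagates the maximum step count among its arguments, the outputs on the two sides of the relation will in general have wildly different leading step patterns that must be reconciled. I expect to need a helpful auxiliary observation, provable by a separate guarded recursion using tick irrelevance: $\parinterpretation{\nop}{\grD Y}$ is invariant up to $\wbsimgr$ under inserting extra $\nextstep$'s in any of its arguments. With this in hand, the step-cases collapse to a uniform application of Lemma~\ref{lem:wbsimgr:step} followed by the guarded recursion hypothesis.
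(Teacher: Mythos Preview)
Your overall strategy matches the paper's: reduce to the case of a single operation, strengthen the statement to $\wbsimgr$, and prove it by guarded recursion with case analysis on $\now^\kappa$/$\step^\kappa$. The reduction via term induction and the treatment of naturality are also as in the paper.

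However, your case analysis has a gap. You split into ``all inputs on both sides are $\now^\kappa$'' versus ``some inputs step'', and for the latter you plan to apply the guarded recursion hypothesis. But consider the case where all $x_i = \now^\kappa(x'_i)$ while some $y_j$ steps: the $x$-side parallel lifting is $\now^\kappa(\nop(x'_1,\ldots,x'_n))$, which does not step, so there is no ``delayed remainder'' on which to recurse. Your proposed auxiliary lemma (that parallel lifting is $\wbsimgr$-invariant under inserting $\nextstep$'s) would let you strip the steps from the $y$-side, but chaining that with anything else requires transitivity of $\wbsimgr$, which fails---the paper notes this explicitly just before Lemma~\ref{lem:wbsim:trans}. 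Even iterating the auxiliary lemma to remove several $\nextstep$'s already uses transitivity.

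The paper's fix is to handle this asymmetric case directly, without recursion: since $\now^\kappa(x'_i) \wbsimgr y_i$, each $y_i$ must be of the form $(\nextstep)^{m_i}(\now^\kappa(y'_i))$ with $R(x'_i,y'_i)$, and then a direct computation gives $\parinterpretation{\nop}{}(y_1,\ldots,y_n) = (\nextstep)^N(\now^\kappa(\nop(y'_1,\ldots,y'_n)))$ for $N = \max_i m_i$. One then applies Lemma~\ref{lem:wbsimgr:step} $N$ times to the \emph{result}, not to the arguments. The correct three-way split is therefore: all $x_i$ are $\now^\kappa$ (direct, as above); all $y_i$ are $\now^\kappa$ (symmetric); otherwise some $x_i$ steps \emph{and} some $y_j$ steps, so both parallel liftings step and guarded recursion applies, with Lemma~\ref{lem:wbsimgr:step} used per-argument to show that each progressed pair $(x_i'', y_i'')$ is still related under $\later^\kappa$.
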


\begin{proof}
 We show that if $x_i \wbsim[R] y_i$ for $i = 1,\dots, n$, then $\parinterpretation{t}{}(x_1, \dots, x_n) \wbsim[T(R)] \parinterpretation{t}{}(y_1, \dots, y_n)$.
 We do this in the special case of $t$ being an operation from the signature, which then generalises by induction to general $t$. We show the special case by proving the stronger statement that if $x_i \wbsimgr[R] y_i$ for all $i$ then
  \[
  \parinterpretation{\nop}{}(x_1, \ldots, x_n) \wbsimgr[T(R)] \parinterpretation{\nop}{}(y_1, \ldots, y_n)
  \]
  by guarded recursion.
  Suppose first that all $x_1, \ldots, x_n$ are of the form $\now^\kappa (x'_i)$ for $1 \leq i \leq n$, then
  \[
  \parinterpretation{\nop}{}(x_1, \ldots, x_n) = \now^\kappa \nop(x'_1, \ldots, x'_n),
  \]
  and since $x_i \wbsimgr[R] y_i$, there exist $m_i$ such that:
  \[
  y_i = (\nextstep)^{m_i} (\now^\kappa y'_i),
  \]
  with $R(x_i', y_i')$. Let $N = \max_i m_i$. Then:
  \begin{align*}
      \parinterpretation{\nop}{}(y_1, \ldots, y_n)
     & = \;  \parinterpretation{\nop}{}((\nextstep)^{m_1} (\now^\kappa y'_1), \ldots, (\nextstep)^{m_n} (\now^\kappa y'_n)) \\
     & =\; (\nextstep)^{N} (\now^\kappa (\nop (y'_1, \ldots, y'_n))).
  \end{align*}
  and so $\parinterpretation{\nop}{}(x_1, \ldots, x_n) \wbsimgr[R] \parinterpretation{\nop}{}(y_1, \ldots, y_n)$,
  by $N$ applications of Lemma~\ref{lem:wbsimgr:step}.
  The case of all $y_i$ of the form $y_i = \now^\kappa(y_i')$ is symmetric.

  Now suppose some of the $x_i$ are of the form $\step^\kappa(x_i')$ and similarly some $y_i$ are of the form $\step^\kappa(y_i')$.
  Then
\begin{align*}
 \parinterpretation{\nop}{}(x_1, \ldots, x_n)  & =  \step^\kappa(\tabs\tickA\kappa \parinterpretation{\nop}{}(x_1'', \ldots, x_n'')) \\
 \parinterpretation{\nop}{}(y_1, \ldots, y_n)  & =  \step^\kappa(\tabs\tickA\kappa \parinterpretation{\nop}{}(y_1'', \ldots, y_n'')),
\end{align*}
where $x_i'' = x_i$ if $x_i = \step^\kappa(x_i')$ and $x_i'' = \tapp{x_i'}$ if $x_i = \step^\kappa(x_i')$ and likewise
for $y_i''$. In either of the four possible cases we can prove that $\latbind\tickA\kappa (x_i'' \wbsimgr[R] y_i'')$,
using Lemma~\ref{lem:wbsimgr:step} in the two non-symmetric cases.
By guarded recursion we therefore get that:
\[
 \latbind\tickA\kappa (\parinterpretation{\nop}{}(x_1'', \ldots, x_n'') \wbsimgr[T(R)] \parinterpretation{\nop}{}(y_1'', \ldots, y_n'')),
\]
and so $\parinterpretation{\nop}{}(x_1, \ldots, x_n) \wbsimgr[T(R)] \parinterpretation{\nop}{}(y_1, \ldots, y_n)$.
\end{proof}

\begin{proof}[Proof of Theorem~\ref{thm:dist:law:setoid}]
 We must prove that the four laws of Definition~\ref{def:distlaw} hold. Three of these can be easily
 seen to hold already on the level of representatives. It remains to show that the natural transformation
 commutes with the multiplication $\mu$ of $\ciD$. This boils down to showing that for any term $t$
 with $n$ free variables and any vector $x_1,\ldots, x_n$ of elements in $\ciD\ciD TX$
 the following holds:
 \[
   \parinterpretation{t}{\ciD X}(\mu(x_1), \ldots, \mu(x_n)) \wbsim[T(R)] \mu(\parinterpretation{t}{\ciD \ciD X}(x_1, \ldots, x_n)),
 \]
 We prove this in the case of $t=\nop$ for any n-ary operation $\nop$. 
 The general case then follows by induction using transitivity of $\wbsim[T(R)]$ as follows 
\begin{align*}
 \parinterpretation{\nop}{\ciD X}(\parinterpretation{t_1}{\ciD X}(\mu(\vec x)), \ldots, \parinterpretation{t_n}{\ciD X}(\mu(\vec x))) 
 & \wbsim[T(R)] \parinterpretation{\nop}{\ciD X}(\mu(\parinterpretation{t_1}{\ciD \ciD X}(\vec x)), \ldots, \mu(\parinterpretation{t_n}{\ciD \ciD X}(\vec x))) \\
 & \wbsim[T(R)] \mu(\parinterpretation{\nop}{\ciD \ciD X}(\parinterpretation{t_1}{\ciD \ciD X}(\vec x), \ldots, \parinterpretation{t_n}{\ciD \ciD X}(\vec x))
\end{align*}
 where $\mu(x_1,\dots, x_m) = (\mu(x_1), \dots, \mu(x_m))$.

 Because we need to use guarded recursion we prove the statement for $t=\nop$ in the 
 case of the guarded delay monad. More precisely, we prove the following more general statement: Suppose 
 $x_i \wbsimgr y_i$ for all $i$ (where $\wbsimgr$ is the relation on $\grD\grD X$),  then
  \[
  \parinterpretation{\nop}{\grD X}(\mu(x_1), \ldots, \mu(x_n)) \wbsimgr[T(R)] \mu(\parinterpretation{\nop}{\grD \grD X}(y_1, \ldots, y_n)),
  \]
 for any operation $\nop$ in the signature.

 We prove this by guarded recursion. Suppose first that all $x_i$ are of the form $\now^\kappa(x_i')$. Then:
\begin{align*}
 \parinterpretation{\nop}{\grD X}(\mu(x_1), \ldots, \mu(x_n))
 & = \parinterpretation{\nop}{\grD X}(x_1', \ldots, x_n') \\
 & = \mu(\parinterpretation{\nop}{\grD\grD X}(x_1, \ldots, x_n)).
\end{align*}
The proof now follows from operations and $\mu$ preserving weak bisimilarity. If all $y_i$ are
of the form $\now^\kappa(y_i')$ then:
\begin{align*}
 \mu(\parinterpretation{\nop}{\grD\grD X}(y_1, \ldots, y_n)) & = \parinterpretation{\nop}{\grD X}(y_1', \ldots, y_n') \\
 & = \parinterpretation{\nop}{\grD X}(\mu(y_1), \ldots, \mu(y_n)).
\end{align*}
and so again the proof follows from $\mu$ and operations preserving weak bisimilarity.

In the remaining cases
\begin{align*}
 \parinterpretation{\nop}{\grD X} (\mu(x_1) , \ldots, \mu(x_n))
   & = \step^\kappa(\tabs\tickA\kappa (\parinterpretation{\nop}{\grD X}(\mu(x_1'), \ldots, \mu(x_n')))) \\
 \mu(\parinterpretation{\nop}{\grD\grD X}  (y_1, \ldots, y_n))
  & = \step^\kappa(\tabs\tickA\kappa (\mu(\parinterpretation{\nop}{\grD\grD X}(y_1', \ldots, y_n'))),
\end{align*}
where
\[
x_i' =
\begin{cases}
 x_i & \text{if } x_i = \now^\kappa(\now^\kappa(x_i'')) \\
 \tapp{x_i''} & \text{if } x_i = \step^\kappa(x_i'') \\
 \now^\kappa(\tapp{x_i''}) & \text{if } x_i = \now^\kappa(\step^\kappa(x_i'')),
\end{cases}
\]
and
\[
y_i' =
\begin{cases}
 y_i & \text{if } y_i = \now^\kappa(y_i'') \\
 \tapp{y_i''} & \text{if } y_i = \step^\kappa(y_i'').
\end{cases}
\]
It remains to show that in all these cases $\latbind\tickA\kappa(x_i' \wbsimgr y_i')$. This is done
using guarded recursion as follows.
\begin{itemize}
  \item If $x_i = \now^\kappa(\now^\kappa(x_i''))$ and $y_i = \now^\kappa(y_i'')$, then $x'_i \wbsimgr y'_i$  by assumption.
  \item If $x'_i = \now^\kappa(\now^\kappa(x_i''))$ and $y_i = \step^\kappa(y_i'')$ then
  the result follows from Lemma~\ref{lem:wbsimgr:step}.
  \item The case where $x_i = \step^\kappa(x_i'')$ and $y_i = \now^\kappa(y_i'')$ also follows from Lemma~\ref{lem:wbsimgr:step}.
  \item In the case of $x_i = \step^\kappa(x_i'')$ and $y_i = \step^\kappa(y_i'')$ it follows immediately from the definition of $x_i \wbsimgr y_i$ that $\latbind\tickA\kappa(x_i' \wbsimgr y_i')$.
  \item If $x_i = \now^\kappa(\step^\kappa(x_i''))$ and $y_i = \now^\kappa(y_i'')$, then by definition of $x_i \wbsimgr y_i$ we get $\step^\kappa(x_i'') \wbsimgr y_i''$. But then also
  $\latbind\tickA\kappa(\tapp{x_i''} \wbsimgr y_i'')$ (Lemma~\ref{lem:wbsimgr:step}) and therefore $\latbind\tickA\kappa(\now^\kappa(\tapp{x_i''}) \wbsimgr \now^\kappa(y_i''))$.
  \item If $x_i = \now^\kappa(\step^\kappa(x_i''))$ and $y_i = \step^\kappa(y_i'')$, then it follows from $x_i \wbsimgr y_i$ that there is an $n : \bb{N}$ and a $y_i'''$ such that $y_i = (\nextstep)^n \now^\kappa (y''')$, and $\step^\kappa(x_i'') \wbsimgr y_i'''$, which implies that $\latbind\tickA\kappa(\tapp{x_i''} \wbsimgr y_i''')$.
      Since $y_i = \step^\kappa(y_i'')$, we also know that $\tapp{y_i''} = (\nextstep)^{n-1} \now^\kappa (y''')$. Hence by definition
      \[
      \latbind\tickA\kappa(\now^\kappa(\tapp{x_i''}) \wbsimgr \tapp{y_i''})
      \]
      as required. \qedhere
\end{itemize}
\end{proof}

\section{Related Work}

M{\o}gelberg and Vezzosi~\cite{mogelberg2021two} study two combinations of the guarded delay monad $\grD$ with
the finite powerset monad $\Pfin$ expressed as a HIT in CCTT. They use these to show that applicative simulation is
a congruence for the untyped lambda calculus with finite non-determinism using denotational techniques. One
combination is the sum $\monadsum \Pfin\grD$, which is used for the case of may-convergence, and the other is
the composite $\grD\Pfin$ equipped with the parallel lifting, which is used for must-convergence. They observe that only the former is a monad. In this paper, we not only provide a more general study of such combinations, but also
suggest a way to remedy the situation in the latter case by considering weak bisimilarity.

Weak bisimilarity for the coinductive delay monad was first defined by Capretta~\cite{capretta2005general}.
M{\o}gelberg and Paviotti~\cite{Paviotti:FPC:journal} show that their embedding of FPC in guarded dependent type theory
respects weak bisimilarity and use that to prove an adequacy theorem up to weak bisimilarity.

Chapman et al.~\cite{quotientingDelay} observe that quotienting the coinductive delay monad by weak bisimilarity appears to not
yield a monad unless countable choice is assumed. Altenkirch et al.~\cite{altenkirch2017partiality} propose a solution to this problem by
constructing the quotient and the weak bisimilarity relation simultaneously, as a higher inductive-inductive type.
Chapman et al. themselves suggest a different solution, constructing the quotient as the
free $\omega$-cpo using an ordinary HIT. These
quotients have not (to the best of our knowledge) been studied in combination with other effects.

Interaction trees~\cite{Interaction:trees} are essentially monads of the form
$\forall\kappa . (\monadsum T{\grD}) (-)$ for $T$ an algebraic monad
generated by operations with no equations.
Much work has gone into building libraries for working with these up to
weak bisimilarity in Coq, and these allow for interaction trees to be used
for program verification. To our knowledge,
versions of interaction trees with equations between terms have not been considered.

As mentioned in the introduction, the guarded recursive delay monad has two benefits over the coinductive one:
Firstly, it has a fixed point operator of the type (rather than an iteration operator), which means that it allows for embedding languages
with recursion directly in type theory. In the coinductive case, one must either use some encoding of recursion using
the iteration operator, or prove that all constructions used are continuous. We believe this is a considerable burden
for higher order functions. The second advantage is that guarded recursion allows for also advanced notions of state
to be encoded, as shown recently by Sterling et al.~\cite{sterling-gratzer-birkedal:2022}. Neither the interaction trees nor
the quotiented delay monads appear to have these benefits.

\subsection*{Related Work on Monad Compositions}
The field of monad compositions in general has attracted quite a bit of attention lately. After Plotkin proved that there is no distributive law combining probability and non-determinism~\cite{VaraccaWinskel2006}, Klin and Salamanca~\cite{KlinSalamanca2018} studied impossible distributions of the powerset monad over itself, while Zwart and Marsden provided a general study on what makes distributive laws fail~\cite{Zwart2019}. Meanwhile, the initial study of monad compositions by Manes and Mulry~\cite{ManesMulry2007, ManesMulry2008} was continued by Parlant et al \cite{Dahlqvist2017, parlant-thesis, parlant2020}. In both the positive and the negative theorems on distributive laws in these papers, certain classes of equations were identified as causes for making or breaking the monad composition. Idempotence, duplication, and dropping variables came out as especially noteworthy types of equations, which the findings in this paper confirm.

Our study of the delay monad provides an interesting extension on the previous works, because of its non-standard algebraic structure given by delay algebras, and the fact that the delay monad is neither affine nor relevant, which are the main properties studied by Parlant et al.

\section{Conclusion and Future Work}

We have studied how both the guarded recursive and the coinductive version of the delay monad combine with other monads. After studying some specific examples and free combinations, we looked more generally at possible distributive laws of $T\grD \rightarrow \grD T$. We found two natural candidates for such distributive laws, induced by \emph{parallel} and \emph{sequential} lifting of operations on $T$. We showed that:
\begin{itemize}
  \item Sequential lifting provides a distributive law for monads presented by theories with balanced equations.
  \item There is no distributive law possible for monads with a binary operation that is commutative and idempotent over $\grD$, but this does not rule out a distributive law of such monads over $\ciD$.
  \item Parallel lifting does not define a distributive law, but it does define one \emph{up to weak bisimilarity}, for monads presented by theories with non-drop equations.
\end{itemize}

It is unfortunate that weak bisimilarity requires working with setoids, due to the quotient of $\ciD$ up to weak bisimilarity not being a monad~\cite{quotientingDelay}.
It is not clear how to adapt the solutions to this problem mentioned above~\cite{quotientingDelay,altenkirch2017partiality} to the guarded recursive setting.

This paper only considers the case of finite arity operations (except for state, which can be of any arity). Distributive laws for countable arity operations
such as countable non-deterministic choice are more difficult. In those cases sequential lifting seems an unnatural choice, not only because it does not interact
well with idempotency, but also because it introduces divergence even in the cases where there is an upper limit to the number of steps taken by the arguments.
Extending our parallel lifting operation to the countable case requires deciding whether all the countably many input operations are values, which is not possible
in type theory.

The results presented in this paper are formulated and proven in CCTT. It is natural to ask whether the results proven for the coinductive
delay monad $\ciD$ also hold for $\ciD$ considered as a monad on the category $\Set$ of sets. For some of the results proven in this paper
(Proposition~\ref{prop:nodistpowerset} and Example~\ref{ex:dist:law:idempotent})
both the statements and proofs can be read in $\Set$. These results can therefore easily be seen to hold in this setting. In many other cases,
our constructions use guarded recursion (e.g. the definitions of parallel and sequential lifting of operators). To lift these results to $\Set$,
one would need to redo the constructions and argue for their productivity. However, we believe that using guarded recursion is the natural way
to work with coinductive types and proofs. Another approach could therefore be to use guarded recursion as a language to reason about $\Set$.
This should be possible because the universe used to model clock irrelevant types in the extensional model of Clocked Type Theory~\cite{clottmodel}
classifies a category equivalent to $\Set$. We leave this as a direction for future research.

\bibliographystyle{alphaurl}
\bibliography{paper}

% !TEX root = paper.tex

\newcommand{\arity}{\sym{ar}}
\newcommand{\Sig}{\Sigma}

\newcommand{\Eq}{\sym{E}}
\newcommand{\Term}{\sym{Tm}}
\newcommand{\Fin}[1]{\overline{#1}}
\newcommand{\rhs}{\sym{rhs}}
\newcommand{\lhs}{\sym{lhs}}

\newcommand{\conVar}{\sym{var}}
\newcommand{\conTerm}{\sym{tm}}
\newcommand{\conFlat}{\sym{flt}}
\newcommand{\conEq}{\sym{eq}}
\newcommand{\trunc}{\sym{trunc}}
\newcommand{\isSet}{\sym{isSet}}
\newcommand{\summ}{\sym{sum}(m)}
\newcommand{\summk}{\sym{sum}(\capp m)}
\newcommand{\ink}{\sym{in}_{m,k}}
\newcommand{\inkkappa}{\sym{in}_{\capp m,k}}
\newcommand{\inkkk}{\sym{in}_{\capp m,\capp k}}

\newcommand{\uVar}{u_\sym{var}}
\newcommand{\uTerm}{u_\sym{tm}}
\newcommand{\uFlat}{u_\sym{flt}}
\newcommand{\uEq}{u_\sym{eq}} 

\newcommand{\GAP}{\hspace{1cm}}

\newcommand{\I}{\mathbb{I}}

\newpage

\appendix

\section{Encoding algebraic theories in CCTT}\label{subsec:encoding}
\label{app:encoding:alg:th}

This appendix shows how we encode the notion of algebraic theory
in CCTT, construct free functors as HITs, and proves Theorem~\ref{prop:free:monad:cirr}.

The type of algebraic theories can be defined as an element of a large universe written informally
as follows.

\begin{defi}
 An algebraic theory comprises
 \begin{itemize}
\item A signature type indexed by arities $\Sig : \N \to \USet$
\item A type of equations indexed by the number of free variables $\Eq : \N \to \USet$
\item Two maps giving the left hand side and right hand side of an equation respectively
\begin{align*}
 \lhs, \rhs & : (n : \N) \to \Eq(n) \to \Term(n)
\end{align*}
\end{itemize}
\end{defi}

Where $\Term(n)$ is the set of terms in context $\Fin{n}$, the type of numbers $0, \dots, n-1$. Both of these can
be defined in the standard type theoretic way. Recall that $\USet$ is the universe of small clock-irrelevant hsets.
Both $\Term(n)$ and $\Fin{n}$ are clock irrelevant because $\Sig(n)$ is~\cite[Theorem~5.3]{CubicalCloTT}.

The free monad on an algebraic signature $A = (\Sig, \arity, \Eq, \lhs,\rhs)$ is defined to map $X : \USet$
to a higher inductive type. The natural definition for this would be to have a constructor for each operation in $\Sig$
and a path for each element of $\Eq$, plus constructors and paths to set-truncate $T(X)$.
However, this would require recursion in the boundary condition of the equalities
to interpret the terms given by $\lhs$ and $\rhs$, and this is not allowed in the rule format for HITs in CCTT. Instead we
use a constructor for each term of the theory and add equations to enforce flattening of terms, as expressed in the following
definition which follows the format of HITs in CCTT~\cite{CubicalCloTT}.

\begin{defi}
 The free monad given by the signature $A$ applied to $X : \USet$ is the HIT $T(X)$ given by the following constructors
\begin{align*}
 \conVar & : X \to T(X) \\
 \conTerm & : (n : \N) \to (\tau : \Term(n)) \to (\Fin{n} \to T(X)) \to T(X) \\
 \conFlat & : (n, m : \N) \to (\sigma : \Sig(n)) 
\to (\tau : \Fin{n} \to  \Term(m)) 
 \to (m \to T(X)) \to \I \to T(X) \\
\conEq & : (n : \N) \to (e : \Eq(n)) \to (\Fin{n} \to T(X)) \to \I \to T(X)
\end{align*}
These satisfy the following boundary conditions (writing $\jeq$ for judgemental equality)
\begin{align*}
 \conFlat\, n \, m\, \sigma\,\tau\,\rho\, 0 & \jeq \conTerm\,m \, (\sigma[\tau]) \,\rho \\
 \conFlat\, n\, m \, \sigma\, \tau\,\rho\, 1 & \jeq \conTerm\, n \, \sigma\,(\lambda k : \Fin{n}. \conTerm\,m\, (\tau \,k) \, \rho) \\
 \conEq\, n \, e \, \rho\, 0 & \jeq \conTerm \, n \, (\lhs\,n \,e)\, \rho \\
 \conEq\, n \, e \, \rho\, 1 & \jeq \conTerm \, n \, (\rhs\,n \,e)\, \rho
\end{align*}
where $\sigma[\tau]$ is the result of substituting the elements of $\tau$ for the free variables of $\sigma$.
To these should be added standard constructors and equalities for set truncation~\cite{CubicalCloTT},
which we omit.
\end{defi}

\subsection{The free monad commutes with clock quantification}

\begin{figure*}[b]
 \label{fig:induction:under:clocks}
\hspace{-.5cm}
\begin{minipage}{\linewidth}
\begin{center}
\textbf{Hypotheses}
\begin{description}
\item[A type] {\small $\istype{\Gamma, x : \forall\kappa. T(\capp X)}{B(x)}$}
\item[Terms]
{\small
\begin{align*}
 \hastype{\Gamma, x : \forall\kappa. (\capp X)&}{\uVar}{B(\lambda\kappa. \conVar (\capp x))} \\
 \hastype{\Gamma, n : \forall\kappa. \N, \tau : \forall\kappa . \Term(\capp n), \\ \rho : \forall\kappa . (\Fin{\capp n}\to T(\capp X)),  \\
 \rho' : (k : \forall\kappa . \Fin{\capp n})\to B(\lambda\kappa . \capp\rho (\capp k)) &}{\uTerm}{B(\lambda\kappa. \conTerm (\capp n)(\capp \tau)(\capp\rho))} \\
 \hastype{\Gamma, n, m : \forall\kappa. \N, \sigma : \forall\kappa . \Sig(\capp n), \\
 \tau : \forall\kappa . (\Fin{\capp n} \to \Term(\capp m)), \\ \rho : \forall\kappa . (\capp m \to T(\capp X)), \\
 \rho' : (k : \forall\kappa . \capp m) \to B(\lambda\kappa . \capp\rho(\capp k)), i : \I
 &}{\uFlat}{B(\lambda\kappa. \conFlat (\capp n)(\capp m)(\capp \sigma)(\capp\tau)(\capp \rho)\, i)} \\
 \hastype{\Gamma, n : \forall\kappa. \N, e : \forall\kappa . \Eq(\capp n), \\ \rho : \forall\kappa . (\Fin{\capp n}\to T(\capp X)),  \\
 \rho' : (k : \forall\kappa . \Fin{\capp n})\to B(\lambda\kappa . \capp\rho (\capp k)), i : \I &}{\uEq}{B(\lambda\kappa. \conEq (\capp n)(\capp e)(\capp \rho) \,i)} 
\end{align*}
}

\item[Satisfying]
{\small
\begin{align*}
 \uFlat(n, \sigma, m, \tau, \rho, \rho', 0) & \jeq \uTerm(\lambda\kappa. (\capp m), \lambda\kappa . (\capp \sigma)[\capp \tau], \rho, \rho') \\
 \uFlat(n, \sigma, m, \tau, \rho, \rho', 1) & \jeq \uTerm(n, \sigma, \lambda\kappa . \lambda k . \conTerm\,(\capp m)\,(\capp \tau\, k)\,(\capp \rho) ,% \\
\lambda k. \uTerm((\capp m), \lambda\kappa . \capp \tau(\capp k), \rho, \rho') \\
 \uEq(n, e, \rho, \rho', 0) & \jeq \uTerm(n, (\lambda\kappa . \lhs(\capp e)), \rho, \rho') \\
 \uEq(n, e, \rho, \rho', 1) & \jeq \uTerm(n, (\lambda\kappa . \rhs(\capp e)), \rho, \rho')
\end{align*}
}
\end{description}
\textbf{Conclusion}
\begin{description}
\item[A section] {\small$\hastype{\Gamma, x : \forall\kappa. T(\capp X)}{t(x)}{B(x)}$}
\item[Satisfying]
{\small
\begin{align*}
 t(\lambda\kappa. \conVar (\capp x)) & \jeq \uVar(x) \\
 t(\lambda\kappa. \conTerm (\capp n)(\capp \tau)(\capp\rho)) & \jeq \uTerm(n, \tau, \rho, \lambda k. t(\lambda\kappa . \capp\rho\, (\capp k))) \\
 t(\lambda\kappa. \conFlat (\capp n)(\capp m)(\capp \sigma)(\capp\tau)(\capp \rho)\, i)) & \jeq \uFlat(n, m, \sigma, \tau, \rho, \lambda k . t(\lambda\kappa . \capp \rho(\capp k)), i) \\
 t(\lambda\kappa. \conEq (\capp n)(\capp e)(\capp \rho) \,i) & \jeq \uEq(n, e, \rho, \lambda k. t(\lambda\kappa . \capp \rho(\capp k)), i)
\end{align*}
}
\end{description}
\end{center}
\end{minipage}
 \caption{The principle of induction under clocks for $T(X)$. We omit the hypotheses concerning set truncation, which are standard~\cite{CubicalCloTT}. We assume given
 an algebraic signature and an $X : \forall\kappa . \USet$ both in context $\Gamma$. Note that the equalities are judgemental (denoted $\jeq$)}
\end{figure*}

We now prove Theorem~\ref{prop:free:monad:cirr}. For this we need the principle of induction under clocks~\cite{CubicalCloTT} which in
the case of $T(X)$ specialises to the principle described in Figure~2.

First recall that the canonical map $f : T(\forall\kappa . \capp X) \to \forall\kappa . T(\capp X)$ defined using functoriality of $T$ satisfies the clauses
\begin{align*}
 f(\conVar \, x) & = \lambda\kappa . \conVar(\capp x) \\
 f(\conTerm \, n\, \tau\, \rho ) & = \lambda\kappa . (\conTerm \, n \, \tau \, (\lambda k.\capp{f(\rho(k))})) \\
 f(\conFlat \, n\,m \,\sigma\, \tau\, \rho\, i) & = \lambda\kappa . (\conFlat \, n \, m\,\sigma \, \tau \, (\lambda k.\capp{f(\rho(k))}) \,i) \\
 f(\conEq \, n\, e \, \rho\, i) & = \lambda\kappa . (\conEq \, n \, e \, (\lambda k.\capp{f(\rho(k))}) \,i)
\end{align*}
(omitting the clauses for set truncation, which are standard~\cite{CubicalCloTT}). The inverse to $f$ can be constructed using the principle of induction under clocks.
We write the definition using clauses as follows
\begin{align*}
 g(\lambda\kappa. \conVar (\capp x)) & \jeq \conVar(x) \\
 g(\lambda\kappa. \conTerm (\capp n)(\capp \tau)(\capp\rho)) & \jeq \conTerm(\cappc n, \cappc \tau, \lambda k. g(\lambda\kappa . \capp\rho\, k)) \\
 g(\lambda\kappa. \conFlat (\capp n)(\capp m)(\capp \sigma)(\capp\tau)(\capp \rho)\, i)) 
  &\jeq \conFlat(n [\kappa_0], m [\kappa_0], \sigma [\kappa_0], \tau [\kappa_0], \lambda k . g(\lambda\kappa . \capp \rho\, k), i) \\
 g(\lambda\kappa. \conEq (\capp n)(\capp e)(\capp \rho) \,i)  & \jeq \conEq(\cappc n, \cappc e, \lambda k. g(\lambda\kappa . \capp \rho\,k), i)
\end{align*}
Here $\clockc$ is the clock constant, which we will assume. In the cases of the recursive calls, $k$ will have type $\Fin{\cappc n}$ and $\capp\rho$ expects
an input of type $\Fin{\capp n}$. These types are equivalent, since $\N$ is clock-irrelevant. We leave this equivalence implicit.

We must show that $g$ satisfies the boundary condition, which we just do in a single case. The rest are similar.
\begin{align*}
 g(\lambda\kappa. \conEq (\capp n)(\capp e)(\capp \rho) \,0) 
 & \jeq \conEq(\cappc n, \cappc e, \lambda k. g(\lambda\kappa . \capp \rho\,k), 0) \\
 & \jeq \conTerm(\cappc n, \lhs(\cappc e), \lambda k. g(\lambda\kappa . \capp\rho\, k)) \\
 & \jeq g(\lambda\kappa. \conTerm (\capp n)(\lhs(\capp e))(\capp\rho))
\end{align*}
Again, we omit the standard cases of constructors and boundary cases for set truncation~\cite{CubicalCloTT}.

It is easy to show that $g\circ f$ is the identity by induction. For example,
\begin{align*}
 g(f(\conTerm \, n\, \tau\, \rho)) & \jeq g(\lambda\kappa . (\conTerm \, n \, \tau \, (\lambda k.\capp{f(\rho(k))}))) \\
 & \jeq g(\lambda\kappa . (\conTerm \, n \, \tau \, (\capp{(\lambda\kappa'.\lambda k.\capp[\kappa']{f(\rho(k))})}))) \\
 & \jeq \conTerm(n, \tau, \lambda k. g(\lambda\kappa . \capp{f(\rho(k))})) \\
 & \jeq \conTerm(n, \tau, \lambda k. g(f(\rho(k)))) \\
 & \jeq \conTerm(n, \tau, \rho)
\end{align*}
using induction in the last equality.

Proving $f \circ g$ the identity requires another use of induction under clocks. We just show one case.
\begin{align*}
 f(g&(\lambda\kappa. \conTerm (\capp n)(\capp \tau)(\capp\rho))) \\
 & \jeq f(\conTerm(\cappc n, \cappc \tau, \lambda k. g(\lambda\kappa . \capp\rho\, k))) \\
 & \jeq \lambda\kappa . \conTerm(\cappc n, \cappc \tau, \lambda k. \capp{f(g(\lambda\kappa' . \capp[\kappa']\rho\, k))}) \\
 & = \lambda\kappa . \conTerm(\cappc n, \cappc \tau, \lambda k. \capp{(\lambda\kappa' . \capp[\kappa']\rho\, k)}) \\
 & \jeq \lambda\kappa . \conTerm(\capp n, \capp \tau, \capp\rho)
\end{align*}
using the induction hypothesis and, in the last step, clock irrelevance of $\N$ and $\Term{(\capp n)}$.

We must show that this definition satisfies the boundary conditions. Since the target type is a path type of $\forall\kappa . T(\capp X)$, which is an hset, these
conditions hold up to path equality. To force them to hold up to judgemental equality, we can change the term using a composition, similarly to
the proof of Theorem~5.3 of~\cite{CubicalCloTT}.

\end{document}